\newcounter{ALC@tempcntr}
\theoremstyle{plain}
\newtheorem{theorem}{Theorem}[section]
\newtheorem{proposition}{Proposition}[section]
\newtheorem{lemma}{Lemma}[section]
\newtheorem{corollary}{Corollary}[section]
\newtheorem{claim}{Claim}[section]
\newtheorem{definition}{Definition}
\theoremstyle{definition}
\newtheorem{example}{Example}
\newtheorem{construction}{Construction}
\newtheorem{remark}{Remark}
\theoremstyle{remark}
\newcommand{\beq}{\begin{eqnarray}}
\newcommand{\eeq}{\end{eqnarray}}
\newcommand{\field}[1]{\mathbb{#1}}
\newcommand{\F}{\field{F}}
\newcommand{\B}{\field{B}}
\newcommand{\Lf}{\field{L}}
\newcommand{\cC}{{\cal C}}
\newfont{\bbb}{msbm10 scaled 500}
\newfont{\bb}{msbm10 scaled 1100}
\newcommand{\G}{\mbox{\bb G}}
\newcommand{\cv}{{\bf c}}
\newcommand{\ev}{{\bf e}}
\newcommand{\xv}{{\bf x}}
\newcommand{\Am}{{\bf A}}
\newcommand{\Em}{{\bf E}}
\newcommand{\Hm}{{\bf H}}
\newcommand{\Id}{{\bf I}}
\newcommand{\Pm}{{\bf P}}
\newcommand{\Um}{{\bf U}}
\newcommand{\Ac}{{\cal A}}
\newcommand{\Cc}{{\cal C}}
\newcommand{\Dc}{{\cal D}}
\newcommand{\Hc}{{\cal H}}
\newcommand{\Kc}{{\cal K}}
\newcommand{\Nc}{{\cal N}}
\newcommand{\Pc}{{\cal P}}
\newcommand{\Rc}{{\cal R}}
\newcommand{\Sc}{{\cal S}}
\newcommand{\Tc}{{\cal T}}
\newcommand{\Vc}{{\cal V}}
\DeclareMathOperator{\rank}{rank}
\DeclareMathOperator{\rk}{rank}
\DeclareMathOperator{\Diag}{Diag}
\newcommand{\remove}[1]{}
\newcommand{\E}{\mathrm{E}}
\newcommand{\dist}{d_\mathrm{H}}
\theoremstyle{definition}
\theoremstyle{remark}
\newcommand{\latexe}{{\LaTeX\kern.125em2%
                      \lower.5ex\hbox{$\varepsilon$}}}
\chardef\bslash=`\\	
\def\square{\RIfM@\bgroup\else$\bgroup\aftergroup$\fi
\vcenter{\hrule\hbox{\vrule\@height.6em\kern.6em\vrule}
\hrule}\egroup}\makeatother\makeindex
\definecolor{OXO-emph}{RGB}{153,0,0}
\newcommand\ceilb[1]{\left\lceil #1 \right\rceil}
\newcommand\ceilbb[1]{\bigl\lceil #1 \bigr\rceil}
\newcommand\floorb[1]{\left\lfloor #1 \right\rfloor}
\newcommand\floorbb[1]{\bigl\lfloor #1 \bigr\rfloor}
\DeclareMathAlphabet{\mathpzc}{OT1}{pzc}{m}{it}
\newcolumntype{?}{!{\vrule width 1.5pt}}
\setlist[itemize]{leftmargin=0.2in}
\setlist[enumerate]{leftmargin=0.2in}
\begin{document}
\sloppy


\title{MDS Code Constructions with Small Sub-packetization and \\ Near-optimal Repair Bandwidth}
\author{
\IEEEauthorblockN{Ankit Singh Rawat, Itzhak Tamo~\IEEEmembership{Member,~IEEE,} Venkatesan Guruswami, and Klim Efremenko}%
\thanks{This research was supported in part by NSF grants CCF-0963975, CCF-1422045, and CCF-1563742. This paper was presented in parts at the 2016 Information Theory and Applications Workshop~\cite{TK16}, 2017 ACM-SIAM Symposium on Discrete Algorithms~\cite{GR17} and 2017 IEEE International Symposium on Information Theory~\cite{RTGE17}.}
\thanks{A.~S.~Rawat is with the Research Laboratory of Electronics, MIT, Cambridge, MA 02139, USA (e-mail: asrawat@mit.edu). Part of this work was done when the author was with the Computer Science Department, Carnegie Mellon University, Pittsburgh, PA 15213, USA.}
\thanks{I.~Tamo is with the Department of Electrical Engineering - Systems, Tel Aviv University, Ramat Aviv 69978, Israel (e-mail: tamo@post.tau.ac.il).}
\thanks{V.~Guruswami is with the Computer Science Department, Carnegie Mellon University, Pittsburgh, PA 15213, USA (e-mail: venkatg@cs.cmu.edu).}
\thanks{K.~Efremenko is with the Department of Computer Science, Tel Aviv University, Ramat Aviv 69978, Israel (e-mail: klimefrem@gmail.com).}
}

\maketitle

\allowdisplaybreaks
\maketitle

\begin{abstract}
A code $\mathcal{C} \subseteq \mathbb{F}^n$ is a collection of $M$ codewords where $n$ elements (from the finite field $\mathbb{F}$) in each of the codewords are referred to as code blocks. Assuming that $\mathbb{F}$ is a degree $\ell$ extension of a smaller field $\B$, the code blocks are treated as $\ell$-length vectors over the base field $\mathbb{B}$. Equivalently, the code is said to have the sub-packetization level $\ell$. This paper addresses the problem of constructing MDS codes that enable exact reconstruction (repair) of each code block by downloading small amount of information from the remaining code blocks. The total amount of information flow from the remaining code blocks during this reconstruction process is referred to as repair-bandwidth of the underlying code. The problem of enabling exact reconstruction of a code block with small repair bandwidth naturally arises in the context of distributed storage systems as the node repair problem~\cite{dimakis}. The constructions of exact-repairable MDS codes with optimal repair-bandwidth require working with large sub-packetization levels, which restricts their employment in practice.

This paper presents constructions for MDS codes that simultaneously provide both small repair bandwidth and small sub-packetization level. In particular, this paper presents two general approaches to construct exact-repairable MDS codes that aim at significantly reducing the required sub-packetization level at the cost of slightly sub-optimal repair bandwidth. The first approach provides MDS codes that have repair bandwidth at most twice the optimal repair-bandwidth. Additionally, these codes also have the smallest possible sub-packetization level $\ell = O(r)$, where $r$ denotes the number of parity blocks. This approach is then generalized to design codes that have their repair bandwidth approaching the optimal repair-bandwidth at the cost of graceful increment in the required sub-packetization level. The second approach provides ways to transform an MDS code with optimal repair-bandwidth and large sub-packetization level into a longer MDS code with small sub-packetization level and near-optimal repair bandwidth. For a given number of parity blocks, the codes constructed using this approach have their sub-packetization level scaling logarithmically with the code length. In addition, the obtained codes require field size only linear in the code length and ensure load balancing among the intact code blocks in terms of the information downloaded from these blocks during the exact reconstruction of a code block.
\end{abstract}



\section{Introduction.}
\label{sec:intro}

Maximum distance separable (MDS) codes are considered to be an attractive solution for information storage as they operate at the optimal storage vs. reliability trade-off given by the Singleton bound~\cite{MacSlo}. For a given amount of information to be stored and available storage space, the MDS codes can tolerate the maximum number of worst case failures without losing the stored information. However, the applicability of the MDS codes in modern storage systems also depends on their ability to efficiently reconstruct parts of a codeword from the rest of the codeword. Consider a distributed storage system which employs an MDS code to store information over a network of storage nodes such that each storage node stores a small part of a codeword from the MDS code. Exact reconstruction (repair) of the content stored in a node with the help of the content stored in the remaining nodes is useful to reinstate the system in the event of a permanent node failure. Similarly, this also enables access to the information stored on a temporarily unavailable node with the help of available nodes in the system. Therefore, among all MDS codes, the ones with more efficient exact repair mechanisms are preferred for deployment in modern distributed storage systems.

In \cite{dimakis}, Dimakis et al. study the repair problem in distributed storage systems and introduce {\em repair bandwidth}, the amount of data downloaded during a node repair, as a metric to measure the efficiency of a node repair mechanism. In particular, they consider a setup where an MDS code is employed to encode a file containing $k$ symbols over a finite field $\F$ to a codeword comprising $n$ symbols over $\F$. These $n$ symbols are then stored on $n$ storage nodes. For an MDS code, it is straightforward to achieve a repair bandwidth of $k$ symbols (over $\F$) by contacting any $k$ remaining nodes and downloading the $k$ distinct symbols stored on these nodes. This follows from the fact that any $k$ symbols of a codeword from an MDS codes are sufficient to reconstruct the {\em entire} codeword. Note that the repair bandwidth of $k$ symbols (over $\F$) is the best possible if we are allowed to contact only $k$ remaining storage nodes during the repair process. Furthermore, it is not possible to reconstruct a code symbol by contacting less than $k$ remaining code symbols of a codeword in an MDS code. This motivates Dimakis et al. to look for potentially lowering the repair bandwidth for repair of a single node by contacting $t \geq k$ remaining nodes in the system and downloading partial data stored on each of the contacted nodes. 

For the repair of a failed node, each code symbol of the codeword is viewed as an $\ell$ length vector (code block) over a subfield $\B$, where $\ell$ is referred to as {\em sub-packetization level} or {\em node size}. Given this vector representation of the MDS code, the repair bandwidth of an MDS code is lower bounded by~\cite{dimakis,HLKB15} 
\begin{align}
\label{eq:cut_set_d}
\Big(\frac{t}{t - k + 1}\Big)\cdot \ell~~~\text{symbols (over $\B$)}.
\end{align}
In the particular case, when $t = n -1$, i.e., all the remaining nodes in the system are contacted during the repair process, the bound on the repair bandwidth reduces to
$
\Big(\frac{n - 1}{n - k}\Big)\cdot \ell~\text{symbols (over $\B$)}.
$

The bound in \eqref{eq:cut_set_d} is referred to as the cut-set bound in the literature. The problem of designing {exact-repairable} MDS codes with the optimal repair bandwidth (cf.~\eqref{eq:cut_set_d}) has led to many novel code designs that are proposed in \cite{RSK11, CJMRS13, zigzag13, PapDimCad13, SAK15, RKV16a, GoparajuFV16} and references therein. In \cite{YeB16a}, Ye and Barg present the first fully explicit construction of MDS codes with optimal repair bandwidth for all values of the system parameters $n$, $k$, and $t$. Further explicit constructions of MDS codes with optimal repair bandwidth are presented in \cite{YeB16b, SVK16}. Note that as the number of the nodes contacted during the repair process $t$ gets larger, the optimal repair bandwidth defined by the cut-set bound becomes significantly smaller than the naive repair bandwidth of $k$ symbols (over $\F$) or $k\ell$ symbols (over $\B$). 

Here, we note that the aforementioned existing constructions for high rate MDS codes with the optimal repair bandwidth require a large sub-packetization level $\ell$. In this paper, we aim to construct MDS codes that have both small repair bandwidth and small sub-packetization level. Towards this, we relax the requirement that the underlying MDS code attains the cut-set bound. In particular, we design MDS codes with repair bandwidth arbitrarily close to the cut-set bound. This small loss in terms of repair bandwidth optimality results in a significantly large benefit in terms of the required sub-packetization level. Thus, this paper essentially explores a trade-off between the sub-packetization level $\ell$ and the repair bandwidth for MDS codes. 

\begin{figure*}
        \centering
        \begin{subfigure}[b]{0.48\textwidth}
                \centering
                \includegraphics[width=0.8\textwidth]{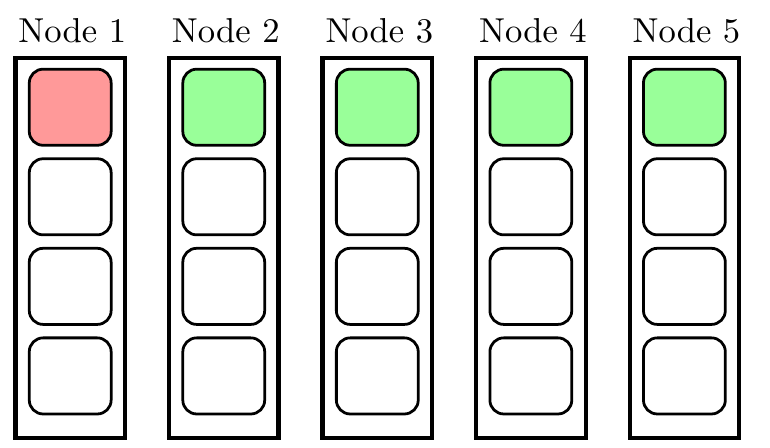}
                \caption{}
                \label{fig:good_degraded}
        \end{subfigure}%
       ~
        \begin{subfigure}[b]{0.48\textwidth}
                \centering
                \includegraphics[width=0.8\textwidth]{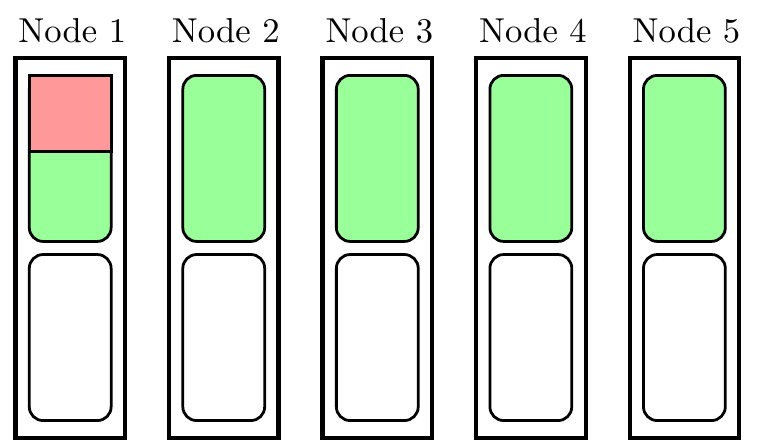}
                \caption{}
                \label{fig:bad_degraded}
        \end{subfigure}
        \caption{Illustration of efficient degraded reads in systems employing codes with small sub-packetization level. In Figure~\ref{fig:good_degraded}, we have a system with five storage nodes storing four different systematic codewords associated with independent data. For example, five colored symbols correspond to one of these four codewords. In the event when the content on the first node is unavailable (e.g., due to a parallel read being served by the node $1$), the required (red colored) information block can be recovered by treating the block as a failed block and regenerating it using the remaining (green colored) blocks of the associated codeword in a bandwidth efficient manner. Now consider the setting in Figure~\ref{fig:bad_degraded}, where a code with large sub-packetization is employed. Here, the system stores two distinct code words. We again assume that the first node is unavailable to serve a read request for the information highlighted by red color. Since the required information only constitute part of a code block, recovering the desired information by invoking the repair of its associated block is not very efficient as it would incur larger bandwidth as compared to the setting in Figure~\ref{fig:good_degraded}.}
        \label{fig:degraded}
\end{figure*}

MDS codes that provide small sub-packetization level in addition to having small repair bandwidth are of great practical importance in distributed storage systems. The smaller sub-packetization level leads to easier system implementation. Another obvious advantage of working with small sub-packetization level is that it provides a system designer with greater flexibility in terms of selecting various system parameters. As an example, consider a scenario where an MDS code requires a large sub-packetization level, e.g., say $\ell \geq 2^n$. This implies that using storage nodes (disks) with storage capacity of $\ell$ symbols (over $\B$), one can only design a storage system with at most $\log_2 \ell$ nodes. Therefore, larger sub-packetization level can lead to a reduced design space in terms of various system parameters. Here, we list some of the additional advantages of having small sub-packetization level that might be specific to the way various large scale distributed storage systems operate.

\begin{itemize}
\item A code with larger sub-packetization level makes management of meta-data difficult. Here, we use meta-data to refer to the descriptions of the code (e.g., its parity-check matrix in case of a linear code) and repair mechanisms associated with different code blocks. Employing a code that requires large sub-packetization level implies that such descriptions have large size which makes it non-trivial and (or) resource inefficient to make this information available in a distributed storage setup.
\item Interestingly, a code with small sub-packetization level enables bandwidth efficient (on-the-fly) accesses to missing small files (information blocks) stored on a distributed storage system by performing degraded reads~\cite{Khan12}. Assuming that the sub-packetization level of the underlying MDS codes is smaller than the size of the files accessed by users, one can obtain a missing file (information block) by utilizing the bandwidth efficient repair mechanism of the MDS code. Towards this, one can treat the missing files as failed (systematic) code blocks in the associated codewords and access the file by performing bandwidth efficient repair of the failed (systematic) code blocks comprising the missing file of interest (cf.~Figure~\ref{fig:degraded}). Here, we note that a file may be considered missing due to transient unavailability of some storage nodes in the system.
\item As illustrated in Figure~\ref{fig:repair_balance}, in a large scale distributed storage system, small sub-packetization might allow us to distribute the codewords corresponding to independently coded files among multiple nodes. In some settings, this may be advantageous as it would distributed the load of providing information for the repair of a failed node among a large number of nodes.
\end{itemize}

\begin{figure*}
        \centering
         \includegraphics[width=0.8\textwidth]{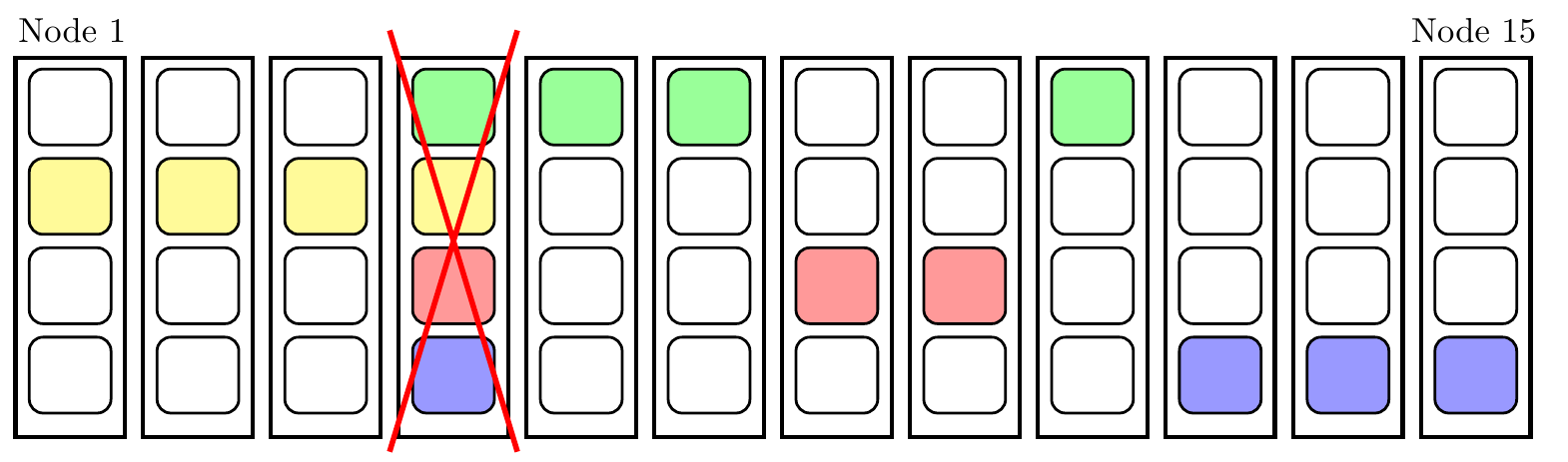}
         \caption{Here, we illustrate $15$ nodes of a large scale storage system, where different files are encoded using exact-repairable bandwidth efficient codes. We assume that the fourth node (from the left) is lost due to a failure. The failed node stored four code blocks corresponding to four distinct codewords (with their code blocks highlighted with different colors). As one can notice that the codewords corresponding the lost code blocks are distributed among all the $15$ nodes and the load of providing the required information for repair of the lost four code blocks gets distributed among all the nodes. Furthermore, the small sub-packetization implies that the process of downloading the required information from these contributing nodes can be very fast as we don't need to download a large amount of information from each of these nodes.}
         \label{fig:repair_balance}
\end{figure*}

\noindent \textbf{Our contributions.}~We present two general approaches to construct MDS codes that have small sub-packetization level while allowing for exact repair of all code blocks with near-optimal repair bandwidth. The constructions obtained using the proposed approaches highlight a trade-off between the sub-packetization level and the repair bandwidth for exact repair. Both approaches crucially utilize the parity-check view of a linear code to obtain the desired MDS codes. We note that throughout this paper we consider the setting with $t = n-1$, i.e., all the remaining code blocks contribute to the exact repair of a single code block.

The first approach that we present aims to construct repair-efficient MDS codes over an extension field by utilizing the Vandermonde style parity-check matrices for MDS codes over the base field. Assuming that the desired sub-packetization level is $\ell$, we take $\F$ to be the degree $\ell$ extension of the base field $\B$. We start with a block parity-check matrix of a simple MDS code (over $\B$) of length $n\ell$ which is obtained by naturally interleaving $\ell$ independent codewords from the given MDS code (over $\B$) with a Vandermonde style parity-check matrix. We then carefully replace some of the zero entries of this parity-check matrix with non-zeros elements from $\B$ and obtain a parity-check matrix of a new MDS code (over $\F$) of length $n$ that has an exact repair mechanism with small repair bandwidth. It follows from the construction that the obtained repair-efficient MDS code is linear over the base field $\B$. 

We demonstrate our first approach by presenting a family of MDS codes that have sub-packetization level $\ell = n - k$ and the repair bandwidth that is strictly less than $2(n-1)$ symbols (over $\B$). Note that this is twice the cut-set bound (cf.~\eqref{eq:cut_set_d}) which takes the value $(n-1)$ symbols (over $\B$) for $\ell = n - k$. {We argue that the sub-packetization level $\ell = \Omega(n - k)$ is the smallest that one can hope for an MDS code with the aforementioned guarantee on its repair bandwidth (See Appendix~\ref{appen:appen_sub}).} We then generalize the ideas used in the construction with $\ell = n - k$ to obtain the MDS codes that have improved repair bandwidth at the cost of increased sub-packetization level. In particular, for an integer $\tau \geq 2$, we obtain a family of MDS codes with sub-packetization level $\ell = (n - k)^{\tau}$ and the repair bandwidth which is at most $(1 + \frac{1}{\tau})$ times the value of the cut-set bound.

Note that the existing constructions of MDS codes that attain the cut-set bound (e.g.~\cite{RSK11, YeB16a}) have an additional property that these codes ensure load balancing during the repair process. In particular, the repair mechanisms of these codes require downloading the same amount of data from each of the $t$ contacted nodes during the repair of a failed node. In the literature, such codes are referred to as {\em minimum storage regenerating (MSR)} codes~\cite{dimakis}. We note that the MDS codes obtained from our first approach do not provide the load balancing feature as some of the nodes contribute higher amount of data during the node repair process. The second approach that we propose in this paper addresses this issue. In this general approach, we start with a short MSR code that has large sub-packetization level as a function of its length. Starting with the parity-check matrix of this code, we apply a simple yet powerful transformation to obtain a parity-check matrix of a longer code. The obtained longer code has a sub-packetization level which scales much favorably with the code length while ensuring a small repair bandwidth for exact repair. Furthermore, the repair mechanism of the longer code ensures load balancing during the repair process. We name the codes obtained through this approach as $\epsilon$-MSR codes: For a given $\epsilon \geq 0$, an $\epsilon$-MSR code downloads at most 
$
(1 + \epsilon)\cdot{\ell}/{(t - k  + 1)}~~\text{symbols (over $\B$)}
$
from each of the $t$ contacted nodes during the repair of a failed node. We highlight that this approach allows us to construct $\epsilon$-MSR codes with constant number of parity nodes while working with the sub-packetization level that scales only {\em logarithmically} with the code length $n$. This amounts to a {\em doubly exponential} saving in terms of the sub-packetization level as compare to the existing MSR codes~\cite{YeB16a, SVK16}. Finally, we address the question of finding the sub-packetization level that is necessary for any $\epsilon$-MSR code with the given code length and rate. Towards this, we obtain an upper bound on the code length $n$ of a {\em linear} $\epsilon$-MSR code as a function of $n - k$ and sub-packetization level $\ell$.

For the MDS codes obtained from our first approach, the exact repair of a code block involves downloading a subset of the symbols from the remaining code blocks. Such repair mechanisms are referred to as the {\em uncoded repair} or {\em repair-by-transfer} in the literature. The repair-by-transfer schemes form a sub-class of all possible linear repair schemes where a contacted node can potentially send symbols (over $\B$) which are linear combinations of all $\ell$ symbols of the code block stored on this node. A repair-by-transfer mechanism is desirable over other complicated repair schemes due to its operational simplicity and the minimal computation requirements at the contacted nodes. Here, we would also like to point out that our second approach can also be utilized to obtain the codes that enable repair-by-transfer by working with those short MSR codes that have repair-by-transfer schemes (cf.~Remark~\ref{rem:RBT_long}).

As for the comparison between the proposed two approaches, for the similar guarantee on the repair bandwidth, the second approach requires slightly higher sub-packetization level (cf. Remark~\ref{rem:GR17}). However, as pointed out above, the second approach gives us $\epsilon$-MSR codes that have the additional feature of load balancing, where all the contacted code blocks provide (approximately) same amount of information during the repair process. Moreover, the second approach allows us to construct explicit codes with the field size $|\B|$ which scales linearly with the code length. In contrast, we manage to obtain explicit construction using the first approach with the field size that is much larger than $n^{(r-1)\ell + 1}$. That said, one of the merits of the first approach over the second approach is the flexibility it provides in terms of various system parameters. The first approach only relies on MDS codes with Vandermonde style parity check matrices which exist for all values of $n$ and $k$. Moreover, the desired repair bandwidth and the associated sub-packetization level can be easily controlled with the design parameter $\tau$ which does not depend on $n$ and $k$. On the other hand, the second approach requires MSR codes and error correcting codes with specific parameters to obtain the $\epsilon$-MSR codes with the desired $n$, $k$ and sub-packetization level, which may not always be available. 

\noindent \textbf{Organization.}
We introduce the necessary background along with a discussion on the related work in Section~\ref{sec:background}. In Section~\ref{sec:main}, we formally define the notion of MDS codes with near-optimal repair bandwidth and summarize the key contributions of this paper. We present the first contribution of this paper in Section~\ref{sec:twiceRB}, where we describe a construction of MDS codes with sub-packetization level equals to the number of parity nodes and repair bandwidth at most twice the value of the cut-set bound. In Section~\ref{sec:RB_t}, we generalize this construction to obtain MDS codes with their repair bandwidth approaching the cut-set bound. In Section~\ref{sec:gen_approach}~and~\ref{sec:long_mds}, we focus on constructing $\epsilon$-MSR codes. In Section~\ref{sec:gen_approach}, we describe a general approach to combine a short MSR code with an error correcting code into a code with both small repair bandwidth and small sub-packetization level. Furthermore, the obtained code also ensures that none of the contacted code blocks contribute disproportionately large number of symbols during a node repair. In Section~\ref{sec:long_mds}, we utilize a code construction of \cite{YeB16a} to construct a short MSR code in our general approach so that the obtained code is an MDS code. This along with the load balancing nature of the repair mechanism of the obtained code establish that it is an $\epsilon$-MSR code. In Section~\ref{sec:converse}, we explore the sub-packetization level that is necessary to realize an $\epsilon$-MSR code. We conclude the paper in Section~\ref{sec:conclusion} where we comment on the constructions of the codes with general values of $t$ (the number of blocks contributing to the repair process) and discuss other directions for future work.

\section{Background and Related Work.}
\label{sec:background}

In this section we formally introduce linear array codes and the related concepts used in this paper. We then describe the repair problem in the context of distributed storage systems and survey the related work. Throughout this paper we use the following notation. For an integer $a >0$, $[a]$ denotes the set $\{1, 2,\ldots, a\}$. Similarly, for two integers $a$ and $b$ such that $a \leq b$, $[a:b]$ represents the set $\{a, a+1,\ldots, b\}$. For two matrices $A$ and $B$, $A \otimes B$ denotes the tensor product of $A$ and $B$.

\subsection{Linear array codes}
\label{sec:array}
Let $\F$ be a degree $\ell$ extension of the finite field $\B$. We say that a set $\Cc \subseteq \F^{n}$ forms an $(n, M, d_{\min})_{\F}$ code, if we have $|\Cc| = M$ and
$$
d_{\min} = \min_{\cv \neq \cv' \in \Cc}\dist(\cv, \cv'),
$$
where $\dist(\cdot,\cdot)$ denotes the Hamming distance. 
Note that each element of $\F$ can be represented as an $\ell$-length vector over $\B$. Therefore, we can express a codeword $\cv = (c_1,\ldots, c_n) \in \Cc \subseteq \F^{n}$ as an $n\ell$-length vector 
$
\cv = (\cv_1,\ldots, \cv_n) \in \B^{n\ell}.
$
Here, for $i \in [n]$, the {\em code block} $\cv_i = (c_{i, 1},\ldots, c_{i, \ell}) \in \B^{\ell}$ denotes the $\ell$-length vector corresponding to the {\em code symbol} $c_i \in \F$. 

In this equivalent representation, we say that $\Cc$ is a {\em linear array code} if it forms a linear subspace over $\B$. Moreover, we refer to the code as an $[n, \log_{|\B|}M, d_{\min}, \ell]_{\B}$ linear array code. We say that an $[n, \log_{|\B|}M, d_{\min}, \ell]_{\B}$ linear array code is a {\em maximum distance separable} (MDS) code if $\ell$ divides $\log_{|\B|}M$ and $$d_{\min} = n - \big({\log_{|\B|}M}\big)/{\ell} + 1.$$
\begin{remark}
Note that even though we view the codewords of an array code as $n\ell$-length vectors over $\B$, the minimum distance is calculated by viewing each code block as a symbol of  $\F$. Therefore, the minimum distance of the array code belongs to the set of integers $[n]$.
\end{remark}
An $[n, \log_{|\B|}M, d_{\min}, \ell]_{\B}$ linear array code can be defined by an $(n\ell - \log_{|\B|}M) \times n\ell$ full rank matrix $\Hm$ over $\B$ as follows.
\begin{align}
\label{eq:PM}
\Cc = \big\{\cv = (\cv_1,\ldots, \cv_n)~:~\Hm\cdot\cv = 0\big\} \subseteq \B^{n\ell}.
\end{align}
The matrix $\Hm$ is called a {\em parity-check matrix} of $\Cc$. Assuming that $k$ is an integer such that $\log_{|\B|}M = k\ell$, the parity-check matrix $\Hm$ can be viewed as a block matrix
\begin{align}
\Hm & = \left(\begin{array}{cccc}
\Hm_1 & \Hm_2 & \cdots & \Hm_n 
\end{array}\right)  \in \B^{(n-k)\ell \times n\ell}.
\end{align}
For $i \in [n]$, we refer to the $(n-k)\ell \times \ell$ sub-matrix $\Hm_i$ as the {\em thick column} associated with the $i$-th code block in the codewords of $\Cc$. For a set $\Sc = \{i_1, i_2,\ldots, i_{|\Sc|}\} \subseteq [n]$, we define the $(n-k) \ell \times |\Sc|\ell$ matrix $\Hm_{\Sc}$ as follows. 
\begin{align}
\label{eq:ParitySub}
\Hm_{\Sc} = \left( \begin{array}{cccc}
\Hm_{i_1} & \Hm_{i_2} & \cdots & \Hm_{i_{|\Sc|}}
\end{array} \right) \in \B^{(n-k) \ell \times |\Sc|\ell}.
\end{align}
Note that the matrix $\Hm_{\Sc}$ comprises the thick columns with indices in the set $\Sc$. The parity-check matrix $\Hm$ defines an MDS code if for every $\Sc \subseteq [n]$ with $|\Sc| = n-k$, the $(n-k)\ell \times (n-k)\ell$ sub-matrix $\Hm_{\Sc}$ is full rank.

\subsection{Repair problem for MDS codes.}
\label{sec:exact-repair}
Let $\cv = (\cv_1,\ldots, \cv_n) \in \B^{n\ell}$ be a codeword of an MDS code $\Cc$ with $\log_{|\B|}{M} = k\ell$. Recall that we are considering a distributed storage setup where these $n$ code blocks are stored on $n$ distinct storage nodes. For every $i \in [n]$, we are interested in the task of repairing the code block $\cv_i$ by downloading a small amount of data from the remaining nodes storing the code blocks $\big\{\cv_{j}\big\}_{j\neq i}$. Accordingly, with $k \leq t \leq n - 1$ as another system parameter, the repair problem imposes the requirement that for every $i \in [n]$ and $\Rc \subseteq [n]\backslash \{i\}$ with $|\Rc| = t$, we have a collection of functions, 
$$
\big\{h^{(i)}_{j, \Rc}~:\B^{\ell} \rightarrow \B^{\beta_{j,i}}\big\}_{j \in \Rc}
$$
such that $\cv_i$ is a function of the symbols in the set $\big\{h^{(i)}_{j, \Rc}(\cv_j)\big\}_{j \in \Rc}$. This implies that  for every $i \in [n]$, the code block $\cv_i$ can be {\em exactly repaired} (regenerated) by contacting any $t$ out of $n-1$ remaining code blocks in the codeword $\cv$ (say indexed by the set $\Rc \subseteq [n]\backslash \{i\}$) and downloading at most $\sum_{j \in \Rc}\beta_{j, i}$ symbols  of $\B$ from the contacted code blocks, where $\beta_{j,i}$ to denote the number of symbols of $\B$ downloaded from the code block $\cv_j$.

In \cite{dimakis}, Dimakis et al. formally study the repair problem for MDS codes in the setup described above. They introduce {\em repair bandwidth}, the total number of symbols downloaded during the repair process, as a measure to characterize the efficiency of the repair process\footnote{Dimakis et al. consider a broader repair framework, namely {\em functional repair} framework~\cite{dimakis}. Under functional repair framework, $\tilde{\cv}_i \in \B^{\ell}$ which may potentially be different from the code block under repair $\cv_i \in \B^{\ell}$ is an acceptable outcome of the repair process as long as it preserves certain properties of the original codeword. For further details, we refer the reader to \cite{dimakis, DRWS2011}. Here, we note that the lower bounds obtained for the functional repair problem are also applicable to the exact repair problem considered in this paper.}. Subsequently, Dimakis et al. obtain the following cut-set bound on the repair bandwidth of an MDS code~\cite{dimakis}.
\begin{align}
\label{eq:cut-set}
\sum_{j \in \Rc}\beta_{j, i} \geq \left(\frac{t}{t - k + 1}\right)\cdot \ell,~~\forall~\Rc \subseteq [n]\backslash \{i\}~\text{s.t.}~|\Rc| = t.
\end{align}
Note that the cut-set bound (cf.~\eqref{eq:cut-set}) is a decreasing function of $t$. Therefore, by contacting more nodes during the repair process, one potentially needs to download less overall information from the contacted nodes. This makes the setting with $t = n-1$ the most beneficial in terms of minimizing the repair bandwidth.

\subsection{Linear repair schemes for a linear array code}
\label{sec:linearRepair}

Before we provide a brief account of the work on constructing MDS codes that attain the cut-set bound in Section~\ref{sec:prior_work}, let us first introduce various concepts pertaining to linear repair schemes for a linear array code. Since we focus on constructing codes that are linear and employ linear schemes to repair a code block, these concepts are instrumental for the presentation of the main results of this paper. Since all the constructions presented in this paper work with $t = n-1$, in what follows, we restrict ourselves to this setting. 

Let $n$ nodes in the distributed storage setup store $n$ distinct code blocks of a codeword $\cv = (\cv_1,\ldots, \cv_n)$ belonging to a linear array code $\Cc$ defined by the parity-check matrix $\Hm$ (cf.~Section~\ref{sec:array}). {For every $i \in [n]$, a linear repair scheme performs the task of repairing the code block $\cv_i$ with the help of remaining $t = n-1$ code blocks $\big\{\cv_{j}\big\}_{j\neq i}$ by employing linear operation over the base field $\B$. We summarize the description of a linear repair scheme and the associated repair bandwidth in the following statement. 

\begin{proposition}
\label{prop:repair}
Let $S_i \in \B^{\ell \times (n - k)\ell}$ be a matrix such that the following two conditions hold. 
\begin{align}
\label{eq:Repair_cond1}
\rank\left(S_i\Hm_i \right) = \ell
\end{align}
and
\begin{align}
\label{eq:Repair_cond2}
\sum_{j \in [n]\backslash\{i\}}\rank\left(S_i\Hm_j\right) \leq \gamma,
\end{align}
where all the ranks are calculated over the base field $\B$. Then, the code block $\cv_i$ can be repaired by downloading at most $\gamma$ symbols of $\B$ from the remaining $n - 1$ nodes. In particular, this requires downloading $\rank\left(S_i\Hm_j\right)$ symbols of $\B$ from the node storing the code block $\cv_j$. Furthermore, $\{S_i\}_{i \in [n]}$ are referred to as repair matrices. 
\end{proposition}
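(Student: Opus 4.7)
The plan is to start from the parity-check equation $\Hm \cv = \mathbf{0}$ and rearrange it block-wise to isolate the term corresponding to the failed node. Writing the parity-check relation as $\sum_{j=1}^{n} \Hm_j \cv_j = \mathbf{0}$ and solving for $\Hm_i \cv_i$ gives $\Hm_i \cv_i = -\sum_{j \in [n]\setminus\{i\}} \Hm_j \cv_j$. Left-multiplying both sides by $S_i$ yields
\begin{align}
S_i \Hm_i \cv_i = -\sum_{j \in [n]\setminus\{i\}} S_i \Hm_j \cv_j.
\end{align}
By the first hypothesis, $S_i \Hm_i$ is an $\ell \times \ell$ matrix of rank $\ell$ over $\B$, hence invertible. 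Consequently, if the replacement node can compute the right-hand side of the above display, it can recover $\cv_i$ by applying $(S_i \Hm_i)^{-1}$.

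Next, I would verify that the replacement node can obtain each vector $S_i \Hm_j \cv_j \in \B^{\ell}$ while receiving only $r_j := \rank(S_i \Hm_j)$ base-field symbols from node $j$. For each $j \neq i$, select a set of $r_j$ rows of $S_i \Hm_j$ that forms a basis of its row space; denote the resulting sub-matrix by $T_j \in \B^{r_j \times \ell}$. The node storing $\cv_j$ transmits the $r_j$-symbol vector $T_j \cv_j$, which is simply $r_j$ prescribed $\B$-linear combinations of the entries of $\cv_j$. Because every row of $S_i \Hm_j$ is a $\B$-linear combination of the rows of $T_j$, there exists a matrix $A_j \in \B^{\ell \times r_j}$, computable offline from $S_i$ and $\Hm_j$, such that $S_i \Hm_j = A_j T_j$. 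The replacement node then forms $S_i \Hm_j \cv_j = A_j (T_j \cv_j)$ locally, without requesting any further information from node $j$.

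Summing over the $n-1$ contacted nodes, the total amount of information transmitted is
\begin{align}
\sum_{j \in [n]\setminus\{i\}} r_j \;=\; \sum_{j \in [n]\setminus\{i\}} \rank\!\left(S_i \Hm_j\right) \;\leq\; \gamma,
\end{align}
by the second hypothesis. Combined with the inversion step above, this shows that $\cv_i$ can be exactly repaired using at most $\gamma$ symbols of $\B$, with exactly $\rank(S_i \Hm_j)$ symbols fetched from node $j$, as claimed.

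I do not anticipate a genuine obstacle here; the proof is essentially a bookkeeping exercise once the right viewpoint is chosen. The one conceptual point worth emphasizing is that a naive transmission of the full product $S_i \Hm_j \cv_j$ would cost $\ell$ symbols per contacted node; the saving comes from observing that the values of interest lie in the $r_j$-dimensional row space of $S_i \Hm_j$, so node $j$ can compress its contribution to exactly $r_j$ symbols. This row-space compression is what turns the two rank conditions into the bandwidth guarantee.
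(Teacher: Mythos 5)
Your proof is correct and follows the same approach as the paper: multiply the parity-check identity $\Hm\cv = \Hm_1\cv_1 + \cdots + \Hm_n\cv_n = 0$ on the left by $S_i$ to isolate $S_i\Hm_i\cv_i$, then invert the full-rank $\ell\times\ell$ matrix $S_i\Hm_i$. The only difference is that you make explicit the row-space compression step (choosing a row basis $T_j$ of $S_i\Hm_j$ and factoring $S_i\Hm_j = A_j T_j$ so node $j$ transmits only $T_j\cv_j$), which the paper leaves implicit when it simply asserts that $\rank(S_i\Hm_j)$ downloaded symbols suffice; this is a useful clarification but not a different argument.
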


\begin{proof}
Given the matrix $S_i$, one can regenerate the code block $\cv_i$ by downloading at most $\gamma$ symbols (over $\B$) from the remaining $n - 1$ nodes as follows.
\begin{itemize}
\item Note that the codeword $\cv  = (\cv_1, \cv_2,\ldots, \cv_n)$ satisfies the following. 
\begin{align}
\label{eq:Repair_parity1}
\Hm \cv = \Hm_1\cv_1  + \cdots + \Hm_n\cv_n = 0
\end{align}
\item By multiplying \eqref{eq:Repair_parity1} from left by the matrix $S_i$ which satisfies \eqref{eq:Repair_cond1} and \eqref{eq:Repair_cond2}, we obtain
\begin{align}
\label{eq:Repair_parity2}
S_i \Hm_i\cv_i  = - \sum_{j \in [n]\backslash\{i\}}S_i \Hm_j\cv_j
\end{align}
\end{itemize} 
Note that in order to evaluate the right hand side of \eqref{eq:Repair_parity2}, for $j \in [n]\backslash\{i\}$, we need to download at most
$
\rank\left(S_i\Hm_j\right)
$
symbols of $\B$ from the node storing the code block $\cv_j$. It follows from \eqref{eq:Repair_cond2} that we download at most $\gamma$ symbols from the code blocks $\{\cv_1,\ldots, \cv_{i-1}, \cv_{i+1},\ldots, \cv_n\}$. Once we know the right hand side of \eqref{eq:Repair_parity2}, we can solve for $\cv_i$ as it follows from \eqref{eq:Repair_cond1} that the matrix $S_i\Hm_i$ is full rank.
\end{proof}

\begin{remark}
\label{rem:zeroRank}
In the repair mechanism summarized in Proposition~\ref{prop:repair}, if $\rank\left(S_i\Hm_j \right)  = 0$ i.e., this is a zero matrix, then evidently the node storing $\cv_j$ does not participate in the repair process. 
\end{remark}

\begin{remark}
\label{rem:RBT}
Let $\{\ev_i\}_{i \in [\ell]}$ be $\ell$ standard basis vectors of $\B^{\ell}$, where all but $i$-th coordinate of the vector $\ev_i$ are zero. 
{We say that the code block $\cv_i$ is repaired by transfer if for all $j\neq i$, there exists a subset $\Vc_j\subseteq [\ell]$ such that}
\begin{align}
\label{eq:RBT}
{\rm rowspace}(S_i\Hm_j)= {\rm span}\{\ev^{T}_s : s \in \Vc_j\}.
\end{align}
Here, ${\rm row space}(S_i\Hm_j)$ denotes the subspace spanned by the rows of the matrix $S_i\Hm_j$. Given that \eqref{eq:RBT} holds, it is sufficient to simply transfer $\rank\left(S_i\Hm_j\right)$ symbols of $\B$ from the code block $\cv_j$ (without any local computation at the $j$-th node) in order to repair the code block $\cv_i$, which justifies the term {\em repair-by-transfer}.
\end{remark}

\subsection{Prior work on constructing repair-efficient codes}
\label{sec:prior_work}

The problem of constructing MSR codes, i.e., MDS codes that attain the cut-set bound (cf.~\eqref{eq:cut-set}), has been explored by many researchers. In \cite{RSK11}, Rashmi et al. present an explicit construction for MSR codes. This construction works with the sub-packetization level $\ell = t - k + 1\leq n-k$. However, this small sub-packetization is achieved at the expense of low rate which is bounded as $\frac{k}{n} \leq \frac{1}{2} + \frac{1}{2n}$. Towards constructing high-rate MSR codes, Cadambe et al.~\cite{CJMRS13} show the existence of such codes when sub-packetization level approaches infinity. Motivated by this result, the problem of designing high-rate MSR codes with finite sub-packetization level is explored in \cite{PapDimCad13, zigzag13, SAK15, WTB12, Cadambe_poly, RKV16a, GoparajuFV16, YeB16a, YeB16b, SVK16} and references therein. The constructions presented in \cite{PapDimCad13, zigzag13, zigzag_allerton11}  work with the sub-packetization level $\ell$ which is exponential in $k$. For $t = n-1$ and all values of $r = n - k$, Sasidharan et al.~\cite{SAK15} construct MSR codes with the sub-packetization level  $\ell = (n - k)^{\ceilb{\frac{n}{n-k}}}$. {The constructions with $t = n - 1$ and the similar sub-packetization levels that enable exact-repair of only $k$ systematic nodes are also presented in \cite{WTB12, Cadambe_poly}.} The construction of \cite{SAK15} is generalized for all possible values of $k \leq t \leq n - 1$ with the sub-packetization level $\ell = (t - k + 1)^{\ceilb{\frac{n}{t - k + 1}}}$ in \cite{RKV16a}. 

The MSR codes presented in \cite{PapDimCad13, zigzag13, zigzag_allerton11, WTB12, Cadambe_poly} are obtained by designing suitable generator matrices for these codes. On the other hand, \cite{SAK15, RKV16a} design the proposed codes by constructing parity-check matrices with certain combinatorial structures. {We note that in most of these constructions, certain elements in the generator/parity-check matrices are not explicitly specified. These papers argue the existence of good choices for these elements provided that the field size is large enough.} Recently, Ye and Barg~\cite{YeB16b} have presented a fully explicit construction for MSR codes with $t = n-1$ and the sub-packetization level $\ell = (n - k)^{\ceilb{\frac{n}{n - k}}}$ by designing the associated parity-check matrices. We note that the construction from \cite{YeB16b} also works for general values of $k \leq t \leq n - 1$ with suitably modified sub-packetization levels similar to the sub-packetization levels used in \cite{RKV16a}. A similar construction is also presented in an independent work by Sasidharan et al.~\cite{SVK16}. 

\bgroup
\def\arraystretch{1.55}
\begin{table*}[t!]
\footnotesize
\centering
\begin{tabular}{c|c|c|c|c|}
  \hline \hline
  Code construction & Sub-packetization level & Repair bandwidth & Repair by transfer & Code rate \\
  \hline
  Rashmi et al., 2011 \cite{RSK11} & $\ell = n - k$ & $\big(\frac{n-1}{n-k}\big)\cdot \ell$ & No & $0 < \frac{k}{n} \leq \frac{1}{2} + \frac{1}{n}$ \\
  \hline
  Ye and Barg, 2016 \cite{YeB16b} & $\ell = (n - k)^{\ceilb{\frac{n}{n-k}}}$ & $\big(\frac{n-1}{n-k}\big)\cdot \ell$ & Yes & $0 < \frac{k}{n} < 1$ \\
  \hline
  \pbox{20cm}{~~~~~~~~~~This paper \\ (design parameter $\tau \geq 1$)} & $\ell = (n - k)^{\tau}$ & $\leq (1 + \frac{1}{\tau})\cdot\big(\frac{n-1}{n-k}\big)\cdot \ell$ & Yes & $0 < \frac{k}{n} < 1$ \\
\hline
\end{tabular}
\caption{Comparison of the code construction proposed in Section~\ref{sec:RB_t} with the state of the art constructions of MSR codes. We focus only on the setting with $t = n - 1$.}
\label{tab:comparison}
\end{table*}
\egroup

Some converse results on the sub-packetization level that is necessary for an MSR code are presented in \cite{GTC14, TWB14}. For $t = n - 1$, Goparaju et al.~\cite{GTC14} show that an MSR code that employs linear repair schemes satisfies the following bound on its sub-packetization level. 
\begin{align}
\label{eq:gtc_bound}
k \leq 2(\log_{2}\ell)\big(\log_{\frac{n-k}{n-k-1}}\ell + 1\big) + 1.
\end{align}
Note that, for $n - k = \Theta(1)$, the bound in \eqref{eq:gtc_bound} implies that $\ell = \Omega\big(\exp(\sqrt{k})\big)$. Thus, for the setting with constant number of parity nodes, an MSR code necessarily has a very large sub-packetization level. On the other hand, Tamo et al.~\cite{TWB14} show that the sub-packetization level of an MSR code which enables exact repair using repair-by-transfer schemes is bounded as
\begin{align}
\label{eq:twb_bound}
\ell \geq (n - k)^{\frac{k}{n - k}}.
\end{align}
Note that repair-by-transfer schemes constitute a sub-class of all possible linear repair schemes. In light of the bound in \eqref{eq:twb_bound}, the MSR codes obtained in \cite{SAK15, YeB16b, SVK16} enable repair-by-transfer mechanisms with near-optimal sub-packetization level. However, this sub-packetization level can be prohibitively large for some storage systems, especially when the code has high rate or equivalently has small value of $r = n  - k$. In particular, for constant number of parities, i.e., $r=O(1)$, the sub-packetization level scales exponentially with $n$. This motivates us to explore the question of designing MDS codes that work with small sub-packetization level and provide repair mechanisms without incurring much degradation in terms of the repair bandwidth. In Table~\ref{tab:comparison}, we compare one of our proposed constructions with the previously known constructions.

The problem of constructing exact-repairable MDS codes with small repair bandwidth and small sub-packetization level has been previously addressed in \cite{zigzag13, RSR13}. We note that our first construction (cf.~Section~\ref{sec:twiceRB} and \ref{sec:RB_t}) shares some similarities with the constructions presented in \cite{zigzag13, RSR13} as these constructions are obtained by introducing coupling among multiple independent codes as well. However, we work with the parity-check matrix view (as opposed to the generator matrix view considered in \cite{zigzag13, RSR13}) which ensures identical repair guarantees for all code blocks without distinguishing between systematic and parity code blocks. In a parallel and independent work~\cite{KGO16, KGJO16}, the authors also address the problem of constructing MDS codes with small sub-packetization and near-optimal repair bandwidth. However, they deal with the repair of only systematic nodes. Moreover, their construction is not fully explicit.\\

\noindent \textbf{Exact repair of known codes with small repair bandwidth.} The problem of devising exact repair mechanism with small repair bandwidth for known MDS codes has been studied in \cite{WDB10, SPDG14, XCode14, GW15, YeBargRS, DM17, TYB17}. In particular, \cite{SPDG14, GW15, YeBargRS} consider the exact repair problem for the well-known Reed-Solomon codes. In \cite{GW15}, Guruswami and Wootters present a framework to design linear schemes to repair RS codes and more generally scalar MDS codes, which are linear over $\F$. They further characterize optimal repair bandwidth for RS codes in certain regimes of system parameters. In \cite{YeBargRS}, utilizing the framework from \cite{GW15}, Ye and Barg show that it is possible to construct RS codes with asymptotically optimal repair bandwidth (as the code length $n$ tends to $\infty$). The repair scheme in \cite{YeBargRS} works with a sub-packetization level of $(n-k)^n$ over a base field. Recently, Tamo et al. construct RS codes that meet the cut-set bound in \cite{TYB17}. Their construction requires the sub-packetization level of $\exp(1+o(1)n\log n)$. Furthermore, they show that such a  super-exponential scaling of the sub-packetization level with the code length is necessary for linear repair schemes of scalar MDS codes to attain the cut-set bound.\\

\noindent \textbf{Locally repairable codes.}~Another line of work in distributed storage focuses on {\em locality}, the number of the code blocks contacted during the repair of a single code block,  as a metric to characterize the efficiency of the repair process. The bounds on the failure tolerance of locally repairable codes, the codes with small locality, have been obtained in \cite{Gopalan12, PapDim12, KPLK12, RKSV12} and references therein. Furthermore, the constructions of locally repairable codes that are optimal with respect to these bounds are presented in \cite{Gopalan12, PapDim12, KPLK12, RKSV12, BlaumHH13, TamoBarg14, Gopalan14}. Locally repairable codes that also minimize the repair bandwidth for repair of a code block are considered in \cite{KPLK12, RKSV12}. Here we note that the locally repairable codes are not MDS codes, and thus have extra storage overhead.


\section{MDS Codes with Near-optimal Repair Bandwidth.}
\label{sec:main}
This paper aims to construct MDS codes with small sub-packetization and near optimal repair bandwidth, i.e., incurring a small (multiplicative) loss as compared to the repair bandwidth specified by the cut-set bound (cf.~\eqref{eq:cut-set}). Towards this, we first introduce the following notion of near-optimal repair bandwidth for MDS codes. 
\begin{definition}
\label{def:ApproxMSR}
{\rm Let $\Cc$ be an $[n, k\ell, d_{\min} = n - k + 1, \ell]_{\B}$ MDS code. We call $\Cc$ to be an $(a, \ell, t)$-exact-repairable MDS code if for every $i \in [n]$ and $\cv = (\cv_1, \cv_2,\ldots, \cv_n) \in \Cc$, we can perform exact repair of the code block $\cv_i$ by contacting $t$ other code blocks and downloading at most 
$
a \big(\frac{t}{t - k + 1}\big)\cdot \ell
$
symbols of $\B$ from the contacted code blocks.}
\end{definition}

\begin{remark}
\label{rem:near_opt1}
It follows from the bound in \eqref{eq:cut-set} that for any MDS code we must have $a \geq 1$. Note that $(a = 1, \ell, t)$-exact-repairable MDS codes correspond to MSR codes. Moreover, we say that an MDS code has near-optimal repair bandwidth if it is an $(a, \ell, t)$-exact-repairable MDS code for a small constant $a$.
\end{remark}

In this paper, we present explicit constructions of $(a, \ell, t)$-exact-repairable MDS codes that simultaneously ensure small values for both $a$ and $\ell$.  Furthermore, we focus on the setting with $t = n - 1$, i.e., all the remaining $n-1$ code blocks are contacted to repair a single code block. The following result summarizes the parameters of $(a, \ell, t = n-1)$-exact-repairable MDS codes obtained in this paper.

\begin{theorem}
\label{thm:main_tau}
For an integer $1 \leq \tau \leq \ceilbb{{n}/{(n-k)}} - 1$, Construction~\ref{subsec:construction_t} gives $\big(a = 1 + {1}/{\tau}, \ell = (n-k)^{\tau}, t = n-1\big)$-exact-repairable MDS codes. Moreover, the obtained codes allow for repair-by-transfer schemes.
\end{theorem}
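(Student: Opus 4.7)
My plan is to extend the $\tau = 1$ construction of Section~\ref{sec:twiceRB} to a multi-directional construction that introduces coupling along $\tau$ independent axes, trading a larger sub-packetization $\ell = (n-k)^{\tau}$ for a smaller factor $a = 1 + 1/\tau$ in the repair bandwidth. The coordinates of each code block will be indexed by $\uv = (u_1, \ldots, u_\tau) \in [0:r-1]^{\tau}$ where $r = n-k$, and the parity-check matrix $\Hm$ is built so that its thick column for node $i$ has the block Vandermonde form $\Hm_i = (\Id_\ell, A_i, A_i^2, \ldots, A_i^{r-1})^{T}$ for a specially designed $\ell \times \ell$ matrix $A_i$ over $\B$. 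Each $A_i$ is obtained from a scalar $\lambda_i \Id_\ell$ by adding ``coupling'' off-diagonal entries keyed to a tag $\fv_i = (f_i^{(1)}, \ldots, f_i^{(\tau)}) \in [0:r-1]^{\tau}$ assigned to node $i$, one for each direction. The tags are chosen so that in every direction $j \in [\tau]$ each of the $r$ values in $[0:r-1]$ is taken by at least one node, and this is feasible precisely because $\tau \leq \lceilbb{n/r} - 1$.

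The MDS property of the resulting code is established via a single determinant-polynomial argument. For any $r$-element subset $\Sc \subseteq [n]$, the block sub-matrix $\Hm_\Sc$ is block Vandermonde in the $A_i$'s, so $\det \Hm_\Sc$, viewed as a polynomial in the coupling coefficients, specializes at zero couplings to $\det V(\lambda_{i_1}, \ldots, \lambda_{i_r})^{\ell}$ (the $\ell$-th power of an ordinary Vandermonde), which is nonzero whenever the $\lambda_i$ are distinct. By the Schwartz--Zippel lemma, a generic assignment of coupling coefficients in a sufficiently large field $\B$ makes $\det \Hm_\Sc \neq 0$ simultaneously for all $\binom{n}{r}$ subsets, yielding an MDS code as in \eqref{eq:ParitySub}.

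The repair scheme for each code block $\cv_i$ uses projections of the parity equations onto carefully chosen $\ell/r$-dimensional slices of the coordinate space, one slice per direction $j \in [\tau]$: specifically $\Uc_i^{(j)} = \{\uv : u_j = f_i^{(j)}\}$. For each direction $j$, the coupling structure of $A_i$ lets the $r$ slice equations span all of $\cv_i$, while any other contacted node $\cv_{i'}$ whose coupling ``collides'' with $\cv_i$ in direction $j$ (i.e., $f_{i'}^{(j)} = f_i^{(j)}$) must contribute an extra $\ell/r$ symbols on top of the baseline $\ell/r$ enforced by $\Uc_i^{(j)}$. A combinatorial count shows that the tags can be arranged so that each pair $(i, i')$ collides in only a small number of the $\tau$ directions, and by picking for each $(i, i')$ the direction that minimizes its contribution (or equivalently by averaging), the overhead per contacted node amounts to at most a $1/\tau$ fraction of the baseline, giving total bandwidth $(1 + 1/\tau) \cdot (n-1)\ell/r$. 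The repair-by-transfer property follows automatically: by construction, each $S_i \Hm_{i'}$ has its row-space spanned by standard basis vectors of $\B^\ell$ (entries of $\cv_{i'}$ are forwarded as-is), matching Remark~\ref{rem:RBT}. The main technical obstacle is the combinatorial design of the tag assignment --- showing that we can spread the $n$ tags across $[0:r-1]^{\tau}$ so that the collision pattern in any single direction is controlled, which is exactly what the bound $\tau \leq \lceilbb{n/r} - 1$ is calibrated to ensure --- together with the careful amortization of downloads needed to prove the precise factor $(1 + 1/\tau)$.
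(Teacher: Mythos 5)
Your proposal shares the high-level theme of the paper's Construction~\ref{subsec:construction_t} (multi-directional coupling, coordinates indexed by $[r]^{\tau}$, larger sub-packetization trading for a smaller overhead factor, repair-by-transfer), but there are gaps in both the construction and the repair analysis that prevent it from being a complete proof.

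First, the parity-check structure you propose, with thick columns of the form $(\Id_{\ell}, A_i, A_i^2, \ldots, A_i^{r-1})^{T}$ where $A_i = \lambda_i \Id_{\ell} + (\text{coupling})$, is not the one the paper uses. Raising a coupled matrix $A_i$ to the power $p$ produces terms of degree up to $p$ in the coupling coefficients, which changes the interference pattern entirely. The paper deliberately keeps the coupling \emph{first-order in $\psi$}: each {\rm Type~II} constraint (cf.~\eqref{eq:type2t}) has the scalar Vandermonde part $\lambda_{i}^{p} c(\xv; \cdot)$ plus a single additional term $\psi \cdot c(\overline{\xv}^{v,p}; \cdot)$ per participating $v$. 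Your form would need a separate argument that the higher-order cross-terms do not spoil the recovery in stage~2, and you have not supplied one. (Your MDS argument via Schwartz--Zippel is fine in principle and does not depend on this issue, though it sacrifices the paper's explicitness; the paper argues determinism by taking $\psi$ to generate $\B$ over $\Lf$ with extension degree exceeding $(r-1)\ell$, so the determinant polynomial, which has degree at most $(r-1)\ell$, cannot vanish at $\psi$.)

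Second, the repair scheme is the real gap. You describe using $\tau$ per-direction slices $\Uc_i^{(j)}$ and then ``picking for each $(i, i')$ the direction that minimizes its contribution (or equivalently by averaging).'' This does not describe a single coherent repair matrix $S_i$: the set of symbols downloaded from node $i'$ must be consistent with one fixed choice of slice for node $i$, not a per-pair choice. The paper instead fixes \emph{one} direction $a = \overline{v^{\ast}}^{\{\tau\}}$ determined by the failed node's second index, recovers the slice $\{\xv_{\{a, u^{\ast}\}}\}$ via the {\rm Type~I} constraints (downloading $(n-1)r^{\tau-1}$ symbols), and then uses {\rm Type~II} constraints, observing that after stage~1 all coupling terms from nodes in a \emph{different} direction class are already known (part (b-II-2)). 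Only nodes $(u^{\ast}, v)$ with $\overline{v}^{\{\tau\}} = a$ contribute new symbols in stage~2, and there are at most $\lfloor s/\tau \rfloor$ such $v$, giving the $\frac{1}{\tau}(n-1)r^{\tau-1}$ bound cleanly. Your ``collision count'' heuristic does not produce this bound and does not explain where the constraint $\tau \leq \ceilb{n/(n-k)} - 1$ enters; in the paper that constraint is simply the requirement that the $\tau$ direction classes partitioning $[s]$ are all non-empty (and, for $\tau = s$, a stronger bound $a = 1$ holds so the $(1+1/\tau)$ expression is not tight). To close the gap you would need to specify a single repair matrix per failed node, prove the stage-2 constraints are solvable after stage~1, and account exactly for which coupling symbols are already known, as the paper does in Section~\ref{subsec:repair_t}.
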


We present our construction for $\tau =1$, which gives $(a = 2, \ell = n-k, t = n-1)$-exact-repairable MDS codes in Section~\ref{sec:twiceRB}. This construction conveys the main ideas behind our approach and establishes Theorem~\ref{thm:main_tau} for $\tau = 1$. The general construction which establishes Theorem~\ref{thm:main_tau} for all values of $\tau$ is presented in Section~\ref{sec:RB_t}.

\begin{remark}
\label{rem:t_opt}
We note that for a given value of $\tau$, $\big(1 + {1}/{\tau}\big)$ only serves as a clean upper bound on the repair bandwidth of the designed codes. Specifically, if we substitute $\tau = \ceilbb{{n}/{(n-k)}}$ in the general construction (cf.~Section~\ref{sec:RB_t}), we obtain $\big(a = 1, \ell = (n - k)^{\ceilbb{{n}/{(n-k)}}}, t = n-1\big)$-exact-repairable MDS codes, which are MSR codes (cf.~Remark~\ref{rem:near_opt1}). In fact, in this case our construction specializes to the construction from \cite{SAK15}.
\end{remark}

As defined, an $(a, \ell, t)$-exact-repairable MDS code ensures that the total amount of information downloaded from the $t$ code blocks participating in a node repair is comparable to the cut-set bound. However, it is possible that some of these $t$ code blocks may have to contribute significantly large amount of information as compared to the other participating code blocks. Depending on the underlying system architecture, this may not be desirable in some settings. This motivates us to explore a sub-family of $(a, \ell, t)$-exact-repairable MDS codes which also ensures load-balancing among all the contacted blocks during the node repair. In particular, we require that none of the contacted code blocks have to contribute disproportionately large amount of information as compared to the other participating code blocks. 

We note that the MSR codes~\cite{dimakis} ensure that for every $i \in [n]$, it is possible to repair the $i$-th code block by downloading exactly $\frac{\ell}{t - k + 1}$ symbols of $\B$ from each of the $t$ (out of $n-1$) intact nodes. Motivated by this load-balancing property, we now formally define an interesting sub-family of $(a, \ell, t)$-exact-repairable codes. Since the codes from this sub-family have the desirable load-balancing property, we refer to these codes as $\epsilon$-MSR codes.

\begin{definition}{\bf ($\epsilon$-MSR code):~}
\label{def:eMSR}
Let $\epsilon>0$ and $\Cc$ be an $[n, k\ell, d_{\min} = n - k + 1, \ell]_{\B}$ MDS code. We say that $\Cc$ is an $(n, k, t, \ell)_{\B}$ $\epsilon$-MSR code (or simply an $\epsilon$-MSR code) if, for every $i \in [n]$, there is a repair scheme to repair the $i$-th code block $\cv_i$ with 
\begin{align*}
\beta_{j, i} \leq (1 + \epsilon)\cdot \frac{\ell}{t - k + 1}~\text{symbols (over $\B$)}~~\forall~j \in \Rc \subseteq [n]\backslash\{i\}~\text{s.t.}~|\Rc| = t.
\end{align*}
Here, $\beta_{j, i}$ denotes the number of symbols that the code block $\cv_j$ contributes during the repair of the code block $\cv_i$.
\end{definition}
\begin{remark}
Note that an $(n, k, t, \ell)_{\B}$ $\epsilon$-MSR code is also an $(a = 1 + \epsilon, \ell, t)$-exact-repairable MDS code (cf.~Definition~\ref{def:ApproxMSR}). Moreover, one can also notice that an $(n, k, t, \ell)_{\B}$ $\epsilon$-MSR code with $\epsilon = 0$ is an MSR code~\cite{dimakis}. Therefore, we simply refer to such a code as an $(n, k, t, \ell)_{\B}$ MSR code.
\end{remark}

In this paper, we present a general approach for constructing $\epsilon$-MSR codes with small sub-packetization level. The approach utilizes two codes: 1) an MSR code with large sub-packetization level and 2) a code with large enough minimum distance. We describe this general approach in Section~\ref{sec:gen_approach}. The exact dependence among the parameters of the obtained $\epsilon$-MSR code depends on the particular choice of two codes utilized to employ our general approach. In Section~\ref{sec:long_mds}, we work with the specific MSR codes of~\cite{YeB16a} to present explicit instances of $\epsilon$-MSR codes. Additionally, using the codes that operate at the GV curve~\cite{MacSlo} as the codes with large minimum distance, we establish the following result.

\begin{theorem}
\label{thm:param_main}
For any $\epsilon>0$ and a positive integer $r$, there exists a constant $s=s(r,\epsilon)>0$ such that for infinite values of $\ell$ there exists an $\big(n = \Omega(\exp(s\ell)), k = n - r, t = n - 1, \ell\big)_{\B}$ $\epsilon$-MSR code. Furthermore, the required field size $|\B|$ scales as $O(n)$. 
\end{theorem}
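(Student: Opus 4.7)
The plan is to apply the general construction of Section~\ref{sec:gen_approach} with two carefully chosen component codes. The first component is the Ye--Barg MSR code from~\cite{YeB16a} with $r$ parity blocks, all-node repair ($t_0 = n_0 - 1$), and sub-packetization $\ell = r^{\lceil n_0/r\rceil}$. Restricting $n_0$ to multiples of $r$ makes $\ell$ take the infinite sequence of values $r^1, r^2, r^3, \ldots$, which supplies the ``infinite values of $\ell$'' asserted in the theorem. Inverting this relation gives $n_0 = \Theta(r \log_r \ell)$, and the short MSR code works over a finite field of size $O(n_0)$, which will be absorbed into the $O(n)$ field-size bound for the final code.

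The second component is an outer code with block length $n$, constant rate $R > 0$, and constant relative distance $\delta > 0$, where the pair $(R,\delta)$ is determined by the hypotheses of the general construction as a function of $r$ and $\epsilon$. Intuitively, the slack $\epsilon$ in the $\epsilon$-MSR definition governs how many ``expensive'' helper nodes can be tolerated in any stripe, and this in turn prescribes a minimum fractional distance the outer code must satisfy. The Gilbert--Varshamov bound guarantees that, for every sufficiently large $n$, there is an $[n,Rn,\delta n]$ code over an alphabet whose size depends only on $\delta$; taking $\B$ to be a linear-sized extension of this alphabet (or using an algebraic-geometry analogue of GV) keeps the overall field size at $O(n)$. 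Feeding the short MSR code and the GV-type outer code into the construction of Section~\ref{sec:gen_approach} then yields an $(n, n-r, n-1, \ell)_{\B}$ $\epsilon$-MSR code.

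It remains to extract the scaling $n = \Omega(\exp(s\ell))$. Since fixing $R$ and $\delta$ places no upper bound on the length of the outer code, $n$ may be chosen arbitrarily independently of $n_0 = \Theta(r\log_r \ell)$; in particular, one may select $n = \lfloor \exp(s\ell)\rfloor$ for a sufficiently small $s = s(r,\epsilon) > 0$ and still have the outer code exist. This gives the doubly-exponential saving over classical MSR codes, whose sub-packetization scales like $r^{n/r}$ rather than like $\log n$.

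The main obstacle, I expect, will be isolating precisely how $R$ and $\delta$ depend on $\epsilon$ and $r$ in the general construction, and then verifying that the GV rate $1 - H_{q_0}(\delta)$ is strictly positive for the required relative distance. Once this is done, the constant $s$ arises essentially as $s = c(r) \cdot (1 - H_{q_0}(\delta(r,\epsilon)))$, and the remaining bookkeeping --- namely that the sub-packetization of the combined code inherits exactly the value $\ell$ from the short MSR ingredient, and that no hidden factor of $n$ leaks into $\ell$ or into $|\B|$ during the combining step --- follows directly from the modular form of the construction in Section~\ref{sec:gen_approach}.
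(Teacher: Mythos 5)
Your proposal misreads what Construction~\ref{cons:generic} outputs, and that misreading unravels the whole scaling argument. In the construction, the outer code $\Cc^{\rm II}$ has block length $N$ and $M$ codewords, and the combined code $\Cc = \Cc^{\rm II}\circ\Cc^{\rm I}$ has length $\Nc = M$ (each \emph{codeword} of $\Cc^{\rm II}$ indexes one thick column) and sub-packetization $l = N\ell_{\rm inner}$ (Theorem~\ref{thm:main}). You instead treat the outer code's block length as the final code length and treat the inner MSR code's sub-packetization $\ell_{\rm inner}$ as the final sub-packetization, as if the composition passed $\ell_{\rm inner}$ through unchanged. It does not.

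Because of this, neither claimed scaling follows from your argument. If you fix the inner code and choose the outer block length to be $N = \lfloor\exp(s\ell_{\rm inner})\rfloor$, the final sub-packetization becomes $N\ell_{\rm inner}\approx\exp(s\ell_{\rm inner})\ell_{\rm inner}$, not $\ell_{\rm inner}$, and the final code length becomes $M\approx q^{NR}$, which is doubly exponential in $\ell_{\rm inner}$. And your proposed source of ``infinite values of $\ell$'' --- varying the inner code length $n_0$ so that $\ell_{\rm inner}$ ranges over powers of $r$ --- would in fact only yield $n$ polylogarithmic in $\ell$: if $N$ is held fixed then $l = N r^{n_0}$, while $\Nc = q^{NR}$ with $q\le n_0 = \Theta(\log_r l)$, so $\Nc$ grows only like $(\log l)^{O(1)}$, which is the MSR regime, not the claimed exponential regime.

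The correct mechanism (and the paper's) is the opposite allocation: fix $q$ (hence the inner Ye--Barg code and $\ell_{\rm inner} = r^q$) as a function of $\epsilon$ and $r$, choose $\Cc^{\rm II}$ on the GV curve with relative distance $\delta^\ast = 1 - \epsilon/(r-1)$ and positive rate $R$, and let the outer block length $N$ range over all sufficiently large integers. Then the final code length $\Nc = q^{NR}$ is exponential in $N$, while the final sub-packetization $l = N r^q$ is only linear in $N$; eliminating $N$ gives $\Nc = q^{(R/ r^q)\,l} = \Omega(\exp(s l))$ with $s = s(r,\epsilon)$. The infinite family of admissible $l$ values and the $O(\Nc)$ field size both come from letting $N$ grow, not from letting $n_0$ grow.
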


It follows from Theorem~\ref{thm:param_main} that for the setting with constant number of parity code blocks, i.e., $r = \Theta(1)$, there exist $\epsilon$-MSR codes with the sub-packetization level which scales as $O(\log n)$. Finally, we focus on the issue of characterizing the length of the longest $\epsilon$-MSR code for the given rate and sub-packetization level. Alternatively, we explore the minimum sub-packetization level that is necessary to realize an $\epsilon$-MSR code with the given code length $n$ and number of parity code blocks $r$. In Section~\ref{sec:converse}, we establish the following result in this direction.
\begin{theorem}
\label{thm:necessary}
In an $(n,k = n - r, t = n - 1, \ell)_{\B}$ linear $\epsilon-$MSR code, the number of nodes $n$ is upper bounded by
$(r\ell)^{\frac{\ell}{r}(1+\epsilon)+1}$.  
\end{theorem}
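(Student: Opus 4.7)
The plan is to reformulate the hypothesis in the parity-check framework of Proposition~\ref{prop:repair} and then derive the bound via a signature-counting argument. Set $b := \floorb{(1+\epsilon)\ell/r}$, so the target inequality reads $n \leq (r\ell)^{b+1}$. For each node $i \in [n]$, the linear $\epsilon$-MSR hypothesis supplies a repair matrix $S_i \in \B^{\ell \times r\ell}$ with $\rank(S_i \Hm_i) = \ell$ and $\rank(S_i \Hm_j) \leq b$ for every $j \neq i$. Writing $U_i \subseteq \B^{r\ell}$ for the $\ell$-dimensional row space of $S_i$ and $C_j \subseteq \B^{r\ell}$ for the $\ell$-dimensional column span of $\Hm_j$, these conditions translate to $U_i \cap C_i^{\perp} = \{0\}$ and $\dim(U_i \cap C_j^{\perp}) \geq \ell - b$ for $j \neq i$, while the MDS property states that any $r$ of the $C_j$'s span $\B^{r\ell}$ (equivalently, any $r$ of the $C_j^{\perp}$'s intersect trivially).

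The key step is to inject $[n]$ into a combinatorial set of size at most $(r\ell)^{b+1}$. Notably, the target bound is independent of the field size $|\B|$, which strongly suggests a purely combinatorial signature rather than one counting vectors or subspaces. I would assign to each $j$ an ordered tuple $\sigma(j) \in [r\ell]^{b+1}$ built as follows: after fixing a basis of $\B^{r\ell}$, record $b+1$ ``witness'' row indices of $\Hm_j$ which capture the orientation of $C_j$ relative to a canonical reference subspace determined by the other nodes' repair matrices. The bandwidth bound enters precisely in controlling this length: since $\dim(U_i \cap C_j^{\perp}) \geq \ell - b$, at most $b$ effective coordinates of $C_j$ vary across nodes within the relevant projection, so a single additional pivot index on top of these $b$ suffices to pin down the node.

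The heart of the argument, and the main obstacle, is proving that $\sigma$ is injective. I would proceed by contradiction: given $\sigma(j) = \sigma(j')$ for $j \neq j'$, the shared signature pins down a common subspace sitting inside both $C_j^{\perp}$ and $C_{j'}^{\perp}$; by combining this subspace with the repair-induced almost-containments $U_i \cap C_{j_s}^{\perp}$ for $r-2$ further indices $j_3, \ldots, j_r$, one should be able to produce a nonzero vector of $C_{j_1}^{\perp} \cap \cdots \cap C_{j_r}^{\perp}$, contradicting the dual form of the MDS condition. The delicate technical part will be calibrating the signature so that collisions really do yield such vectors: the interplay between the $\ell - b$ intersection bound and the $r$-wise MDS property must be used in tandem, and the signature length must be exactly $b+1$ rather than $b$ (one index accounts for the ``extra'' dimension of $C_j$ not absorbed by the intersection $U_i \cap C_j^{\perp}$). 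Once injectivity is in hand, the counting bound $n \leq |\sigma([n])| \leq (r\ell)^{b+1} = (r\ell)^{(1+\epsilon)\ell/r + 1}$ follows immediately, completing the proof.
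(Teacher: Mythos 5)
Your proposal identifies the right starting data --- repair matrices $S_i$ with $\rank(S_i\Hm_i)=\ell$ and $\rank(S_i\Hm_j)\leq(1+\epsilon)\ell/r$ for $j\neq i$ --- and correctly anticipates that the bound must be field-size independent, but it then leaves the central construction unbuilt. You never actually define the map $\sigma$: ``record $b+1$ witness row indices \ldots which capture the orientation of $C_j$ relative to a canonical reference subspace determined by the other nodes' repair matrices'' is a description of what $\sigma$ should achieve, not a construction of it, and you explicitly flag the injectivity proof as ``the heart of the argument, and the main obstacle'' without supplying it. As written, there is no proof here; there is a plan whose hardest step is acknowledged and deferred. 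Moreover, the injectivity strategy you sketch --- that a collision would produce a nonzero vector in $C_{j_1}^{\perp}\cap\cdots\cap C_{j_r}^{\perp}$, contradicting the MDS condition --- points in the wrong direction: the paper's proof never invokes the MDS property at all. The bound follows purely from the two rank conditions on the repair matrices, so any argument that leans on the $r$-wise spanning property is introducing a dependency that is not needed and would likely not close the gap cleanly.

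The paper's actual route is a polynomial method that sidesteps the need for a combinatorial signature while still giving a field-independent count. For each node $i$, since $S_i\Hm_i$ has full rank $\ell$, one can choose a $u\times u$ minor of it that is nonsingular, where $u=\frac{\ell}{r}(1+\epsilon)+1$; write this as a choice of column set $C_i$ and the first $u$ rows. Define $f_i(X)=\det\bigl((X\Hm_i)_{[u],C_i}\bigr)$ as a polynomial in the entries of an $\ell\times r\ell$ matrix of indeterminates $X$. Then $f_i(S_i)\neq 0$ by construction, while $f_i(S_j)=0$ for $j\neq i$ because $\rank(S_j\Hm_i)\leq u-1$ kills every $u\times u$ minor. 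This gives linear independence of $f_1,\ldots,f_n$ by the standard evaluation argument. Finally, each $f_i$ is a homogeneous polynomial of degree $u$ using one variable from each of rows $1,\ldots,u$ of $X$, so it lives in the span of the $(r\ell)^u$ monomials $\prod_{m=1}^u x_{m,j_m}$, yielding $n\leq(r\ell)^u$. Note that this count is over monomials, not over field elements or subspaces, which is why the bound is field-independent --- your intuition that field-independence forces a purely combinatorial injection was a red herring. If you want to pursue your signature route, the essential missing content is a concrete definition of $\sigma$ together with a self-contained injectivity lemma; and you should verify whether the $r$-wise MDS property is truly necessary to that lemma, given that the known proof does without it.
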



\section{Construction of $(2, n-k, n - 1)$-exact-repairable MDS Codes.}
\label{sec:twiceRB}

In this section, we present a construction of exact-repairable MDS codes for all values of $n$ and $k$. These codes have sub-packetization level $\ell = n-k$ and require $t = n-1$ code blocks during the repair process. Furthermore, the repair bandwidth of these codes is at most
$
2\big(\frac{n-1}{n - k}\big)\cdot \ell,
$
which is twice the cut-set bound (cf.~\eqref{eq:cut-set}). We first describe the construction. We then illustrate the repair-by-transfer scheme for the obtained codes in Section~\ref{subsec:repair2}. We argue the MDS property for the construction in Section~\ref{subsec:mdsness2}.

\begin{construction}
\label{subsec:construction2}
Let $r = n - k$. For ease of exposition, we assume that $r | n$ and $n = sr$. We partition the $n$ code blocks in $r = n-k$ groups of size $s$ each\footnote{For a setting where $r\nmid n$, we can partition the $n$ code blocks in $r = n - k$ groups, $n~({\rm mod}~r)$ groups with $\ceilbb{\frac{n}{r}}$ code blocks and the remaining groups with $\floorbb{\frac{n}{r}}$ code blocks. The rest of the construction can be easily modified to work in this case as well.}. This partitioning allows us to index each code block by a tuple $(u, v)$ where $u \in [r] = [n-k]$ and $v \in [s]$. In particular, for $i \in [n]$ the associated tuple $(u, v)$ satisfies
$
i = (u-1)s + v.
$
With this notation in place, for $(u, v) \in [r] \times [s]$, we denote the $\big((u-1)s + v\big)$-th code block as 
\begin{align*}
\cv_{(u-1)s + v} &= \cv_{(u, v)} \\
&\overset{(i)}{=} \big(c(1; (u, v)),\ldots, c(r; (u, v)) \big) \in \B^{r},
\end{align*}
where, for $x \in [r]$, $c(x; (u, v))$ denotes the $x$-th symbol (over $\B$) of the $\big((u-1)s + v\big)$-th code block. Note that $(i)$ follows from the fact that we have $\ell = r$.
\begin{figure*}[htbp]
\small
\begin{align*}
\Pm &= \left(\begin{array}{ccc|ccc?ccc|ccc ? ccc|ccc}
1 & 0 & 0 &1 & 0 & 0 &1 & 0 & 0 &1 & 0 & 0 &1 & 0 & 0 &1 & 0 & 0  \\ 
0 & 1 & 0 &0 & 1 & 0 &0 & 1 & 0 &0 & 1 & 0 &0 & 1 & 0 &0 & 1 & 0  \\ 
0 & 0 & 1 &0 & 0 & 1&0 & 0 & 1 &0 & 0 & 1 &0 & 0 & 1 &0 & 0 & 1  \\  \hline
\lambda_1 & {\psi} & 0 & \lambda_2 & {\psi} & 0 &\lambda_3 & 0 & 0 &\lambda_4 & 0 & 0 &\lambda_5 & 0 & 0 &\lambda_6 & 0 & 0  \\ 
0 & \lambda_1 & 0 & 0 & \lambda_2 & 0 & 0 &\lambda_3 & {\psi} & 0 &\lambda_4 & {\psi} & 0 &\lambda_5 & 0 & 0 &\lambda_6 & 0  \\  
0 & 0 & \lambda_1 & 0 & 0 & \lambda_2 & 0 & 0 &\lambda_3 & 0 & 0 &\lambda_4 & {\psi} & 0 &\lambda_5 & {\psi} & 0 &\lambda_6 \\  \hline
\lambda^2_1 & 0 & {\psi} & \lambda^2_2 & 0 & {\psi} &\lambda^2_3 & 0 & 0 &\lambda^2_4 & 0 & 0 &\lambda^2_5 & 0 & 0 &\lambda^2_6 & 0 & 0  \\ 
0& \lambda^2_1 & 0 & 0 & \lambda^2_2 & 0 & {\psi} &\lambda^2_3 & 0 & {\psi} &\lambda^2_4 & 0 & 0 &\lambda^2_5 & 0 & 0 &\lambda^2_6 & 0 \\ 
0 & 0 & \lambda^2_1 & 0 & 0 & \lambda^2_2 & 0 & 0 &\lambda^2_3 & 0 & 0 &\lambda^2_4 & 0 & {\psi} &\lambda^2_5 & 0 & {\psi} &\lambda^2_6  \\ 
\end{array} \right).
\end{align*}
\caption{The parity-check matrix obtained from the system parameter in Example~\ref{ex:ex1}.}
\label{fig:PP}
\end{figure*}
In order to construct an $[n, k\ell, d_{\min} = n- k + 1, \ell = r]_{\B}$ MDS code $\Cc$, we specify an $r\ell \times n\ell$ (or $r^2 \times nr$ for our choice of $\ell$) parity-check matrix $\Pm$ for the code $\Cc$. Let $\Lf$ be finite field of size at least $n + 1$ and $ \{\lambda_i\}_{i \in [n]}$ be $n$ distinct non-zero elements of $\Lf$. We take $\B$ to be an extension field of $\Lf$ such that the following two conditions hold.
\begin{itemize}
\item $\B$ is a simple extension of $\Lf$ which is generated by an element $\psi \in \B$, i.e., $\B = \Lf(\psi)$.
\item The degree of extension $[\B : \Lf]$ is at least $(r  - 1)\ell + 1$. 
\end{itemize}
We classify the linear constraints defined by the parity-check matrix $\Pm$ into two types.
\begin{itemize}
\item {\bf {\rm Type~I} constraints:}~We have $r$ {\rm Type~I} constraints which are defined by the first $r$ rows of the matrix $\Pm$. For every $x \in [r]$, we have
\begin{align}
\label{eq:type1}
\sum_{(u,v) \in [r]\times [s]}c(x; (u,v)) = 0.
\end{align}
In Example~\ref{ex:ex1} below, the {\rm Type~I} constraints correspond to the identity blocks of the matrix $\Pm$ (cf.~Figure~\ref{fig:PP}).
\item {\bf {\rm Type~II} constraints:}~We have $(r-1)\ell = (r-1)r$ {\rm Type~II} constraints which are defined as follows. For every $p \in \{1,\ldots, r - 1\}$ and $x \in [r]$, we have 
\begin{align}
\label{eq:type2}
\underbrace{\sum_{(u, v) \in [r]\times [s]}\lambda^p_{(u-1)s + v} \cdot c(x; (u,v))}_{\text{(a)}} +\underbrace{\sum_{v \in [s]}\psi \cdot c(\overline{x + p}; (x, v))}_{\text{(b)}} &= 0,
\end{align}
where for a strictly positive integer $e$, the quantity $\overline{e}$ is defined as follows.
\begin{align}
\label{eq:mod_def}
\overline{e} = \begin{cases}
r &\mbox{if}~e~({\rm mod}~r) = 0, \\
e~({\rm mod}~r) &\mbox{otherwise}.
\end{cases}
\end{align}
We can partition the {\rm Type~II} constraints (cf.~\eqref{eq:type2}) into $(r - 1)$ groups (each group containing $\ell = r$ linear constraints) according to the value of $p \in \{1,\ldots, r-1\}$. In particular, $r$ constraints associated with the same value of $p$ constitute those $r$ rows of the parity-check matrix $\Pm$ which are indexed by the set $\{pr + 1,\ldots, (p+1)r\}$. (See the non-identity blocks of the matrix $\Pm$ in Figure~\ref{fig:PP}.)
\end{itemize}
\end{construction}

\begin{example}
\label{ex:ex1}
We illustrate the construction with an example. Assume that $n = 6$ and $k = 3$, i.e., $n - k = r = 3$. For these values of the system parameters, our $9 \times 18$ parity-check matrix takes the form illustrated in Figure~\ref{fig:PP}. 

The matrix $\Pm$ can be viewed as the perturbation of the block matrix  $\Hm$ which is obtained by replacing all $\psi$ entries in $\Pm$ with zeros. In particular, we can rewrite the matrix $\Pm$ as 
$$
\Pm = \Hm + \Em^{\psi},
$$
where $\Em^{\psi}$ denotes the $9 \times 18$ matrix which contains all the $\psi$ entries in $\Pm$ (cf.~Figure~\ref{fig:PP}) as its only non-zero entries. (See Figure~\ref{fig:PHE}.) Note that the block matrix $\Hm$ (with diagonal blocks) is a parity-check matrix of an $[n = 6, k\ell = 9, d_{\min} = 4, \ell = 3]_{\B}$ MDS code. Here, we also point out that the matrix $\Hm$ is defined by {\rm Type~I} constraints  (cf.~\eqref{eq:type1}) and the part $(a)$ of the {\rm Type~II} constraints (cf.~\eqref{eq:type2}). Similarly, the perturbation matrix $\Em^{\psi}$ is defined by the part $(b)$ of the {\rm Type~II} constraints (cf.~\eqref{eq:type2}).
\end{example}

\begin{figure*}[htbp]
\small
\begin{align}
\Hm &= \left(\begin{array}{ccc|ccc|ccc|ccc|ccc|ccc}
1 & 0 & 0 &1 & 0 & 0 &1 & 0 & 0 &1 & 0 & 0 &1 & 0 & 0 &1 & 0 & 0  \\ 
0 & 1 & 0 &0 & 1 & 0 &0 & 1 & 0 &0 & 1 & 0 &0 & 1 & 0 &0 & 1 & 0  \\ 
0 & 0 & 1 &0 & 0 & 1&0 & 0 & 1 &0 & 0 & 1 &0 & 0 & 1 &0 & 0 & 1  \\  \hline
\lambda_1 & 0 & 0 & \lambda_2 & 0 & 0 &\lambda_3 & 0 & 0 &\lambda_4 & 0 & 0 &\lambda_5 & 0 & 0 &\lambda_6 & 0 & 0  \\ 
0 & \lambda_1 & 0 & 0 & \lambda_2 & 0 & 0 &\lambda_3 & 0 & 0 &\lambda_4 & 0 & 0 &\lambda_5 & 0 & 0 &\lambda_6 & 0  \\  
0 & 0 & \lambda_1 & 0 & 0 & \lambda_2 & 0 & 0 &\lambda_3 & 0 & 0 &\lambda_4 & 0 & 0 &\lambda_5 & 0 & 0 &\lambda_6 \\  \hline
\lambda^2_1 & 0 & 0 & \lambda^2_2 & 0 & 0 &\lambda^2_3 & 0 & 0 &\lambda^2_4 & 0 & 0 &\lambda^2_5 & 0 & 0 &\lambda^2_6 & 0 & 0  \\ 
0& \lambda^2_1 & 0 & 0 & \lambda^2_2 & 0 & 0 &\lambda^2_3 & 0 & 0 &\lambda^2_4 & 0 & 0 &\lambda^2_5 & 0 & 0 &\lambda^2_6 & 0 \\ 
0 & 0 & \lambda^2_1 & 0 & 0 & \lambda^2_2 & 0 & 0 &\lambda^2_3 & 0 & 0 &\lambda^2_4 & 0 & 0 &\lambda^2_5 & 0 & 0 &\lambda^2_6  \\
\end{array} \right)  \nonumber
\end{align} 
\begin{align}
\Em^{\psi} &= \left(\begin{array}{ccc|ccc|ccc|ccc|ccc|ccc}
0 & 0 & 0 &0 & 0 & 0 & 0 & 0 & 0 & 0 & 0 & 0 & 0 & 0 & 0 &0 & 0 & 0  \\ 
0 & 0 & 0 &0 & 0 & 0 &0 & 0 & 0 &0 & 0 & 0 &0 & 0 & 0 &0 & 0 & 0  \\ 
0 & 0 & 0 &0 & 0 & 0&0 & 0 & 0 &0 & 0 & 0 &0 & 0 & 0 &0 & 0 & 0  \\  \hline
0 & {\psi} & 0 & 0 & {\psi} & 0 & 0 & 0 & 0 & 0 & 0 & 0 &0 & 0 & 0 & 0 & 0 & 0  \\ 
0 & 0 & 0 & 0 & 0 & 0 & 0 & 0 & {\psi} & 0 & 0 & {\psi} & 0 & 0 & 0 & 0 & 0 & 0  \\  
0 & 0 & 0 & 0 & 0 & 0 & 0 & 0 & 0 & 0 & 0 & 0 & {\psi} & 0 & 0 & {\psi} & 0 & 0 \\  \hline
0 & 0 & {\psi} & 0 & 0 & {\psi} & 0 & 0 & 0 & 0 & 0 & 0 & 0 & 0 & 0 & 0 & 0 & 0  \\ 
0& 0 & 0 & 0 & 0 & 0 & {\psi} & 0 & 0 & {\psi} & 0 & 0 & 0 & 0 & 0 & 0 & 0 & 0 \\ 
0 & 0 & 0 & 0 & 0 & 0 & 0 & 0 & 0 & 0 & 0 & 0 & 0 & {\psi} & 0 & 0 & {\psi} & 0 \\ 
\end{array} \right)  \nonumber
\end{align}
\caption{Illustration of matrices $\Hm$ and $\Em^{\psi}$ in Example~\ref{ex:ex1}.} 
\label{fig:PHE}
\end{figure*}

\subsection{Exact repair of a code block.}
\label{subsec:repair2}

Let $(u^{\ast}, v^{\ast}) \in [r] \times [s]$ be the tuple associated with the code block to be repaired. Note that we need to reconstruct the $r$ symbols
$\big\{c(1;(u^{\ast}, v^{\ast})), c(2;(u^{\ast}, v^{\ast})),\ldots, c(r; (u^{\ast}, v^{\ast}))\big\}.$
We divide the repair process in the following two stages.
\begin{enumerate}
\item First, we recover the symbol $c(u^{\ast}; (u^{\ast}, v^{\ast}))$ using the {\rm Type~I} constraint containing it (cf.~\eqref{eq:type1}), i.e., 
\begin{align}
\label{eq:type1a2}
\sum_{(u,v) \in [r]\times [s]}c(u^{\ast}; (u,v)) = 0.
\end{align}
We download the $n-1$ symbols $$\big\{c(u^{\ast}; (u, v))~:~(u, v) \in [r] \times [s]~\text{s.t.}~(u, v) \neq (u^{\ast}, v^{\ast})\big\}$$ from the remaining $n - 1$ code blocks in this stage.
\item Next, we sequentially recover the $r-1$ symbols 
\begin{align}
\big\{c(u;(u^{\ast}, v^{\ast}))\big\}_{u \in [r]~\text{s.t.}~u \neq u^{\ast}}
\end{align} 
using the following $r-1$ {\rm Type~II} constraints (cf.~\eqref{eq:type2}).
\begin{align}
\label{eq:type2a2}
\underbrace{\sum_{({u}, {v}) \in [r]\times [s]}\lambda^p_{({u}-1)s + {v}} \cdot c(u^{\ast}; ({u},{v}))}_{\text{(a)}} + \underbrace{\sum_{{v} \in [s]}\psi \cdot c(\overline{u^{\ast}+p}; (u^{\ast}, {v}))}_{\text{(b)}} = 0,
\end{align}
where $p \in \{1,\ldots, r-1\}$. Note that the choice of {\rm Type~I} constraint used in the previous stage ensures that we now know all the values of the linear combinations in part (a) of these {\rm Type~II} linear constraints. Now assuming that $p \in \{1,\ldots, r-1\}$ is such that $\overline{u^{\ast} + p} = \hat{u} \in [r] \backslash \{u^{\ast}\}$, by downloading the additional $s - 1 = \frac{n}{r} - 1$ symbols
$
\big\{c(\hat{u};(u^{\ast}, {v})~:~{v} \in [s]~\text{s.t.}~{v} \neq v^{\ast}\big\}
$
which appear in the part $(b)$ of the linear constraint associated with the underlying value of $p$, we can recover the desired symbol $c(\hat{u};(u^{\ast}, v^{\ast}))$. Thus, the entire second stage involves downloading the following number of symbols (in addition to the symbols downloaded in the first stage).
\begin{align}
(r-1)(s-1) &= (r-1)\left({n}/{r} - 1\right) \leq r \left({n}/{r} - 1\right) = n - r \leq n - 1. \nonumber
\end{align}
\end{enumerate}
Note that the entire repair-by-transfer scheme described above downloads at most 
$
2(n-1) = 2\left(\frac{n-1}{n-k}\right)\cdot \ell
$
symbols of $\B$, which is twice the cut-set bound (cf.~\eqref{eq:cut-set}). 

\subsection{MDS property of the proposed codes.}
\label{subsec:mdsness2}

Next, we argue that Construction~\ref{subsec:construction2} gives us MDS linear array codes. This is equivalent to showing that for every $\Sc \subseteq [n]$ such that $|\Sc| = r = n-k$, the $r\ell \times r\ell$ sub-matrix $\Pm_{\Sc}$ of the parity-check matrix $\Pm$ (cf.~\eqref{eq:ParitySub}) is full rank. 

\begin{proposition}
\label{prop:perturbedMDS}
For given integers $n$ and $k$ such that $n > k$, the code obtained by Construction~\ref{subsec:construction2} is an MDS code.
\end{proposition}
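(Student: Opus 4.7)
The plan is to view $\det(\Pm_{\Sc})$ as a polynomial in $\psi$ over the subfield $\Lf$ and exploit the fact that $[\B:\Lf]\geq (r-1)\ell + 1$ to conclude non-vanishing. First, I would decompose $\Pm = \Hm + \Em^{\psi}$ just as in Example~\ref{ex:ex1}, where $\Hm$ collects the $\Lf$-valued part (the Type~I identity blocks together with part~(a) of the Type~II constraints) and $\Em^{\psi}$ collects the $\psi$-valued part (part~(b) of the Type~II constraints). Because each entry of $\Pm$ is either an element of $\Lf$ or is exactly $\psi$, the Leibniz expansion gives
\begin{align*}
\det(\Pm_{\Sc}) = f(\psi)~~\text{for some}~~f(X) \in \Lf[X].
\end{align*}

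Next, I would bound $\deg f$ by counting $\psi$-entries column-by-column. Inspecting part~(b) of \eqref{eq:type2}, the $\psi$-entries in the thick column indexed by $(u,v)$ come precisely from the Type~II rows labelled $(p, x=u)$ for $p=1,\ldots,r-1$, and each such row contributes a single $\psi$ in that thick column (at within-block position $\overline{u+p}$, which differs from the part-(a) position $u$ since $p\not\equiv 0\pmod r$). Hence each thick column of $\Pm$ carries exactly $r-1$ copies of $\psi$, so $\Pm_{\Sc}$ contains at most $r(r-1) = (r-1)\ell$ such entries, giving $\deg f \leq (r-1)\ell$.

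Then, I would show $f$ is not identically zero by evaluating its constant term. Setting $\psi = 0$ kills $\Em^{\psi}$, so $f(0) = \det(\Hm_{\Sc})$. The matrix $\Hm$ has the tensor-product form $\Hm = V \otimes \Id_{\ell}$, where $V\in \Lf^{r\times n}$ is the Vandermonde matrix with $V_{p,i} = \lambda_i^{p}$ for $p\in[0:r-1]$ and $i\in[n]$; restricting to $\Sc$ yields $\Hm_{\Sc} = V_{\Sc}\otimes \Id_{\ell}$. Since $\{\lambda_i\}_{i\in\Sc}$ are $r$ distinct nonzero elements of $\Lf$, the Vandermonde minor $\det(V_{\Sc})$ is nonzero, and therefore $\det(\Hm_{\Sc}) = \det(V_{\Sc})^{\ell}\neq 0$.

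Combining the two: $f\in\Lf[X]$ is a nonzero polynomial of degree at most $(r-1)\ell$, while the minimal polynomial of $\psi$ over $\Lf$ has degree $[\B:\Lf]\geq (r-1)\ell + 1$. Hence $f(\psi)\neq 0$ in $\B$, so $\Pm_{\Sc}$ is full rank for every $\Sc\subseteq[n]$ of size $r = n-k$, establishing the MDS property. The delicate point is the column-by-column degree count: its correctness hinges on the noncollision observation $\overline{x+p}\neq x$ for $p\in\{1,\ldots,r-1\}$, which ensures the $\psi$-entries never overlap the part-(a) $\lambda$-entries and thus that the bound $r-1$ per thick column is actually attained by the structure rather than accidentally doubled.
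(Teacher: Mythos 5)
Your proof is correct and follows the same route as the paper: decompose $\Pm = \Hm + \Em^{\psi}$, view $\det(\Pm_{\Sc})$ as a polynomial $f\in\Lf[X]$ evaluated at $\psi$, bound $\deg f \leq (r-1)\ell$, observe $f(0) = \det(\Hm_{\Sc}) \neq 0$, and conclude from $[\B:\Lf] > (r-1)\ell$ that $f(\psi)\neq 0$. The only differences are that you fill in two details the paper leaves implicit: the column-by-column count of $\psi$-entries justifying the degree bound, and the explicit tensor decomposition $\Hm_{\Sc} = V_{\Sc}\otimes\Id_\ell$ with $\det(\Hm_{\Sc}) = \det(V_{\Sc})^\ell \neq 0$, whereas the paper simply asserts that $\Hm$ is the parity-check matrix of an MDS code.
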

\begin{proof}
See Appendix~\ref{appen:perturbedMDS}.
\end{proof}

It follows from Proposition~\ref{prop:perturbedMDS} that Construction~\ref{subsec:construction2} is a fully explicit construction of the exact-repairable MDS codes. However, this construction requires the size of the field $\B$ to be quite large. In particular, we need to have $|\B| \gg n^{(r - 1)\ell + 1}$. In contrast, by using a random selection of element $\psi$ in Construction~\ref{subsec:construction2} and employing standard techniques~\cite{SAK15}, one can show that a field of size $O(n^rr\ell)$ suffices. We note that even though this alternative approach requires a slightly smaller field, it does not give us a fully explicit construction.


\section{Construction of $(1 + \frac{1}{\tau}, (n-k)^{\tau}, n - 1)$-exact-repairable MDS Codes.}
\label{sec:RB_t}

In this section, we generalize the construction presented in Section~\ref{sec:twiceRB}. A design parameter $\tau$ allows us to increase the sub-packetization level $\ell$ in order to decrease the repair bandwidth of the code. In particular, for the integer $1 \leq \tau \leq \ceilb{{n}/{r}} - 1 = \ceilb{{n}/{(n-k)}} - 1$, we design exact-repairable MDS codes with sub-packetization level $\ell = r^{\tau} = (n-k)^{\tau}$, $t = n - 1$ and repair bandwidth at most 
$$
\left(1 + \frac{1}{\tau}\right)\left(\frac{n-1}{n - k}\right)\cdot \ell~~\text{symbols (over $\B$)}.
$$ 
This repair bandwidth is at most $\big(1 + \frac{1}{\tau}\big)$ times the cut-set bound (cf.~\eqref{eq:cut-set}). 

\begin{construction}
\label{subsec:construction_t}
Similar to Construction~\ref{subsec:construction2}, for ease of exposition, we assume that $r | n$ and $n = sr = s(n-k)$. We partition the $n$ code blocks in $r = n-k$ groups of equal sizes with each group containing $s = \frac{n}{r} = \frac{n}{n-k}$ code blocks. Using this partition, we index each code block by a tuple $(u, v)$ where $u \in [r] = [n-k]$ and $v \in [s]$. In particular, for $i \in [n]$ the associated tuple $(u, v)$ satisfies
$
i = (u-1)s + v.
$
Furthermore, we index the $\ell = r^{\tau} = (n-k)^{\tau}$ symbols (over $\B$) in each code block by the $r^{\tau}$ distinct $\tau$-length vectors in $[r]^{\tau} = [n-k]^{\tau}$. For $(u, v) \in [r] \times [s]$, the $\big((u-1)s + v\big)$-th code block can be represented as follows.
\begin{align*}
\cv_{(u-1)s + v} &= \cv_{(u, v)} = \big\{c\big((x_1, \ldots, x_{\tau}); (u, v)\big) \big\}_{(x_1, \ldots, x_{\tau}) \in [r]^{\tau}}.
\end{align*}
Similarly to Construction~\ref{subsec:construction2}, let $\Lf$ be finite field of size at least $n + 1$ and $ \{\lambda_i\}_{i \in [n]}$ be $n$ distinct non-zero elements of $\Lf$. Furthermore, $\B$ is an extension field of $\Lf$ such that the following two conditions hold.
\begin{itemize}
\item $\B$ is a simple extension of $\Lf$ which is generated by an element $\psi \in \B$, i.e., $\B = \Lf(\psi)$.
\item The degree of extension $[\B : \Lf]$ is at least $(r  - 1)\ell + 1$. 
\end{itemize}
We are now ready to present our construction of an $\big(1 + \frac{1}{\tau}, \ell = r^{\tau}, d = n - 1\big)$-exact-repairable MDS code $\Cc$ by defining an $r\ell \times n\ell$ parity-check matrix $\Pm$ of the code $\Cc$. Specifically, we classify the $r\ell = r^{\tau+1}$ linear constraints defined by the parity-check matrix $\Pm$ into two types. 
\begin{itemize}
\item {\bf {\rm Type~I} constraints:}~We have $\ell = r^{\tau}$ {\rm Type~I} constraints which are defined by the first $\ell = r^{\tau}$ rows of the matrix $\Pm$. For every $(x_1, \ldots, x_{\tau}) \in [r]^{\tau}$, we have
\begin{align}
\label{eq:type1t}
\sum_{(u,v) \in [r]\times [s]}c\big((x_1,\ldots, x_{\tau}); (u, v)\big) = 0.
\end{align}
\item {\bf {\rm Type~II} constraints:}~We have $(r-1)\ell = (r-1)r^{\tau}$ {\rm Type~II} constraints. Recall that for strictly positive integers $e$ and $m$, the quantity $\overline{e}^{\{m\}}$ is defined as follows.
\begin{align}
\overline{e}^{\{m\}} = \begin{cases}
m &\mbox{if}~e~({\rm mod}~m) = 0, \\
e~({\rm mod}~m) &\mbox{otherwise}.
\end{cases}
\end{align} 
Assuming that $v \in [s] = \left[\frac{n}{r}\right]$ be such that $\overline{v}^{\{\tau\}} = a \in [\tau]$ and $p \in \{1, 2,\ldots, r-1\}$, we use $\overline{\xv}^{v, p} = \overline{(x_1,\ldots, x_{\tau})}^{v, p}$ to denote the vector obtained by modifying a single coordinate of the vector $\xv = (x_1,\ldots, x_{\tau})$ in the following manner. 
\begin{align}
&\overline{\xv}^{v, p} = \overline{(x_1,\ldots, x_{\tau})}^{v, p} \nonumber \\
&~~~~= (x_1,\ldots, x_{a-1}, \overline{x_{\overline{v}^{\{\tau\}}} + p}^{\{r\}},x_{a+1},\ldots, x_{\tau}) \nonumber \\
&~~~~= (x_1,\ldots, x_{a-1}, \overline{x_{a} + p}^{\{r\}},x_{a+1},\ldots, x_{\tau}). \nonumber 
\end{align}
For every $p \in \{1,\ldots, r - 1\}$ and $(x_1, \ldots, x_{\tau}) \in [r]^{\tau}$, we have an associated linear constraint in the parity-check matrix $\Pm$.
\begin{align}
\label{eq:type2t}
&\underbrace{\sum_{(u, v) \in [r]\times [s]}\lambda^p_{(u-1)s + v} \cdot c\big((x_1, \ldots, x_{\tau}); (u, v)\big)}_{\text{(a)}} + \underbrace{\sum_{v \in [s]}\psi \cdot c\big(\overline{(x_1, \ldots, x_{\tau})}^{v, p}; (x_{\overline{v}^{\{\tau\}}}, v)\big)}_{\text{(b)}} = 0. 
\end{align}
We can partition the {\rm Type~II} constraints (cf.~\eqref{eq:type2t}) into $(r - 1)$ groups (each group containing $\ell = r^{\tau}$ linear constraints) according to the value of $p \in \{1,\ldots, r-1\}$. In particular, $r^{\tau}$ constraints associated with the same value of $p$ constitute those $r^{\tau}$ rows of the parity-check matrix $\Pm$ that are indexed by the set $$\{pr^{\tau} + 1,\ldots, (p+1)r^{\tau}\} \subseteq [r\ell] = [r^{\tau+1}].$$
\end{itemize}
\end{construction}

\begin{example}
\label{ex:type2t}
In this example, we look at the composition of a {\rm Type~II} constraint (cf.~\ref{eq:type2t}) when $\tau=2$. We assume that $n = 9$ and $r = n - k = 3$. This implies that $s = \frac{n}{n-k} = 3$. For $(x_1, x_2) \in [r]^{\tau} =  [3]^2$ and $p = 1$ the associated {\rm Type~II} constraint takes the following form.
\begin{align}
&\sum_{(u, v) \in [3]\times [3]}\lambda_{(u-1)3 + v} \cdot c\big((x_1, x_2); (u, v)\big)~~+ \nonumber \\ 
&~~~~~~~~~~~~\psi \cdot \Big( c\big((\overline{x_1 + 1}^{\{3\}}, x_2); (x_1, 1)\big) + c\big((x_1, \overline{x_2 + 1}^{\{3\}}); (x_2, 2)\big) + c\big((\overline{x_1 + 1}^{\{3\}}, x_2); (x_1, 3)\big) \Big) = 0. \label{eq:type2t_ex}
\end{align}
Note that we have used the following equalities in \eqref{eq:type2t_ex} which hold for $\tau = 2$ and $s = \frac{n}{n-k} = 3$.
\begin{align*}
\overline{1}^{\{\tau = 2\}} = \overline{3}^{\{2\}} = 1~~\text{and}~~\overline{2}^{\{2\}} = 2.
\end{align*}
\end{example}

\subsection{Exact repair of failed code blocks in the proposed codes.}
\label{subsec:repair_t}

We now illustrate a mechanism to perform exact repair of a code block in the code obtained by Construction~\ref{subsec:construction_t}. Before we describe the repair mechanism, let's introduce some notation. For a vector $\xv = (x_1,\ldots, x_{\tau}) \in [r]^{\tau}$ and $a \in [\tau]$, we use 
$$\xv_{\backslash\{a\}} = (x_1,\ldots, x_{a-1}, x_{a+1},\ldots, x_{\tau}) \in [r]^{\tau-1}$$ 
to denote the vector obtained by puncturing the $a$-th coordinate of the vector $\xv$. Similarly, for a vector $\xv = (x_1, \ldots, x_{\tau}) \in [r]^{\tau}$, $a \in [\tau]$ and $u \in [r]$, we use
$$
\xv_{\{a, u\}} = (x_1,\ldots,x_{a-1}, x_{a} = u, x_{a+1},\ldots, x_{\tau}) \in [r]^{\tau}
$$
to denote the vector obtained by replacing the $a$-th coordinate of the vector $\xv$ by $u$. 

Let $(u^{\ast}, v^{\ast}) \in [r] \times [s]$ be the tuple associated with the code block to be repaired. Note that we need to reconstruct the following $\ell = r^{\tau}$ code symbols.
\begin{align}
\big\{c\big((x_1,\ldots, x_{\tau});(u^{\ast}, v^{\ast})\big)\big\}_{(x_1, \ldots, x_{\tau}) \in [r]^{\tau}}.
\end{align}
Similar to Section~\ref{subsec:repair2}, we divide the repair process in the following two stages.
\begin{enumerate}
\item In the first stage we utilize {\rm Type~I} constraints (cf.~\eqref{eq:type1t}) to recover the following $r^{\tau-1}$ symbols.
\begin{align}
\label{eq:stage1t}
\big\{c\big(\xv_{\{a, u^{\ast}\}};(u^{\ast}, v^{\ast})\big)\big\}_{\xv_{\backslash\{a\}} \in [r]^{\tau-1}},
\end{align} 
where $a = \overline{v^{\ast}}^{\{\tau\}}$. Recall that for $\xv = (x_1,\ldots, x_{t}) \in [r]^{\tau}$, the {\rm Type~I} constraint takes the following form.
\begin{align}
\label{eq:type1a}
\sum_{(u,v) \in [r]\times [s]}c\big((x_1, \ldots, x_{\tau})); (u,v)\big) = 0.
\end{align}
Therefore, in order to recover the $r^{\tau-1}$ symbols shown in \eqref{eq:stage1t} using these constraints, we download the following $(n-1)r^{\tau-1}$ symbols from the remaining $n-1$ code blocks. 
\begin{align}
\label{eq:repair_stage1}
&\big\{c\left(\xv_{\{a, u^{\ast}\}}; ({u}, {v})\right)\big\}_{\xv_{\backslash\{a\}} \in [r]^{\tau-1}, ({u}, {v}) \neq (u^{\ast}, v^{\ast})},
\end{align}
where $a = \overline{v^{\ast}}^{\{\tau\}}$. Recall that the tuples $(u, v)$ and $(u^{\ast}, v^{\ast})$ take values in the set $[r]\times [s]$.
\item At the end of the stage $1$ of the repair process, we have access to the following symbols which also include the $r^{\tau-1}$ symbols recovered in the stage $1$. 
\begin{align}
\label{eq:repair_stage1}
&\big\{c\left(\xv_{\{a, u^{\ast}\}};(u, v)\right)\big\}_{\xv_{\backslash\{a\}} \in [r]^{\tau-1}, ({u}, {v}) \in [r] \times [s]}.
\end{align}
In the stage $2$ of the repair process, we employ the {\rm Type~II} constraints to sequentially recover the remaining $(r-1)r^{\tau-1}$ symbols
\begin{align}
\label{eq:stage2t}
\big\{c\big(\xv_{\{a, u\}};(u^{\ast}, v^{\ast})\big)\big\}_{\xv_{\backslash\{a\}} \in [r]^{\tau-1}, u \neq u^{\ast}},
\end{align}
where $a = \overline{v^{\ast}}^{\{\tau\}}$. Recall that both $u$ and $u^{\ast}$ take values in the set $[r]$. Let $p \in \{1,2,\ldots, r-1\}$ be such that we have $\overline{u^{\ast} + p}^{\{r\}} = \hat{u} \in [r]\backslash\{u^{\ast}\}$. We utilize the following {\rm Type~II} constraint to repair the desired symbol $c\big(\xv_{\{a, \hat{u}\}};(u^{\ast}, v^{\ast})\big) = c\big((x_1,\ldots, x_{a-1}, \hat{u}, x_{a+1},\ldots, x_{\tau});(u^{\ast}, v^{\ast})\big).$
\begin{align}
&~~~~~~~~~~\underbrace{\sum_{({u}, {v}) \in [r]\times [s]}\lambda^p_{(u-1)s + v} \cdot c\big(\xv_{\{a, u^{\ast}\}}; ({u}, {v})\big)}_{\text{(a)}} + \nonumber \\
&\underbrace{\psi \cdot c\big(\xv_{\{a, \hat{u}\}}; (x_{a} = u^{\ast}, v^{\ast})\big)}_{\text{(b-I)}} + \underbrace{\sum_{v \in [s]~:~v \neq v^{\ast}}\psi \cdot c\big(\overline{\xv_{\{a, u^{\ast}\}}}^{v, p}; (x_{\overline{v}^{\{\tau\}}}, v)\big)}_{\text{(b-II)}} = 0. \label{eq:type2at}
\end{align}
It is straightforward to verify that at the end of the stage $1$ of the repair process, we know the value of the linear combination in the part (a) of this linear constraint (cf.~\eqref{eq:repair_stage1}). We now argue that we also know many of the symbols appearing in the part (b-II) of this constraint. Note that the part (b-II) can be rewritten as follows.
\begin{align}
&\sum_{v \in [s]~:~v \neq v^{\ast}}\psi \cdot c\big(\overline{\xv_{\{a, u^{\ast}\}}}^{v, p}; (x_{\overline{v}^{\{\tau\}}}, v)\big) \nonumber\\
&~~~~~~~~~= \underbrace{\sum_{v \neq v^{\ast}~:~\overline{v}^{\{\tau\}} =\overline{v^{\ast}}^{\{\tau\}} = a}\psi \cdot c\big(\xv_{\{a, \hat{u}\}}; (x_{a} = u^{\ast}, v)\big)}_{\text{(b-II-1)}} + \underbrace{\sum_{v \neq v^{\ast}~:~\overline{v}^{\{\tau\}} \neq \overline{v^{\ast}}^{\{\tau\}} = a}\psi \cdot c\big(\overline{\xv_{\{a, u^{\ast}\}}}^{v, p}; (x_{\overline{v}^{\{\tau\}}}, v)\big)}_{\text{(b-II-2)}}. \nonumber
\end{align}
Note that the code symbols appearing in part (b-II-2) are indexed by the vectors which have their $a$-th coordinate equal to $u^{\ast}$. One can verify that these symbols are already known at the end of the stage $1$ of the repair process (cf.~\eqref{eq:repair_stage1}). Therefore, in order to recover the desired symbol $$c\big((x_1,\ldots, x_{a-1}, \hat{u}, x_{a+1},\ldots, x_{\tau});(u^{\ast}, v^{\ast})\big)$$ using the linear constraint in \eqref{eq:type2at}, we need to only download the code symbols appearing in the part (b-II-1). Note that there are at most $\floorb{\frac{s}{\tau}}$ symbols in the part (b-II-1). Since we have to repair $(r-1)r^{\tau-1}$ symbols in the stage $2$ (cf.~\eqref{eq:stage2t}), the number of symbols that we download in the stage $2$ (in addition to the symbol downloaded in the stage $1$) is at most
\begin{align}
&(r-1)r^{\tau-1}\floorb{\frac{s}{\tau}} \leq (r-1)r^{\tau-1}\left({\frac{s}{\tau}}\right) = \frac{r^{\tau-1}}{\tau}\frac{r-1}{r}n \overset{(i)}{\leq}\frac{r^{\tau-1}}{\tau}(n-1). \nonumber
\end{align}
Here the step $(i)$ follows as, for $r = n- k \leq n$, we have $\frac{r-1}{r} \leq \frac{n-1}{n}$. Since we download $(n-1)r^{\tau-1}$ symbols during the stage $1$ of the repair process, the total repair bandwidth is at most 
\begin{align}
&(n-1)r^{\tau-1} + \frac{r^{\tau-1}}{\tau}(n-1) = \left(1+ \frac{1}{\tau}\right)(n-1)r^{\tau-1} =  \left(1+ \frac{1}{\tau}\right)\left(\frac{n-1}{n-k}\right)\cdot \ell, \nonumber
\end{align}
which is $\left(1 + 1/\tau\right)$ times the cut-set bound (cf.~\eqref{eq:cut-set}).
\end{enumerate}

\begin{remark}
\label{rem:lowerRB_t}
Note that the advantage of having higher sub-packetization is realized in the second stage of the repair process. It allows us to design part (b) of {\rm Type~II} constraints (cf.~\eqref{eq:type2a2}) in a manner so that most of symbols appearing in the part (b) of the {\rm Type~II} constraints used in the second repair stage are already known at the end of the first stage. In particular, we only need to download at most ${\frac{1}{\tau}}$-th fraction of the symbols appearing in the part (b) of a {\rm Type~II} constraint. This implies that, as we work with the higher values of $\tau$, we get further reduction in the repair bandwidth of the obtained codes.
\end{remark}

\begin{remark}[MDS property of the proposed codes]
\label{subsec:mds_t}
The argument for this part is identical to that used in Section~\ref{subsec:mdsness2}.
\end{remark}

\section{Repair-efficient Linear Array Codes with Small Sub-packetization Levels.}
\label{sec:gen_approach}

In this section, we present a general approach to realize our end goal of constructing $\epsilon$-MSR codes with small sub-packetization levels. 
Towards this, we combine a short MSR code with another code that has large minimum distance to obtain a linear array code which has a significantly small sub-packetization level as a function of its length. In particular, for a constant number of parity blocks, it is possible to obtain codes of length which is exponential in their sub-packetization levels. Furthermore, the repair bandwidth of the obtained code is only slightly larger than that of an MSR code with the same parameters. In Section~\ref{sec:long_mds}, we utilize a family of MSR codes from \cite{YeB16a} to ensure that the obtained long code is an MDS code. As a result, the approach described in this section gives us $\epsilon$-MSR codes with small sub-packetization levels.

\begin{construction}
\label{cons:generic}
We are given two codes $\Cc^{\rm I}$ and $\Cc^{\rm II}$.
\begin{enumerate}
\item $\Cc^{\rm I}$ is an  $(n = k + r,  k, t = n - 1, \ell)_{\B}$ MSR code defined by the parity-check matrix
\begin{align}
\label{eq:parityI}
\Hm = \left(
\begin{array}{cccc}
H_{1,1} & H_{1, 2} & \cdots & H_{1, n} \\
\vdots & \vdots & \ddots & \vdots \\
H_{r, 1} & H_{r, 2} & \cdots & H_{r, n}
\end{array}
\right) \in \B^{r\ell \times n\ell}.
\end{align}
For $i \in [n]$, the repair matrix associated with the $i$-th code block takes the following diagonal form. 
\begin{align}
\label{eq:repairI}
S_i = \Diag\left(S_{i, 1},S_{i, 2},\ldots,S_{i, r} \right) \in \B^{\ell \times r\ell},
\end{align}
where for each $j \in [r]$, $S_{i, j}$ is an $\frac{\ell}{r} \times \ell$ matrix (over $\B$).
\item $\Cc^{\rm II}$ is an $(N, M, D = \delta N)_{\G}$ code defined over an alphabet $\G$ of size at most $n$.
\end{enumerate}
Given these two codes, we construct an $[\Nc = M, \Kc l = (M - r)l, \Dc, l = N\ell]_{\B}$ linear array code $\Cc = \Cc^{\rm II}\circ\Cc^{\rm I}$ by designing its $rN\ell \times MN\ell$ parity-check matrix $\Hc$.
Note that a codeword of $\Cc$ comprises $M = |\Cc^{\rm II}|$ code blocks with each of these blocks containing $N\ell$ symbols of $\B$. The $M$ code blocks in a codeword of $\Cc$ are indexed by $M$ distinct $N$-length codewords in $\Cc^{\rm II}$. Let $\cv = (c_1, \ldots, c_N) \in \G^N$ be a codeword of $\Cc^{\rm II}$. Then, the $N\ell$ columns of the parity-check matrix $\Hc$ that correspond to the code block of a codeword of $\Cc$ indexed by $\cv \in \Cc^{\rm II}$ are defined as follows. 
\begin{align}
\label{eq:thickCol}
\mathbf{\Hc}_{\cv} 
= \left[
\begin{array}{c}
{\alpha_{1,\cv}} \cdot \Diag(H_{1,c_1}, \ldots, H_{1, c_N}) \\
\vdots \\
{\alpha_{r,\cv}} \cdot \Diag(H_{r,c_1}, \ldots, H_{r, c_N}) 
\end{array}\right],
\end{align}
where $\{\alpha_{j, \cv}\big\}_{j \in [r], \cv \in \Cc^{\rm II}}$ are non-zero elements from $\B$. We associate the alphabet $\G$ with the set of integers $\{1, 2,\ldots,|\G|\}$ while specifying the parity-check matrix $\Hc$ in \eqref{eq:thickCol}. Note that all the blocks $\{H_{j, c_i}\}_{j \in [r], i \in [N]}$ in \eqref{eq:thickCol} are well defined as we have $|\G| \leq n$.   
\end{construction}

{As shown in Section~\ref{sec:long_mds}, depending on the specific choice of the MSR code $\Cc^{\rm I}$, these scalars can be chosen to ensure that the obtained code $\Cc$ is an MDS code.} Next, we show that the code $\Cc$ obtained from Construction~\ref{cons:generic} has a linear repair scheme with small repair bandwidth, regardless of the choice for these scalars.

\begin{theorem}
\label{thm:main}
The code $\Cc$ in Construction~\ref{cons:generic} is an $[\Nc = M, (\Nc - r)N\ell, \Dc, N\ell]_{\B}$ linear array code which enables repair of every code block in each of its codewords by downloading at most
$$
\big(1 + (r-1)(1 - \delta)\big)\cdot{N\ell}/{r}
$$
symbols of $\B$ from each of the remaining $\Tc = \Nc - 1$ code blocks.
\end{theorem}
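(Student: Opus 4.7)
My plan is to apply Proposition~\ref{prop:repair} with an explicitly constructed repair matrix, reducing the bandwidth claim to two rank conditions that follow directly from the MSR structure of $\Cc^{\rm I}$ and the distance of $\Cc^{\rm II}$. Fix a target block indexed by $\cv^* = (c^*_1, \ldots, c^*_N) \in \Cc^{\rm II}$. The crucial structural observation is that after permuting the rows of $\Hc$ from the natural order $(j, s, t')$ to $(s, j, t')$ -- grouping first by the outer coordinate $s \in [N]$ -- the matrix decouples into $N$ row-groups of $r\ell$ rows each, where the $s$-th group only involves the $s$-th ``sub-block'' of $\ell$ columns of each code block. Within the $s$-th group, the sub-matrix associated with code block $\cv$ is the inner-code thick column $\Hm_{c_s}$ with row-blocks scaled by $\{\alpha_{j,\cv}\}_{j \in [r]}$, i.e., equal to $\Diag(\alpha_{1,\cv} I_\ell, \ldots, \alpha_{r,\cv} I_\ell)\cdot \Hm_{c_s}$.

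I take $\Sc_{\cv^*}$ to be block-diagonal in this permuted basis, with its $s$-th block equal to $T_s := S_{c^*_s}\cdot \Diag(\alpha_{1,\cv^*}^{-1} I_\ell, \ldots, \alpha_{r,\cv^*}^{-1} I_\ell)$, where $S_{c^*_s}$ is the inner MSR repair matrix for block $c^*_s$. The inverse scalings cancel the $\alpha$-factors exactly when $\cv = \cv^*$, so $T_s$ acting on the group-$s$ sub-matrix of $\Hc_{\cv^*}$ yields $S_{c^*_s} \Hm_{c^*_s}$, which has rank $\ell$ by the MSR property of $\Cc^{\rm I}$. Summing over the $N$ decoupled groups gives $\rank(\Sc_{\cv^*}\Hc_{\cv^*}) = N\ell$, verifying condition~\eqref{eq:Repair_cond1} of Proposition~\ref{prop:repair}.

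For a non-target block $\wv \neq \cv^*$, the same decoupling makes $\rank(\Sc_{\cv^*}\Hc_{\wv})$ a sum of per-group contributions. Because $S_{c^*_s}$ is block-diagonal, the group-$s$ contribution becomes a vertically stacked matrix with row-blocks $\beta_j\, S_{c^*_s, j} H_{j, w_s}$, where $\beta_j := \alpha_{j,\wv}/\alpha_{j,\cv^*} \neq 0$; since nonzero per-block scalings preserve the row-rank of a stacked matrix, this rank equals $\rank(S_{c^*_s} \Hm_{w_s})$. The MSR property of $\Cc^{\rm I}$ then gives $\ell/r$ when $w_s \neq c^*_s$ and at most $\ell$ otherwise. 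Writing $I_{\wv} := \{s \in [N] : w_s = c^*_s\}$, the minimum distance of $\Cc^{\rm II}$ bounds $|I_{\wv}| \leq N - D = (1 - \delta) N$, and hence
\[
\rank(\Sc_{\cv^*}\Hc_{\wv}) \;\leq\; |I_{\wv}|\cdot \ell \;+\; (N - |I_{\wv}|)\cdot \tfrac{\ell}{r} \;\leq\; \tfrac{N\ell}{r}\bigl(1 + (r-1)(1-\delta)\bigr),
\]
which is precisely the claimed per-block bandwidth bound.

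The dimension claim $(\Nc - r) N\ell$ reduces via the same decoupling to verifying $\rank(\Hc) = r N\ell$, which is a group-wise check: within each of the $N$ row-groups one extracts $r$ codewords of $\Cc^{\rm II}$ whose $s$-th coordinates are distinct (a mild consequence of the parameters of $\Cc^{\rm II}$), whence the MSR property supplies $r$ linearly independent thick columns spanning $\B^{r\ell}$. The main obstacle is purely bookkeeping: tracking the row-permutation carefully and chasing the $\alpha_{j,\cv}$ scalars through the stacked block structure so that the inner-code rank identities plug in cleanly. Once that setup is fixed, the entire proof collapses to the two rank computations above.
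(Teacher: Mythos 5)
Your core argument for the repair bandwidth is essentially identical to the paper's proof, just dressed slightly differently. The paper takes the repair matrix $\Sc_{\cv}=\Diag\big(\Diag(S_{c_1,1},\dots,S_{c_N,1}),\dots,\Diag(S_{c_1,r},\dots,S_{c_N,r})\big)$ directly (in the ordering $(j,s)$) and silently uses the fact that nonzero per-row-block scalings do not change row rank; you permute rows to the ordering $(s,j)$ and absorb the $\alpha_{j,\cv^*}^{-1}$ factors into the repair matrix, which is a cosmetic re-packaging of the same step. After that, the decomposition of $\rank(\Sc_{\cv^*}\Hc_{\wv})$ into per-coordinate contributions, the two-case split $w_s=c^*_s$ vs.\ $w_s\neq c^*_s$, the use of the MSR rank identities $\rank(S_{c^*_s}\Hm_{c^*_s})=\ell$ and $\rank(S_{c^*_s}\Hm_{w_s})=\ell/r$, and the bound $|I_{\wv}|\le (1-\delta)N$ from the minimum distance of $\Cc^{\rm II}$ are all literally the steps in the paper's computation ending at \eqref{eq:epsilon_bw_bound}.

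Where you go beyond the paper's proof, you run into trouble. The paper's proof of this theorem establishes only the repair claim and treats the parameters $[\Nc, (\Nc-r)N\ell, \Dc, N\ell]$ as coming from the construction; the full-rank/MDS question is deferred to Section~\ref{sec:long_mds}, where it is proved only for the specific instantiation with Ye--Barg inner codes and the carefully chosen scalars $\alpha_{j,\cv}=\sigma_{\cv}^{j-1}$. Your sketch for $\rank(\Hc)=rN\ell$ does not actually go through in the generality of Construction~\ref{cons:generic}: even when one has $r$ codewords of $\Cc^{\rm II}$ with distinct $s$-th coordinates, the corresponding block in $\Hc$ is $\big[\Diag(\alpha_{j,\cv^1})\Hm_{c^1_s}~|~\cdots~|~\Diag(\alpha_{j,\cv^r})\Hm_{c^r_s}\big]$, and the row-block scalings depend on $m$ and can destroy the independence furnished by the MDS property of $\Cc^{\rm I}$ (take $\ell=1$, $r=2$, $\Hm_1=(1,1)^T$, $\Hm_2=(2,1)^T$, and scale the first column by $\Diag(2,1)$ — the two scaled columns coincide). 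Also, the existence of $r$ codewords with distinct $s$-th coordinates is not a ``mild consequence'' of the parameters; it requires $|\G|\ge r$ and a nontrivial projection property of $\Cc^{\rm II}$. So either drop the dimension sketch and, as the paper does, leave the MDS/full-rank statement to the instantiation where specific scalars make it true, or prove it with the extra hypotheses spelled out. The repair-bandwidth half of your argument is correct and matches the paper exactly.
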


Before presenting a proof of Theorem~\ref{thm:main}, we make the following observations regarding Construction~\ref{cons:generic}.

\begin{remark}
If the code $\Cc$ in Construction~\ref{cons:generic} is an MDS code, then it follows from Theorem~\ref{thm:main} that $\Cc$ is an $\epsilon$-MSR code (cf.~Definition~\ref{def:eMSR}) with $\epsilon=(r-1)(1-\delta)/r$, where $\delta$ is the relative minimum distance of $\Cc^{\rm II}$.
\end{remark}

\begin{remark}
Note that the larger $\delta$ is, the smaller $\epsilon$ we get in Construction~\ref{cons:generic}. However, taking large $\delta$ reduces the size of the code $\Cc^{\rm II}$, which further implies that we get a shorter code $\Cc$. Namely, there is a clear trade-off between $\epsilon$ and the length of the code $\Cc$.
\end{remark}

\noindent {\em Proof of Theorem~\ref{thm:main}.~}For $i \in [M]$, we demonstrate a linear repair scheme for the $i$-th code block of a codeword of $\Cc$. Recall that the $M$ code blocks in a codeword of $\Cc$ are indexed by $M$ distinct codewords in the code $\Cc^{\rm II}$. Let the code block to be repaired be indexed by the codeword 
$
\cv = (c_1, c_2,\ldots, c_N)  \in \Cc^{\rm II}.
$
We claim that the following $N\ell \times rN\ell$ matrix serves as a repair matrix for this code block.
\begin{align}
\Sc_{\cv} &= \Diag\big(\Diag(S_{c_1, 1}, \ldots, S_{c_N, 1}),\cdots,\Diag(S_{c_1, r}, \ldots, S_{c_N, r}) \big) \nonumber
\end{align} 
It is sufficient to verify the conditions given in \eqref{eq:Repair_cond1}, i.e., 
\begin{align}
\rank\left(\Sc_{\cv} \left[\begin{array}{c}
\Hc_{1, \cv} \\
\vdots \\
\Hc_{r, \cv}
\end{array}
\right] \right) = N\ell.
\end{align}
Recall that, as per our assumption, $\cv$ denotes the $i$-th codeword of the code $\Cc^{\rm II}$. Note that we have
\begin{align}
\Sc_c \left[\begin{array}{c}
\Hc_{1, \cv} \\
\vdots \\
\Hc_{r, \cv}
\end{array}
\right]  
&=  \Diag\big(\Diag(S_{c_1, 1}, \ldots, S_{c_N, 1}),\cdots, \Diag(S_{c_1, r}, \ldots, S_{c_N, r}) \big)
\left(
\begin{array}{c}
\alpha_{1,\cv} \cdot \Diag(H_{1,c_1},\ldots, H_{1, c_N}) \\
\vdots \\
\alpha_{r,\cv} \cdot \Diag(H_{r,c_1}, \ldots, H_{r, c_N})
\end{array}\right) \nonumber \\
& = \left(
\begin{array}{c}
\alpha_{1,\cv} \cdot \Diag(S_{c_1, 1}H_{1,c_1},\ldots, S_{c_N, 1}H_{1, c_N}) \\
\vdots \\
\alpha_{r,\cv} \cdot \Diag(S_{c_1, r}H_{r,c_1}, \ldots, S_{c_N, r}H_{r, c_N})
\end{array}\right)
\end{align}
Therefore, we have 
\begin{align}
\rank\left(\Sc_{\cv} \left[\begin{array}{c}
\Hc_{1, \cv} \\
\vdots \\
\Hc_{r, \cv}
\end{array}
\right] \right) 
& = \sum_{j = 1}^{N} \rank \left(\left[\begin{array}{c}
S_{c_j, 1}H_{1, c_j} \\
\vdots \\
S_{c_j, r}H_{r, c_j}
\end{array}
\right]\right) \nonumber \\
& \overset{(i)}{=} \sum_{j = 1}^{N} \rank \left(S_{c_j}\Hm_{c_j}\right) \overset{(ii)}{=} \sum_{j = 1}^{N}\ell = N\ell,
\end{align}
where $(i)$ follows from the structure of the repair matrix $S_i$ in the short MSR code $\Cc^{\rm I}$ (cf.~\eqref{eq:repairI}) and $(ii)$ follows from the requirement on the repair matrices of $\Cc^{\rm I}$ (cf.~\eqref{eq:Repair_cond1}).

\noindent \textbf{Repair bandwidth:} Next, we focus on the repair bandwidth associated with the repair matrix $\Sc_{\cv}$. For a codeword $\tilde{\cv} = (\tilde{c}_1, \tilde{c}_2,\ldots, \tilde{c}_N) \in \Cc^{\rm II}$ such that $\tilde{\cv} \neq \cv$, the code block in a codeword of $\Cc$ which is indexed by $\widetilde{\cv}$ needs to contribute
\begin{align}
\rank\left(\Sc_{\cv} \left[\begin{array}{c}
\Hc_{1, \tilde{\cv}} \\
\vdots \\
\Hc_{r, \tilde{\cv}}
\end{array}
\right]  \right) \nonumber 
\end{align} 
symbols during the repair of the code block indexed by $\cv \in \Cc^{\rm II}$. Note that
\begin{align}
\label{eq:repairBW_long_S}
\Sc_{\cv} \left[\begin{array}{c}
\Hc_{1, \tilde{\cv}} \\
\vdots \\
\Hc_{r, \tilde{\cv}}
\end{array}
\right]  
&=  \Diag\big( \Diag(S_{c_1, 1}, \ldots, S_{c_N, 1}),\cdots, \Diag(S_{c_1, r}, \ldots, S_{c_N, r}) \big)
\left(
\begin{array}{c}
\alpha_{1,\tilde{\cv}} \cdot \Diag(H_{1,\tilde{c}_1},\ldots, H_{1, \tilde{c}_N}) \\
\vdots \\
\alpha_{r,\tilde{\cv}} \cdot \Diag(H_{r, \tilde{c}_1}, \ldots, H_{r, \tilde{c}_N})
\end{array}\right) \nonumber \\ 
&=\left(
\begin{array}{c}
\alpha_{1,\tilde{\cv}} \cdot \Diag(S_{c_1, 1}H_{1,\tilde{c}_1},\ldots, S_{c_N, 1}H_{1, \tilde{c}_N}) \\
\vdots \\
\alpha_{r,\tilde{\cv}} \cdot \Diag(S_{c_1, r}H_{r,\tilde{c}_1}, \ldots, S_{c_N, r}H_{r, \tilde{c}_N})
\end{array}\right)
\end{align}
Therefore, we have
\begin{align}
\label{eq:repairBW_long}
\rank\left(\Sc_{\cv} \left[\begin{array}{c}
\Hc_{1, \tilde{\cv}} \\
\vdots \\
\Hc_{r, \tilde{\cv}}
\end{array}
\right]  \right) 
& = \sum_{j = 1}^{N} \rank \left(\left[\begin{array}{c}
S_{c_j, 1}H_{1, \tilde{c}_j} \\
\vdots \\
S_{c_j, r}H_{r, \tilde{c}_j}
\end{array}
\right]\right) \overset{(i)}{=} \sum_{j = 1}^N\rank(S_{c_j}\Hm_{\tilde{c}_j}),
\end{align}
where $(i)$ follows from \eqref{eq:repairI}. We now consider two cases.
\begin{enumerate}
\item {\bf Case~$1$~($\widetilde{c}_j = c_j$):}~In this case, we have
\begin{align}
\label{eq:beta_case1}
\rank(S_{c_j}\Hm_{\tilde{c}_j}) = \rank(S_{c_j}\Hm_{{c}_j})  \overset{(i)}{=} \ell,
\end{align}
where $(i)$ follows from \eqref{eq:Repair_cond1}.
\item {\bf Case~$2$~($\widetilde{c}_j \neq c_j$):}~Note that we have
\begin{align}
\label{eq:beta_case2}
\rank(S_{c_j}\Hm_{\tilde{c}_j}) \overset{(i)}{=} \frac{\ell}{r},
\end{align}
where $(i)$ follows from the fact that $S_{c_i}$ is the repair matrix for the $c_i$-th code block of the MSR code $\Cc^{\rm I}$. Recall that we have $c_i \in \G$ is associated with an element of $\{1, 2,\ldots,|\G|\} \subseteq [n]$.
\end{enumerate}
By substituting \eqref{eq:beta_case1} and \eqref{eq:beta_case2} in \eqref{eq:repairBW_long}, we obtain that
\begin{align}
\rank\left(\Sc_{\cv} \left[\begin{array}{c}
\Hc_{1, \tilde{\cv}} \\
\vdots \\
\Hc_{r, \tilde{\cv}}
\end{array}
\right]  \right) & = \sum_{j \in [N] : c_j = \tilde{c}_j}\rank(S_{c_j}\Hm_{\tilde{c}_j})  +  \sum_{j \in [N] : c_j \neq \tilde{c}_j} \rank(S_{c_j}\Hm_{\tilde{c}_j}) \nonumber \\
& =  |\{j \in [N] : c_j = \tilde{c}_j\}|\ell + |\{j \in [N] : c_j \neq \tilde{c}_j\}|\frac{\ell}{r} \nonumber \\
& =  N\ell - |\{j \in [N] : c_j \neq \tilde{c}_j\}|\left(\frac{r-1}{r}\right)\ell \label{eq:epsilon_avg} \\
& \leq N\ell  - D\left(\frac{r-1}{r}\right)\ell \nonumber \\
&= \frac{N\ell}{r} + \left(\frac{r - 1}{r}\right)(N - D)\ell \nonumber \\
&{=}\big(1 + (r-1)(1 - \delta)\big)\cdot\frac{N\ell}{r}, \label{eq:epsilon_bw_bound}
\end{align}
where \eqref{eq:epsilon_bw_bound} follows from $D=\delta N$. 
\qed

\begin{remark}
\label{rem:generic}
For a given $\epsilon > 0$, one can choose $\Cc^{\rm II}$ to be such that 
$$
\delta = \frac{D}{N} \geq 1 - \frac{\epsilon}{r-1}.
$$ Combining this with \eqref{eq:epsilon_bw_bound} gives us that 
\begin{align}
\label{eq:epsilon_bw_bound1}
\rank\left(\Sc_{\cv} \left[\begin{array}{c}
\Hc_{1, \tilde{\cv}} \\
\vdots \\
\Hc_{r, \tilde{\cv}}
\end{array}
\right]  \right) \leq \big(1 + \epsilon\big)\cdot\frac{N\ell}{r}.
\end{align} 
This implies that the repair bandwidth of the code $\Cc = \Cc^{\rm II}\circ\Cc^{\rm I}$ is at most $(1 + \epsilon)$ times the lower bound on the repair bandwidth of an MDS code with the same parameters (cf.~\eqref{eq:cut_set_d}).
\end{remark}

\begin{remark}
\label{rem:avg_bw}
Note that \eqref{eq:epsilon_bw_bound1} provides an upper bound on the number of symbols transmitted from each node. However assuming that $\Cc^{II}$ is a linear code, one can give a tighter bound on the average number symbols transmitted from each node. In particular, for $\Cc^{\rm II}$ which is a linear code over an alphabet of size $q$, its average distance satisfies the following.
$$
\overline{d} = \frac{1}{|\Cc^{\rm II}|(|\Cc^{\rm II}|-1)}\sum_{\cv, \cv' \in \Cc:~\cv \neq \cv'}\dist(\cv, \cv')  = \frac{(q-1)|\Cc^{\rm II}|}{q(|C^{\rm II}| - 1)}N \geq \left(1 - {1}/{q}\right)N.
$$
This can be combined with \eqref{eq:epsilon_avg} to conclude that the obtained linear array code downloads on average $\left(1 + (r-1)/q\right)\cdot\frac{N\ell}{r}$ symbols of $\B$ from an intact code block. Note that it follows from the Plotkin bound that $1/q \leq 1 - \delta$. Thus, the aforementioned bound on the average number of symbols download from a code block is better that the one given in \eqref{eq:epsilon_bw_bound1}.
\end{remark}

\begin{remark}
\label{rem:RBT_long}
It is straightforward to verify from \eqref{eq:repairBW_long_S} that if $\Cc^{\rm I}$ has a repair-by-transfer scheme (cf.~Remark~\ref{rem:RBT}), then $\Cc = \Cc^{\rm II}\circ\Cc^{\rm I}$ from Construction~\ref{cons:generic} also possesses a repair-by-transfer scheme.
\end{remark}

In Table~\ref{tab:epsilon_MSR}, we illustrate the repair bandwidth of linear array codes obtained by Construction~\ref{cons:generic} with the help of a few examples. We employ the MSR codes obtained in \cite{WTB16} as the short codes $\Cc^{\rm I}$ in these examples. Moreover, all the code used as $\Cc^{\rm II}$ in these examples are linear codes. This allows us to obtain upper bound on the average number of symbols downloaded from a code block stored on an intact node (cf.~Remark~\ref{rem:avg_bw}). In order to get an idea of the parameters realized by our approach, let's consider the second example in Table~\ref{tab:epsilon_MSR} where we obtain a code with length $\Nc = 81$ and $r = 2$ parity nodes. The obtained code achieves $\frac{\beta}{l} = 0.55$ which is slightly worse than the optimal value of $0.5$, achieved by an MSR code of the same length and code rate. However, the obtained code has its sub-packetization level equal to $80$ as compared to the (best known) MSR code that requires the sub-packetization level to be a prohibitively large value of $2^{27}$.
\bgroup
\def\arraystretch{1.55}
\begin{table*}[t!]
\footnotesize
\centering
\begin{tabular}{c| c| c| c| c |c| c| c| }
  \hline \hline
\pbox{3.5cm}{~~~~~~~~~~~~~~~~$\cC^{\rm I}$ \\ $(n,k,t=n-1,\ell)$ MSR code}& \pbox{2.8cm}{~~~~~~~~~$\cC^{\rm II}$ \\ $(N,K,D=\delta N)_q$ code} & $\Nc=q^K$ & $r$ &  $\Kc=q^K-r$ & ${\beta}/{l}$ & $\overline{\beta}/l$ & $l = N\ell = N\cdot r^\frac{q}{r+1}$  \\
	\hline
	$(3,1,2,2)$ MSR code &		$[20,3,13]_3$&   $27$   & $2$&   $25$ &  $0.675~(0.5)$ 			 & $0.653$  & $40~(2^9)$\\ \hline
	$(9,7,8,8)$ MSR code&		$[10,2,9]_9$ &   $81$   & $2$&   $79$ &  $0.55~(0.5)$ 				 & $0.55$  & $80~(2^{27})$\\ \hline
	$(9,7,8,8)$ MSR code&   $[15,3,12]_9$&   $729$  & $2$&   $727$&  $0.6~(0.5)$          & $0.554$ & $120~(2^{81})$\\ \hline
	$(8,5,7,9)$ MSR code&   $[20,3,16]_8$&   $512$  & $3$&   $509$&  $0.466~(0.33)$ & $0.415$  & $180~(3^{128})$\\ \hline
\end{tabular}
\caption{Examples of linear array codes obtained using Construction~\ref{cons:generic}. The short codes $\Cc^{\rm I}$ utilized in these examples are constructed by Wang et al.~\cite{WTB16}.  $\beta$ and $\overline{\beta}$ denote upper bounds on the maximum number of symbol and the average number of symbols downloaded from an intact code block during the repair process, respectively. The term inside the brackets in the third column from the right denotes the value of $\beta/l$ achieved by the MSR code with $r$ parity nodes. The term inside the brackets in the rightmost column represents the sub-packetization level needed by the best known constructions for the MSR codes with parameters $\Nc, \Kc$ and $\Tc = \Nc - 1$. }
\label{tab:epsilon_MSR}
\end{table*}
\egroup

\section{Instantiation of the $\epsilon$-MSR Code Construction.}
\label{sec:long_mds}

In this section, we further explore Construction~\ref{cons:generic} in terms of the code parameters that can be achieved by specific choices for two constituents $\Cc^{\rm I}$ and $\Cc^{\rm II}$ of the construction. First, we give an explicit construction of $\epsilon$-MSR code using a specific family of MSR codes by Ye and Barg~\cite{YeB16a} as $\Cc^{\rm I}$. As described in Section~\ref{sec:gen_approach}, for any $\Cc^{\rm II}$ with large minimum distance, the code $\Cc$ obtained by Construction~\ref{cons:generic} always has small repair bandwidth. Choosing Ye and Barg codes~\cite{YeB16a} as $\Cc^{\rm I}$ allows us to argue that $\Cc$ is an MDS code as well, which is the other requirement for a code to be an $\epsilon$-MSR code. We then focus on the specific choice for $\Cc^{\rm II}$. In particular, we utilize the existence of codes on the GV curve to establish the existence of $\epsilon$-MSR codes with good relationship between the length and the sub-packetization level.

\subsection{Explicit $\epsilon$-MSR codes using Ye and Barg construction}
\label{sec:composing}
In \cite{YeB16a}, Ye and Barg present an explicit construction of MSR codes with sub-packetization level $\ell = (n-k)^n$, which is exponential in the code length $n$. We now illustrate how one can select the MSR codes obtained by this construction as $\Cc^{\rm I}$ in Construction~\ref{cons:generic} in order to design explicit $\epsilon$-MSR codes with small sub-packetization level. For the necessary details of the construction of Ye and Barg~\cite{YeB16a}, we refer the reader to Appendix~\ref{appen:YeBarg} where we briefly describe the construction along with the associated repair scheme.
 
Let $\Cc^{\rm II}$ be the $(N, M = q^{NR} = |\Cc^{\rm II}|, D)$ code that we combine with the MSR code $\Cc^{\rm I}$ in Construction~\ref{cons:generic}. Let $\B$ be a finite field of size greater than $|C^{\rm II}|rn$, i.e., $|\B| \geq q^{NR}rn + 1$, that has a multiplicative sub-group $\E$ of order $r n$, i.e., $|\E| = rn$. Note that the sub-group has $\frac{|\B^{\ast}|}{|\E|} \geq q^{NR}$ cosets, each of size $|\E| = rn$. We associate $q^{NR}$ distinct cosets of the sub-group $\E$ to $q^{NR}$ different codewords of the code $\Cc^{\rm II}$. For a codeword $\cv \in \Cc^{\rm II}$, let $\sigma_{\cv} \in \B^{\ast}$ be such that the coset associated with the codeword $\cv$ is $\sigma_{\cv}\cdot\E$. 

We take $\Cc^{\rm I}$ to be the code obtained from the Ye and Barg construction with $rn$ distinct elements of the multiplicative subgroup $\E$ forming the $rn$ evaluation points $\{\lambda_{i,j}\}_{i \in [n], j \in [0:r-1]}$ (cf.~Appendix~\ref{appen:YeBarg}). Note that the code $\Cc^{\rm I}$ is defined by the parity-check matrix $\Hm$ (cf.~\eqref{eq:Hm}), where {$n$ thick columns of the parity-check matrix $\Hm$} corresponding to $n$ distinct code blocks in a codeword of $\Cc^{\rm I}$ are defined by the $n$ distinct $\ell \times \ell$  matrices $\{A_{1}, A_{2},\ldots, A_{n}\}$. In order to fully specify the code $\Cc$ obtained from Construction~\ref{cons:generic}, we also need to specify the scalar $\{\alpha_{j, \cv}\}_{j \in [r], \cv \in \Cc^{\rm II}}$ (cf.~\eqref{eq:thickCol}). For $j \in [r]$ and $\cv \in \Cc^{\rm II}$, we assign
$$
\alpha_{j, \cv} = \sigma^{j-1}_{\cv},
$$
where, as defined earlier, $\sigma_{\cv}$ specifies the coset of $\E$ which is associated with the codeword $\cv \in \Cc^{\rm II}$.

Let $\Hc$ be the $rN\ell \times MN\ell$ parity-check matrix of the code $\Cc$ obtained from Construction~\ref{cons:generic}. Recall that a codeword of $\Cc$ has $M$ code blocks which are indexed by the codewords of $\Cc^{\rm II}$. Given the aforementioned choice for the short MSR code $\Cc^{\rm I}$, the $N\ell$ columns of $\Hc$ corresponding to the code block indexed by $\cv \in \Cc^{\rm II}$ takes the following form (cf.~\eqref{eq:thickCol}).
\begin{align}
\label{eq:Hm_col}
\Hc_{\cv} = \left(\begin{array}{c}
\Diag(\Id, \Id,\ldots, \Id) \\
\sigma_{\cv}\cdot\Diag(A_{c_1}, A_{c_2},\ldots, A_{c_N}) \\
\sigma^2_{\cv}\cdot\Diag(A^2_{c_1}, A^2_{c_2},\ldots, A^2_{c_N}) \\
\vdots \\
\sigma^{r-1}_{\cv}\cdot\Diag(A^{r-1}_{c_1}, A^{r-1}_{c_2},\ldots, A^{r-1}_{c_N} \\
\end{array} \right) = \left( \begin{array}{c}
\Id  \\
\sigma_{\cv}\cdot\Ac_{\cv} \\
\sigma^2_{\cv}\cdot\Ac^2_{\cv} \\
\vdots\\
\sigma^{r-1}_{\cv}\cdot\Ac^{r -1}_{\cv} \\
\end{array}\right),
\end{align}
where $\Id$ denotes both $\ell \times \ell$  and $N\ell \times N\ell$ identity matrices. Moreover, we use $\Ac_{\cv}$ to denote the following $N \ell \times N \ell$ block diagonal matrix 
\begin{align}
\label{eq:Ac_d}
\Diag(A_{c_1}, A_{c_2},\ldots, A_{c_N}).
\end{align}

\subsubsection{Repair bandwidth for repairing a single code block (node) $\Cc$} As show in the proof of Theorem~\ref{thm:main}, the code block indexed by $\cv \in \Cc^{\rm II}$ in a codeword of $\Cc$ can be repaired using the following $N\ell \times rN\ell$ repair matrix. 
\begin{align}
\label{eq:repairSpecial}
\Sc_{\cv} = \Diag(S_{c_1}, S_{c_2},\ldots, S_{c_N}),
\end{align}
where ${\ell} \times r\ell$ matrices $\{S_{c_i}\}_{i \in [N]}$ are defined in Appendix~\ref{sec:repMat_YeB}. Taking the code $\Cc^{\rm II}$ with large enough distance, the repair bandwidth associated with linear repair scheme defined by these repair matrices can be made at most $(1 + \epsilon)\cdot\frac{N\ell}{r}$ (cf.~Remark~\ref{rem:generic}).

\subsubsection{MDS property of $\Cc$} Next, we argue that the code $\Cc$ obtained in this section is an MDS code. Along with the previous result on its repair bandwidth, the following result establishes that $\Cc$ is an $\epsilon$-MSR code.
\begin{lemma}
Let $\Cc$ be a linear array code defined by the $rN\ell \times q^{NR}N\ell$ parity-check matrix $\Hc$ as described in \eqref{eq:Hm_col}. Then, $\Cc$ is a $[q^{NR}, q^{NR} - r, r+1, N\ell]_{\B}$ MDS code.
\end{lemma}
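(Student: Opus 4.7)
To show that $\Cc$ is an MDS code I need to verify that every $rN\ell \times rN\ell$ submatrix of $\Hc$ obtained by selecting the thick columns corresponding to any $r$ distinct codewords $\cv^{(1)},\dots,\cv^{(r)} \in \Cc^{\rm II}$ is nonsingular. By \eqref{eq:Hm_col} the relevant submatrix is the block Vandermonde
\[
\Hc_{\mathcal{T}} \;=\; \begin{pmatrix} \Id & \Id & \cdots & \Id \\ B_1 & B_2 & \cdots & B_r \\ B_1^2 & B_2^2 & \cdots & B_r^2 \\ \vdots & \vdots & & \vdots \\ B_1^{r-1} & B_2^{r-1} & \cdots & B_r^{r-1} \end{pmatrix}, \qquad B_i \;=\; \sigma_{\cv^{(i)}}\,\Ac_{\cv^{(i)}}.
\]
The plan is to exploit the fact that the Ye--Barg matrices $\{A_c\}_{c\in[n]}$ are \emph{simultaneously diagonal} in the canonical basis $\{e_{\av}\}_{\av\in\{0,1,\dots,r-1\}^n}$ of $\B^\ell$ (see Appendix~\ref{appen:YeBarg}). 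Consequently every $\Ac_{\cv} = \Diag(A_{c_1},\dots,A_{c_N})$ is a diagonal matrix on $\B^{N\ell}$, and hence so is each $B_i$.

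\smallskip
\noindent\textbf{Step 1 (reduction to scalar Vandermondes).} Because all $B_1,\dots,B_r$ are diagonal with respect to the same basis of $\B^{N\ell}$, I can apply the standard permutation of rows and columns that groups together, for every position $k\in[N\ell]$, the $k$-th diagonal entry of each $B_i^s$. This block-diagonalizes $\Hc_{\mathcal{T}}$ into $N\ell$ independent $r \times r$ scalar Vandermonde matrices
\[
V^{(k)} \;=\; \bigl(\,(b_i^{(k)})^{s}\,\bigr)_{0\le s\le r-1,\; 1\le i\le r},
\]
where $b_i^{(k)}$ is the $k$-th diagonal entry of $B_i$. Thus $\det \Hc_{\mathcal{T}} = \pm\prod_{k=1}^{N\ell}\det V^{(k)}$, and it suffices to show that for each fixed $k$ the scalars $b_1^{(k)},\dots,b_r^{(k)}$ are pairwise distinct.

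\smallskip
\noindent\textbf{Step 2 (identifying the diagonal entries).} Index $k\in[N\ell]$ by a pair $(j,\av)$ with $j\in[N]$ and $\av\in\{0,1,\dots,r-1\}^n$. Using the Ye--Barg eigen-relation $A_c\,e_{\av}=\lambda_{c,a_c}\,e_{\av}$ and the definition $\Ac_{\cv^{(i)}} = \Diag(A_{c^{(i)}_1},\dots,A_{c^{(i)}_N})$, I will obtain
\[
b_i^{(k)} \;=\; \sigma_{\cv^{(i)}}\cdot \lambda_{\,c^{(i)}_j,\, a_{c^{(i)}_j}},
\]
so each $b_i^{(k)}$ is the coset representative $\sigma_{\cv^{(i)}}$ times an element of the multiplicative subgroup $\E$.

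\smallskip
\noindent\textbf{Step 3 (disjoint cosets finish the job).} Since $\lambda_{\,c^{(i)}_j,\,a_{c^{(i)}_j}}\in\E$, I have $b_i^{(k)}\in \sigma_{\cv^{(i)}}\cdot \E$. By construction, distinct codewords of $\Cc^{\rm II}$ are assigned \emph{distinct} cosets of $\E$, and distinct cosets are disjoint. Therefore $b_1^{(k)},\dots,b_r^{(k)}$ lie in pairwise disjoint subsets of $\B^\ast$ and are in particular pairwise distinct, which makes $V^{(k)}$ a nonsingular Vandermonde matrix. Multiplying over $k$ gives $\det \Hc_{\mathcal{T}}\neq 0$, establishing the MDS property.

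\smallskip
\noindent\textbf{Main obstacle.} The only step that requires genuine care is Step~1: the block-Vandermonde-to-block-diagonal permutation works only because all $\{B_i\}$ are simultaneously diagonal, which in turn relies on the very specific property of the Ye--Barg construction that every $A_c$ is diagonal in the same basis $\{e_{\av}\}$. Once this structural point is made explicit, Steps~2 and~3 reduce to bookkeeping and the elementary observation that distinct cosets of a subgroup are disjoint; no further non-trivial computation is needed.
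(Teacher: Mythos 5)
Your proof is correct and follows the same structure as the paper's: reduce to the $r$ thick columns, exploit the simultaneous diagonality of the Ye--Barg matrices to block-diagonalize into $N\ell$ scalar $r\times r$ Vandermonde matrices, and conclude nonsingularity. The one place you go further is Step~3: the paper merely says that one can rearrange rows and columns to get a block-diagonal matrix whose blocks are Vandermonde (deferring to Theorem~III.2 of~\cite{YeB16a}), without spelling out why those Vandermonde nodes are pairwise distinct when two of the chosen codewords $\cv^{(i)}, \cv^{(i')}$ agree in coordinate $j$ (so that $A_{c^{(i)}_j}=A_{c^{(i')}_j}$). Your observation that each node $b_i^{(k)}=\sigma_{\cv^{(i)}}\lambda_{c^{(i)}_j,a_{c^{(i)}_j}}$ lies in the coset $\sigma_{\cv^{(i)}}\E$, and that distinct codewords of $\Cc^{\rm II}$ were deliberately assigned disjoint cosets, is exactly the point that makes the $\sigma_{\cv}$-twisting necessary and is the real reason the Vandermonde blocks are nonsingular; making it explicit is a genuine improvement in the exposition, not a deviation from the argument.
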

\begin{proof}
In order to argue that $\Cc$ is an MDS code, we need to show that any $rN\ell \times rN\ell$ sub-matrix of $\Hc$ consisting of $r = n- k$ thick columns of $\Hc$ corresponding to any $r$ distinct code blocks is full rank. Let's consider the $r$ code blocks indexed by the following $r$ codewords of $\Cc^{\rm II}$.
$$
\Pc = \big\{\cv^{1}, \cv^{2},\ldots, \cv^{r}\big\} \subset \Cc^{\rm II}.
$$
The $r N\ell \times r N\ell$ sub-matrix of $\Hc$ that corresponds to the code blocks indexed by these codewords takes the following form.
\begin{align}
\Hc_{\Pc} &=  \left( \begin{array}{cccc}
\Hc_{\cv^{1}} & \Hc_{\cv^{2}} & \cdots & \Hc_{\cv^{r}}
\end{array}\right)  \nonumber  \\
&=\left( \begin{array}{cccc}
\Id & \Id & \cdots & \Id \\
\sigma_{\cv^{1}}\cdot\Ac_{\cv^{1}} & \sigma_{\cv^{2}}\cdot\Ac_{\cv^{2}}  & \cdots & \sigma_{\cv^{r}}\cdot\Ac_{\cv^{r}}  \\
\sigma^2_{\cv^{1}}\cdot\Ac_{\cv^{1}} &  \sigma^2_{\cv^{2}}\cdot\Ac^2_{\cv^{2}}  & \cdots & \sigma^2_{\cv^{r}}\cdot\Ac^2_{\cv^{r}}  \\
\vdots & \vdots & \ddots & \vdots \\
\sigma^{r-1}_{\cv^{1}}\cdot\Ac_{\cv^{1}} & \sigma^{r-1}_{\cv^{2}}\cdot\Ac^{r-1}_{\cv^{2}}  & \cdots & \sigma^{r-1}_{\cv^{r}}\cdot\Ac^{r-1}_{\cv^{r}}  \\
\end{array} \right).
\end{align}
Taking the block diagonal structure of the matrices $\{\Ac_{\cv^{w}}\}_{w \in [r]}$ into account (cf.~\eqref{eq:Ac_d}), it is sufficient to argue that for every $i \in [N]$ the following matrix is full rank.
\begin{align}
\Um_{\Pc, i} = \left( \begin{array}{cccc}
\Id & \Id & \cdots & \Id \\
\sigma_{\cv^{1}}\cdot A_{c^{1}_i} & \sigma_{\cv^{2}}\cdot A_{c^{2}_i}  & \cdots & \sigma_{\cv^{r}}\cdot A_{c^{r}_i}  \\
\sigma^2_{\cv^{1}}\cdot A_{c^{1}_i} &  \sigma^2_{\cv^{2}}\cdot A^2_{c^{2}_i}  & \cdots & \sigma^2_{\cv^{r}}\cdot A^2_{c^{r}_i}  \\
\vdots & \vdots & \ddots & \vdots \\
\sigma^{r-1}_{\cv^{1}}\cdot A_{c^{1}_i} & \sigma^{r-1}_{\cv^{2}}\cdot A^{r-1}_{c^{2}_i}  & \cdots & \sigma^{r-1}_{\cv^{r}}\cdot A^{r-1}_{c^{r}_i}  \\
\end{array} \right),
\end{align}
where $c^{j}_{i}$ denotes the $i$-th code symbol in the codeword $\cv^{j} \in \Pc \subset \Cc^{\rm II}$. For any $i \in [n]$, $\Um_{\Pc, i}$ is a block matrix with diagonal blocks (cf.~\eqref{eq:Am}). Similar to the proof of Theorem III.2  in \cite{YeB16a}, one can rearrange the rows and columns of the matrix $\Um_{\Pc, i}$ to obtain a block {\em diagonal} matrix, where diagonal blocks are Vandermonde matrices. Therefore, the matrix $\Um_{\Pc, i}$ is a full rank matrix. This completes the proof.
\end{proof}

\subsection{Existence of good $\epsilon$-MSR codes using GV bound}
\label{sec:gv}

In Section~\ref{sec:composing}, we present an explicit $\epsilon$-MSR code by using Ye and Barg codes~\cite{YeB16a} as $\Cc^{\rm I}$ and any explicit error correcting code with large enough minimum distance as $\Cc^{\rm II}$ in Construction~\ref{cons:generic}. As highlighted by Theorem~\ref{thm:main}, various parameters of the $\epsilon$-MSR code obtained from Construction~\ref{cons:generic}, including code length and sub-packetization level, depend on the specific $\Cc^{\rm II}$ used in the construction. Next, we utilize the codes operating on the GV bound~\cite{MacSlo} as $\Cc^{\rm II}$ to show that our approach establishes the existence of $\epsilon$-MSR codes that have a good scaling between the code length and the sub-packetization level.

\begin{theorem}
\label{thm:param}
Given an integer $r\geq 1$ and $\epsilon>0$, there exists a constant $s=s(r,\epsilon)>0$ such that for infinite values of $l$ there exists an $(\Nc = \Omega (\exp(sl)), \Kc = \Nc - r, \Tc = \Nc - 1, l)_{\B}$ $\epsilon$-MSR code. Furthermore, the required field size $|\B|$ scales as $O(\Nc)$.
\end{theorem}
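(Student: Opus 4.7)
The plan is to invoke Construction~\ref{cons:generic} together with the instantiation from Section~\ref{sec:composing}, where the short MSR code $\Cc^{\rm I}$ is taken to be a Ye--Barg code and the outer code $\Cc^{\rm II}$ is chosen to be a linear code on the Gilbert--Varshamov curve. Given $r \geq 1$ and $\epsilon>0$, Remark~\ref{rem:generic} tells us that it suffices to pick $\Cc^{\rm II}$ with relative minimum distance $\delta \geq 1-\epsilon/(r-1)$; the resulting composite code $\Cc = \Cc^{\rm II} \circ \Cc^{\rm I}$ is then shown in Section~\ref{sec:composing} to be a $(1+\epsilon)\cdot\tfrac{N\ell}{r}$-bandwidth MDS code, i.e., an $\epsilon$-MSR code in the sense of Definition~\ref{def:eMSR}.

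First, I fix an alphabet size $q$ for $\Cc^{\rm II}$ that is large enough so the GV bound is nontrivial at the required relative distance. Because $\Cc^{\rm II}$ is defined over an alphabet of size at most $n$ (the length of $\Cc^{\rm I}$), this sets $n := q$ and fixes the inner MSR code to be a Ye--Barg code with parameters $(n,n-r,n-1,r^{n})_{\B}$, so $\ell = r^{q}$. The constraint on $q$ is that the GV bound
\[
R_{\mathrm{GV}}(\delta,q) \;=\; 1 - H_q(\delta)
\]
is strictly positive at $\delta = 1 - \epsilon/(r-1)$, which requires $\delta < 1 - 1/q$, i.e., $q > (r-1)/\epsilon$. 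Any such $q$ works; fix one and let $R := R_{\mathrm{GV}}(\delta,q) > 0$. Standard GV-type existence arguments (random linear codes over $\F_q$ suffice) then yield, for infinitely many lengths $N$, a linear code $\Cc^{\rm II}$ over an alphabet of size $q \leq n$ with $M = q^{NR}$ codewords and relative distance $\geq \delta$.

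Plugging these into Construction~\ref{cons:generic} produces an $\epsilon$-MSR code $\Cc$ with parameters $\Nc = M = q^{NR}$, $\Kc = \Nc - r$, $\Tc = \Nc - 1$, and sub-packetization level $l = N\ell = N r^{q}$. The key calculation is then
\[
\Nc \;=\; q^{NR} \;=\; \exp\!\left(\frac{R \ln q}{r^{q}} \cdot l\right),
\]
so the conclusion holds with $s = s(r,\epsilon) := (R \ln q)/r^{q} > 0$, and $l$ ranges over the infinite set $\{N r^q : N \text{ a valid GV length}\}$. Finally, for the field size: the instantiation in Section~\ref{sec:composing} requires $|\B| \geq q^{NR} r n + 1 = \Nc \cdot r q + 1$, which is $O(\Nc)$ since $r$ and $q$ are constants depending only on $r$ and $\epsilon$.

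The steps are essentially bookkeeping once the parameters are chosen; the only substantive point is selecting $q$ so that the GV rate $R$ remains positive at the required distance $\delta = 1-\epsilon/(r-1)$, and then verifying that the alphabet size $q$ of $\Cc^{\rm II}$ is at most the length $n$ of $\Cc^{\rm I}$ (which we arrange by setting $n=q$). The mild tension here is that shrinking $\epsilon$ forces $q$, and hence the inner sub-packetization $r^q$, to grow, which in turn shrinks the exponent $s = (R \ln q)/r^q$ in the length bound; this is consistent with the theorem statement, which only asserts the existence of \emph{some} positive $s = s(r,\epsilon)$.
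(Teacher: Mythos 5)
Your proposal is correct and follows essentially the same route as the paper's proof: take $\Cc^{\rm I}$ to be the Ye--Barg MSR code with $n=q$ and $\ell = r^q$, take $\Cc^{\rm II}$ on the GV curve over an alphabet of size $q$ chosen so that $1-\epsilon/(r-1) < 1-1/q$, and compute $\Nc = q^{NR}$, $l = Nr^q$, $s = (R\ln q)/r^q$, $|\B| = O(\Nc)$. Your closing remark about the trade-off between $\epsilon$ and $s$ (via $q$ and $r^q$) is a nice clarifying observation not spelled out in the paper, but the substance of the argument is identical.
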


\begin{proof}
Recall that it follows from GV bound that for every alphabet of size $q$ and $\delta \in (0, 1/q)$, there exists a code over the alphabet with relative minimum distance at least $\delta$ and  rate 
\begin{align}
\label{eq:GV}
R \geq 1 - h_{q}(\delta) - o(1),
\end{align}
where $h_q(x) = x\log_q(q-1) -x\log_qx - (1-x)\log_q(1-x)$ denotes the $q$-ary entropy function.
For a constant $\epsilon > 0$, we choose $q$ large enough such that $\delta^{\ast} = 1 - \frac{\epsilon}{r-1} < 1 - 1/q$. Now we take $\Cc^{\rm II}$ to be an $N$-length code over an alphabet of size $q$ with code size $q^{N(1 - h_q(\delta^{\ast}) - o(1))}$ and relative minimum distance at least $\delta^{\ast} = 1 - \frac{\epsilon}{r-1}$ (cf.~\eqref{eq:GV}). We combine this with the an $(n = q, k = q-r, t = q-1, \ell = r^q)_{\B}$ MSR code from \cite{YeB16a} as described above. This gives us an $\epsilon$-MSR code with length $\Nc =q^{N(1 - h_q(\delta^{\ast}) - o(1))}$ and sub-packetization level $l = N\ell = Nr^{q}$. Therefore, we have
\begin{align}
\Nc = q^{({(1 - h_q(\delta^{\ast}) - o(1))}/{r^q})l}.
\end{align}
For constant $r$ and $q$, this can be expressed as $\Nc = \Omega(\exp(sl))$ for a suitable constant $s$. Note that for constant $r$ and $q$, the required filed size $q^{N(1 - h_q(\delta^{\ast}) - o(1))}qr + 1$ scales linearly with $\Nc$, the length of the code.
\end{proof}

\begin{remark}
\label{rem:GR17}
Note that the sub-packetization level of the $\epsilon$-MSR codes mentioned in Theorem~\ref{thm:param} satisfy $l = O\big( r^{\left(r/\epsilon\right)} \cdot\log\Nc\big)$. For the identical repair-bandwidth guarantees, the MDS codes corresponding to Theorem~\ref{thm:main_tau} (cf.~Section~\ref{sec:twiceRB} and \ref{sec:RB_t}) require a smaller sub-packetization level of $r^{1/\epsilon}$. However, as compared to Theorem~\ref{thm:main_tau}, the codes mentioned in Theorem~\ref{thm:param} ensure load balancing during the repair process and require field size which is only linear in the code length $\Nc$ when $r = \Theta(1)$. Recall that the codes constructed in Section~\ref{sec:twiceRB} and \ref{sec:RB_t} require field size which is exponential in the code length.
\end{remark}

\section{Necessary Sub-packetization for $\epsilon$-MSR Codes.}
\label{sec:converse}
In Section~\ref{sec:long_mds}, we establish that resorting to $\epsilon$-MSR codes for positive $\epsilon$ allows the number of nodes to scale exponentially with the sub-packetization level $\ell$. This is an encouraging result, since for MSR codes with constant $r = n - k$, it is known that $n$ has to scale polylogarithmically with $\ell$~\cite{TWB14,GTC14}. In this section we derive an upper bound on the number of nodes in a {\em linear} $\epsilon$-MSR code, equivalently a lower bound on $\ell$ as a function of $n$. The bound relies on an approach similar to the one employed in \cite{TWB14,GTC14} to bound the number of nodes in an MSR code. Each node is assigned a vector in some vector space. Then it is shown that the assigned vectors are linearly independent and hence the number of such vectors (and nodes) is at most the dimension of the vector space.   

\begin{theorem}
In an $(n,k, t = n - 1, \ell)_{\B}$ linear $\epsilon-$MSR code, the number of nodes $n$ is upper bounded by
$(r\ell)^{\frac{\ell}{r}(1+\epsilon)+1}$.  
\end{theorem}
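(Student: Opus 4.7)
The plan, following the strategy indicated in the preceding paragraph (and the prior works \cite{TWB14,GTC14}), is to assign each node $i\in[n]$ a vector $w_i$ in an ambient vector space of dimension at most $(r\ell)^{(1+\epsilon)\ell/r+1}$ and to prove that the vectors $\{w_i\}_{i\in[n]}$ are linearly independent; the bound then follows at once.

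First I would normalize the repair matrices. For each $i\in[n]$, the repair matrix $S_i\in\B^{\ell\times r\ell}$ satisfies $\text{rank}(S_iH_i)=\ell$, so left-multiplication by $(S_iH_i)^{-1}$ turns it into a matrix with $S_iH_i=I_\ell$; simultaneously, $\text{rank}(S_iH_j)\le\beta:=(1+\epsilon)\ell/r$ is preserved for $j\ne i$. Setting $U_i:=\text{rowspace}(S_i)$ and $K_j:=\{u\in\B^{r\ell}:uH_j=0\}$, the defining conditions of an $\epsilon$-MSR code translate to $U_i\cap K_i=\{0\}$ and $\dim(U_i\cap K_j)\ge\ell-\beta$ for $j\ne i$. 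In particular, each $U_i$ is ``almost contained'' in every $K_j$ with $j\ne i$ but avoids $K_i$ entirely, and these $n$ configurations are what must ultimately be separated.

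Next I would construct the assigned vector $w_i$ as a $(\beta+1)$-fold tensor $w_i=b_i^{(0)}\otimes b_i^{(1)}\otimes\cdots\otimes b_i^{(\beta)}\in(\B^{r\ell})^{\otimes(\beta+1)}$, where $b_i^{(0)}$ identifies node $i$ (for instance, a canonically chosen vector in $U_i\setminus K_i$) and $b_i^{(1)},\ldots,b_i^{(\beta)}$ record the $\beta$ directions of $U_i$ that are ``genuinely free'' with respect to the various $K_j$. The ambient space has dimension $(r\ell)^{\beta+1}\le(r\ell)^{(1+\epsilon)\ell/r+1}$, matching the target bound. The canonical choice would be obtained by a greedy pivoting procedure adapted to a fixed ordering on the rows of the global parity-check matrix $\Hm$, so that the tensor $w_i$ encodes just enough data about $S_i$ and $H_i$ to be unambiguously attributable to node $i$.

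Linear independence of the $w_i$'s would then be proved by exhibiting, for each target index $i^\ast$, a $(\beta+1)$-linear functional $\Phi_{i^\ast}$ that evaluates to a nonzero scalar on $w_{i^\ast}$ and to zero on $w_i$ for every $i\ne i^\ast$. Each factor of $\Phi_{i^\ast}$ is obtained from linear forms that annihilate the ``large'' intersections $U_i\cap K_{i^\ast}$ for $i\ne i^\ast$, whereas $U_{i^\ast}\cap K_{i^\ast}=\{0\}$ creates the required nonzero evaluation. The main obstacle is engineering the canonical tensor decomposition so that this separation actually goes through with only $\beta+1$ factors: naively one would need $\ell$ tensor factors to identify the $\ell$-dimensional subspace $U_i$, and the compression from $\ell$ down to $\beta+1$ is precisely what the $\epsilon$-MSR hypothesis buys through the bound $\dim(U_iH_j)\le\beta$. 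Making the greedy pivoting uniform across nodes and verifying that the tie-breaking never fails is the core combinatorial step I expect to be the main technical work.
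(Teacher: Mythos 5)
You have correctly located the right general strategy (the duality argument from \cite{TWB14,GTC14}: embed each node into a degree-$(\beta+1)$ polynomial/tensor space of dimension $(r\ell)^{\beta+1}$ and exhibit a biorthogonal dual family), and you also derive the right ambient dimension. The correct interpretation of the $\epsilon$-MSR hypothesis as $\rank(S_i\Hm_j)\le\beta:=(1+\epsilon)\ell/r$ for $j\ne i$ is also in place. However, the proposal stops exactly at the step that makes the argument go through, and the construction you sketch in its place is not the right one.

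The paper does not build a rank-one tensor $w_i=b_i^{(0)}\otimes\cdots\otimes b_i^{(\beta)}$ via a canonical/greedy choice of ``free directions,'' nor does it pair it against rank-one functionals $\Phi_{i^\ast}$ that are tensor products of linear forms. The construction you hint at is problematic: asking $\Phi_{i^\ast}$ to be a product $\phi_0\otimes\cdots\otimes\phi_\beta$ of linear forms would require, for every $i\ne i^\ast$, some coordinate $m$ with $\phi_m(b_i^{(m)})=0$ while keeping all $\phi_m(b_{i^\ast}^{(m)})\ne0$; the $\epsilon$-MSR property gives a rank bound on each $S_i\Hm_{i^\ast}$ (hence a linear \emph{dependence} among rows of $S_i\Hm_{i^\ast}$), not a hyperplane-covering structure of the kind this needs. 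Nothing you wrote turns the dependence into the covering.

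The missing idea is a determinant. Set $u=\frac{\ell}{r}(1+\epsilon)+1=\beta+1$. Since $S_i\Hm_i$ is an $\ell\times\ell$ invertible matrix, its first $u$ rows have rank $u$, so there is a column set $C_i\subseteq[\ell]$ of size $u$ with $\det\big((S_i\Hm_i)_{[u],C_i}\big)\ne0$. Define, for an $\ell\times r\ell$ matrix of indeterminates $X=(x_{m,j})$,
\begin{equation*}
f_i(X)=\det\big((X\Hm_i)_{[u],C_i}\big).
\end{equation*}
Then $f_i(S_i)\ne0$ by the choice of $C_i$, and for $j\ne i$ we have $f_i(S_j)=0$ automatically, because $\rank(S_j\Hm_i)\le\beta<u$ forces \emph{every} $u\times u$ minor of $S_j\Hm_i$ to vanish. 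This $\delta_{ij}$-type evaluation immediately gives linear independence of the $f_i$. Each $f_i$ is homogeneous of degree $u$, and by Leibniz expansion every monomial has the form $\prod_{m=1}^u x_{m,j_m}$, so the $f_i$ live in a space of dimension $(r\ell)^u$. Hence $n\le(r\ell)^u=(r\ell)^{\frac{\ell}{r}(1+\epsilon)+1}$. Note the determinant, regarded as a multilinear form in the first $u$ rows of $X$, is precisely the non-rank-one element your plan needs; the ``compression from $\ell$ to $\beta+1$ factors'' is delivered by the fact that a rank bound kills all $u\times u$ minors at once, not by any canonicalization or greedy pivoting over the rows of $S_i$.
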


\begin{proof}
The proof is similar to the proof of Theorem 4 in \cite{TWB14}. Let the parity-check matrix of the code  be the   $r\ell\times n\ell$  matrix    
\begin{equation}
\label{eq:stam2}
\Hm= \left({\begin{array}{cccc} 
	 \Hm_{1} & \cdots & \Hm_n \\     
\end{array}}\right),
\end{equation}
where each $\Hm_i$ is an $r\ell\times \ell$ matrix. Given that the underlying code is an $\epsilon$-MSR code, there exists an $\ell\times r\ell$ matrix $S_i$ which satisfies  
\begin{equation}
\label{eq:rank1}
\rk (S_i\Hm_i)= \ell, 
\end{equation}
and
\begin{equation}
\rk (S_i\Hm_j) \leq \frac{\ell}{r}(1+\epsilon)~\text{for}~i \neq j.
\label{eq:rank}
\end{equation} 
Since the $\ell\times \ell$ matrix $S_i\Hm_i$ is of full rank there exist a subset of columns $C_i$ of size $u=\frac{\ell}{r}(1+\epsilon)+1$, such that the restriction of $S_i\Hm_i$ to the columns in $C_i$ and the first $u$ rows, is a matrix of  full rank, i.e., $$\rank (S_i\Hm_i)_{[u],C_i}=u.$$
Define for node $i$ the polynomial  $f_i:\mathbb{F}^{\ell\times r\ell} \rightarrow \mathbb{F}$ as 
$$f_i(X)=\det(X\Hm_i)_{[u],C_i},$$
where $X=(x_{i,j})$ is an $\ell\times r\ell$ matrix in the variables $x_{i,j}$. It follows from \eqref{eq:rank1} and \eqref{eq:rank} that
$$f_i(S_j)=
\begin{cases}
\neq 0 & i=j\\
0 &   i \neq j.
\end{cases}
$$ 
This implies that the $n$ polynomials $f_i$ are linearly independent. If we assume the contrary, then there exists scalars $\alpha_i$  not all zeros such that $\sum_{j}\alpha_jf_j=0$. However by plugging $S_i$ on both sides of the equation we get that 
$$0=\sum_{j}\alpha_jf_j(S_j)=\alpha_if_i(S_i).$$ Since $f_i(S_i)\neq 0$ we get that $\alpha_i=0$. By repeating this argument for all $i$'s, we get to a contradiction.

Each polynomial $f_i$ is a homogenous polynomial of degree $u$, spanned by the monomials of the form $\prod_{m=1}^ux_{m,j_m}$.  Clearly, there are $(r\ell)^u$ such monomials, and therefore the polynomials $f_i$ form a set of linearly independent vectors in a vector space of dimension $(r\ell)^u$. The result follows since the size of such set can not exceed the dimension.
\end{proof}

\section{Conclusion and Future Directions.}
\label{sec:conclusion}

We explore the constructions of MDS codes that allow for exact repair of a single code block by downloading near-optimal amount of information from the remaining code blocks. These codes are well suited for distributed storage systems as they have the desirable property of working with small sub-packetization levels. In particular, we present two approaches to construct such codes. First, we present a construction which designs parity-check matrices for exact-repairable MDS codes with small sub-packetization level by utilizing the parity-check matrices of any scalar MDS codes. This approach allow us to barter the repair bandwidth (the cost of repairing a single code block) with the underlying sub-packetization by choosing a design parameter. The codes obtained from this construction also  enable repair-by-transfer mechanisms, where repair of a code block (node) requires  minimal computation at the contacted code blocks (nodes). 

Furthermore, we propose a general approach to transform a short MSR code, i.e.,  an MDS code with optimal repair bandwidth, which has a large sub-packetization level to a long exact-repairable MDS code that has small sub-packetization level (as a function of the code length). The codes obtained in this manner also ensure load balancing among the contacted code blocks during the repair process as all the contacted code blocks contribute similar amount of information. Recognizing that the obtained codes broadly share the load balancing property with the MSR codes, we term these codes as $\epsilon$-MSR codes. In addition, we also note that if the utilized short MSR code has a repair-by-transfer mechanism, then the obtained long $\epsilon$-MSR code would also provide a repair-by-transfer mechanism.

There are multiple interesting directions to extend this work. We conclude by pointing out a few of those directions. In this paper we focus on the setting with $t = n - 1$ which requires contacting all the intact code blocks for the repair of a failed code block. Extending our results for general value of $k < t < n - 1$ is an important direction to explore. For the first construction presented in this paper (cf. Section~\ref{sec:twiceRB} and \ref{sec:RB_t}), one relatively straightforward approach is to employ the ideas used in \cite{RKV16a} to extend the construction from \cite{SAK15} to general values of $t$. For an integer $\tau \geq 1$, this would give exact-repairable codes with sub-packetization level $(t - k + 1)^{\tau}$ and small repair bandwidth. Moreover, the obtained codes would also have repair-by-transfer schemes. We believe that our second construction (cf.~Section~\ref{sec:gen_approach}) can also be modified for the setting with $t < n - 1$. This would require utilizing a short MSR code that can support repair of each of its code blocks without contacting all the remaining code blocks.

The problem of designing MDS codes that allow for simultaneous repair of multiple code blocks has been addressed in several works, including \cite{ShumHu, Kermarrec:Repairing11, RKV16b, YeB16a, DDKM16, BW17}. Designing codes that provide mechanisms to perform simultaneous repair of multiple code blocks, and as well as a good trade-off between the sub-packetization level and repair bandwidth is another interesting direction to pursue. 

We also present a lower bound on the sub-packetization level that is necessary for an $\epsilon$-MSR code. However, there is a gap between this bound and the sub-packetization levels achieved by our constructions. Finally, we note that the exact-repairability and the corresponding repair bandwidth of the codes obtained by our first construction (cf.~Section~\ref{sec:twiceRB} and \ref{sec:RB_t}) only depend on the combinatorial structure, i.e., the locations of non-zero entries, of the designed parity-check matrix. However, the argument which establishes the MDS property for these codes requires the size of the base field $\B$ to be quite large. The similar issue also arises in many previous works, e.g.,~\cite{zigzag13, SAK15}. Recently, Ye and Barg have addressed this issue for the codes that operated exactly at the cut-set bound in \cite{YeB16a, YeB16b}. However, they again work with large sub-packetization level $n^{\ceilbb{\frac{n}{n-k}}}$. The reduction of the base field size for our first construction is an interesting question, which has both theoretical and practical significance.


\bibliographystyle{plain}
\bibliography{RepairBW_MDS}

\begin{thebibliography}{10}

\bibitem{BW17}
B.~Bartan and M.~Wootters.
\newblock Repairing multiple failures for scalar {MDS} codes.
\newblock {\em CoRR}, abs/1707.02241, 2017.

\bibitem{BlaumHH13}
M.~Blaum, J.~L. Hafner, and S.~Hetzler.
\newblock Partial-{MDS} codes and their application to {RAID} type of
  architectures.
\newblock {\em IEEE Transactions on Information Theory}, 59(7):4510--4519, July
  2013.

\bibitem{Cadambe_poly}
V.~R. Cadambe, C.~Huang, J.~Li, and S.~Mehrotra.
\newblock Polynomial length {MDS} codes with optimal repair in distributed
  storage.
\newblock In {\em Proc. of Forty Fifth Asilomar Conference on Signals, Systems
  and Computers (ASILOMAR)}, pages 1850--1854, Nov 2011.

\bibitem{CJMRS13}
V.~R. Cadambe, S.~A. Jafar, H.~Maleki, K.~Ramchandran, and C.~Suh.
\newblock Asymptotic interference alignment for optimal repair of {MDS} codes
  in distributed storage.
\newblock {\em IEEE Transactions on Information Theory}, 59(5):2974--2987, May
  2013.

\bibitem{DDKM16}
S.~H. Dau, I.~M. Duursma, H.~M. Kiah, and O.~Milenkovic.
\newblock Repairing reed-solomon codes with multiple erasures.
\newblock {\em CoRR}, abs/1612.01361, 2016.

\bibitem{DM17}
S.~H. Dau and O.~Milenkovic.
\newblock Optimal repair schemes for some families of full-length reed-solomon
  codes.
\newblock {\em CoRR}, abs/1701.04120, 2017.

\bibitem{dimakis}
A.~G. Dimakis, P.~Godfrey, Y.~Wu, M.~Wainwright, and K.~Ramchandran.
\newblock Network coding for distributed storage systems.
\newblock {\em IEEE Transactions on Information Theory}, 56(9):4539--4551, Sept
  2010.

\bibitem{DRWS2011}
A.~G. Dimakis, K.~Ramchandran, Y.~Wu, and C.~Suh.
\newblock A survey on network codes for distributed storage.
\newblock {\em Proc. of the IEEE}, 99(3):476--489, March 2011.

\bibitem{GoparajuFV16}
A.~Fazeli, S.~Goparaju, and A.~Vardy.
\newblock Minimum storage regenerating codes for all parameters.
\newblock In {\em Proc. of 2016 IEEE International Symposium on Information
  Theory (ISIT)}, pages 76--80, July 2016.

\bibitem{Gopalan14}
P.~Gopalan, C.~Huang, B.~Jenkins, and S.~Yekhanin.
\newblock Explicit maximally recoverable codes with locality.
\newblock {\em IEEE Transactions on Information Theory}, 60(9):5245--5256, Sept
  2014.

\bibitem{Gopalan12}
P.~Gopalan, C.~Huang, H.~Simitci, and S.~Yekhanin.
\newblock On the locality of codeword symbols.
\newblock {\em IEEE Transactions on Information Theory}, 58(11):6925--6934, Nov
  2012.

\bibitem{GTC14}
S.~Goparaju, I.~Tamo, and R.~Calderbank.
\newblock An improved sub-packetization bound for minimum storage regenerating
  codes.
\newblock {\em IEEE Transactions on Information Theory}, 60(5):2770--2779, May
  2014.

\bibitem{GR17}
V.~Guruswami and A.~S. Rawat.
\newblock {MDS} code constructions with small sub-packetization and
  near-optimal repair bandwidth.
\newblock In {\em Proc. of the Twenty-Eighth Annual ACM-SIAM Symposium on
  Discrete Algorithms (SODA)}, pages 2109--2122, Philadelphia, PA, USA, 2017.
  Society for Industrial and Applied Mathematics.

\bibitem{GW15}
V.~Guruswami and M.~Wootters.
\newblock Repairing {Reed-solomon} codes.
\newblock In {\em Proc. of the Forty-eighth Annual ACM Symposium on Theory of
  Computing (STOC)}, pages 216--226, New York, NY, USA, 2016. ACM.

\bibitem{HLKB15}
W.~Huang, M.~Langberg, J.~Kliewer, and J.~Bruck.
\newblock Communication efficient secret sharing.
\newblock {\em IEEE Transactions on Information Theory}, 62(12):7195--7206, Dec
  2016.

\bibitem{KPLK12}
G.~M. Kamath, N.~Prakash, V.~Lalitha, and P.~V. Kumar.
\newblock Codes with local regeneration and erasure correction.
\newblock {\em IEEE Transactions on Information Theory}, 60(8):4637--4660, Aug
  2014.

\bibitem{Kermarrec:Repairing11}
A.-M. Kermarrec, N.~Le~Scouarnec, and G.~Straub.
\newblock Repairing multiple failures with coordinated and adaptive
  regenerating codes.
\newblock In {\em Proc. of 2011 International Symposium on Network Coding
  (NetCod)}, pages 1--6, July 2011.

\bibitem{Khan12}
O.~Khan, R.~Burns, J.~Plank, W.~Pierce, and C.~Huang.
\newblock Rethinking erasure codes for cloud file systems: Minimizing {I/O} for
  recovery and degraded reads.
\newblock In {\em Proc. of 10th USENIX Conference on File and Storage
  Technologies (FAST)}, Berkeley, CA, USA, Feb 2012. USENIX Association.

\bibitem{KGJO16}
K.~Kralevska, D.~Gligoroski, R.~E. Jensen, and H.~{\O}verby.
\newblock Hashtag erasure codes: {F}rom theory to practice.
\newblock {\em CoRR}, abs/1609.02450, 2016.

\bibitem{KGO16}
K.~Kralevska, D.~Gligoroski, and H.~{\O}verby.
\newblock General sub-packetized access-optimal regenerating codes.
\newblock {\em IEEE Communications Letters}, 20(7):1281--1284, July 2016.

\bibitem{MacSlo}
F.~J. MacWilliams and N.~J.~A. Sloane.
\newblock {\em {T}he Theory of Error-Correcting Codes}.
\newblock Amsterdam: North-Holland, 1983.

\bibitem{PapDim12}
D.~S. Papailiopoulos and A.~G. Dimakis.
\newblock Locally repairable codes.
\newblock {\em IEEE Transactions on Information Theory}, 60(10):5843--5855, Oct
  2014.

\bibitem{PapDimCad13}
D.~S. Papailiopoulos, A.~G. Dimakis, and V.~Cadambe.
\newblock Repair optimal erasure codes through hadamard designs.
\newblock {\em IEEE Transactions on Information Theory}, 59(5):3021--3037, May
  2013.

\bibitem{RSK11}
K.~Rashmi, N.~Shah, and P.~Kumar.
\newblock Optimal exact-regenerating codes for distributed storage at the {MSR}
  and {MBR} points via a product-matrix construction.
\newblock {\em IEEE Transactions on Information Theory}, 57:5227--5239, Aug
  2011.

\bibitem{RSR13}
K.~V. Rashmi, N.~B. Shah, and K.~Ramchandran.
\newblock A piggybacking design framework for read-and download-efficient
  distributed storage codes.
\newblock In {\em Proc. of 2013 IEEE International Symposium on Information
  Theory (ISIT)}, pages 331--335, July 2013.

\bibitem{RKSV12}
A.~S. Rawat, O.~O. Koyluoglu, N.~Silberstein, and S.~Vishwanath.
\newblock Optimal locally repairable and secure codes for distributed storage
  systems.
\newblock {\em IEEE Transactions on Information Theory}, 60(1):212--236, Jan
  2014.

\bibitem{RKV16b}
A.~S. Rawat, O.~O. Koyluoglu, and S.~Vishwanath.
\newblock Centralized repair of multiple node failures with applications to
  communication efficient secret sharing.
\newblock {\em CoRR}, abs/1603.04822, 2016.

\bibitem{RKV16a}
A.~S. Rawat, O.~O. Koyluoglu, and S.~Vishwanath.
\newblock Progress on high-rate {MSR} codes: Enabling arbitrary number of
  helper nodes.
\newblock {\em CoRR}, abs/1601.06362, 2016.

\bibitem{RTGE17}
A.~S. Rawat, I.~Tamo, V.~Guruswami, and K.~Efremenko.
\newblock $\epsilon$-{MSR} codes with small sub-packetization.
\newblock {\em {\em to appear in} 2017 IEEE International Symposium on
  Information Theory (ISIT)}, 2017.

\bibitem{SAK15}
B.~Sasidharan, G.~K. Agarwal, and P.~V. Kumar.
\newblock A high-rate {MSR} code with polynomial sub-packetization level.
\newblock In {\em Proc. of 2015 IEEE International Symposium on Information
  Theory (ISIT)}, pages 2051--2055, June 2015.

\bibitem{SVK16}
B.~Sasidharan, M.~Vajha, and P.~V. Kumar.
\newblock An explicit, coupled-layer construction of a high-rate {MSR} code
  with low sub-packetization level, small field size and all-node repair.
\newblock {\em CoRR}, abs/1607.07335, 2016.

\bibitem{SPDG14}
K.~Shanmugam, D.~S. Papailiopoulos, A.~G. Dimakis, and G.~Caire.
\newblock A repair framework for scalar {MDS} codes.
\newblock {\em IEEE Journal on Selected Areas in Communications},
  32(5):998--1007, May 2014.

\bibitem{ShumHu}
K.~W. Shum and Y.~Hu.
\newblock Cooperative regenerating codes.
\newblock {\em IEEE Transactions on Information Theory}, 59(11):7229--7258, Nov
  2013.

\bibitem{TamoBarg14}
I.~Tamo and A.~Barg.
\newblock A family of optimal locally recoverable codes.
\newblock {\em IEEE Transactions on Information Theory}, 60(8):4661--4676, Aug
  2014.

\bibitem{TK16}
I.~Tamo and K.~Efremenko.
\newblock New results on {MSR} codes.
\newblock In {\em Information Theory and Applications Workshop (ITA), 2016},
  Feb 2016.

\bibitem{zigzag13}
I.~Tamo, Z.~Wang, and J.~Bruck.
\newblock Zigzag codes: {MDS} array codes with optimal rebuilding.
\newblock {\em IEEE Transactions on Information Theory}, 59(3):1597--1616,
  March 2013.

\bibitem{TWB14}
I.~Tamo, Z.~Wang, and J.~Bruck.
\newblock Access versus bandwidth in codes for storage.
\newblock {\em IEEE Transactions on Information Theory}, 60(4):2028--2037,
  April 2014.

\bibitem{TYB17}
I.~Tamo, M.~Ye, and A.~Barg.
\newblock Optimal repair of {R}eed-{S}olomon codes: {A}chieving the cut-set
  bound.
\newblock {\em CoRR}, abs/1706.00112, 2017.

\bibitem{WDB10}
Z.~Wang, A.~G. Dimakis, and J.~Bruck.
\newblock Rebuilding for array codes in distributed storage systems.
\newblock In {\em Proc. of 2010 IEEE Globecom Workshops}, pages 1905--1909, Dec
  2010.

\bibitem{zigzag_allerton11}
Z.~Wang, I.~Tamo, and J.~Bruck.
\newblock On codes for optimal rebuilding access.
\newblock In {\em Proc. of the 49th Annual Allerton Conference on
  Communication, Control, and Computing (Allerton)}, pages 1374--1381, Sept
  2011.

\bibitem{WTB12}
Z.~Wang, I.~Tamo, and J.~Bruck.
\newblock Long {MDS} codes for optimal repair bandwidth.
\newblock In {\em Proc. of 2012 IEEE International Symposium on Information
  Theory (ISIT)}, pages 1182--1186, July 2012.

\bibitem{WTB16}
Z.~Wang, I.~Tamo, and J.~Bruck.
\newblock Explicit minimum storage regenerating codes.
\newblock {\em IEEE Transactions on Information Theory}, 62(8):4466--4480, Aug
  2016.

\bibitem{XCode14}
S.~Xu, R.~Li, P.~P.~C. Lee, Y.~Zhu, L.~Xiang, Y.~Xu, and J.~C.~S. Lui.
\newblock Single disk failure recovery for {X}-code-based parallel storage
  systems.
\newblock {\em IEEE Transactions on Computers}, 63(4):995--1007, April 2014.

\bibitem{YeBargRS}
M.~Ye and A.~Barg.
\newblock Explicit constructions of {MDS} array codes and {RS} codes with
  optimal repair bandwidth.
\newblock In {\em Proc. of 2016 IEEE International Symposium on Information
  Theory (ISIT)}, pages 1202--1206, July 2016.

\bibitem{YeB16b}
M.~Ye and A.~Barg.
\newblock Explicit constructions of optimal-access {MDS} codes with nearly
  optimal sub-packetization.
\newblock {\em CoRR}, abs/1605.08630, 2016.

\bibitem{YeB16a}
M.~Ye and A.~Barg.
\newblock Explicit constructions of high-rate {MDS} array codes with optimal
  repair bandwidth.
\newblock {\em IEEE Transactions on Information Theory}, 63(4):2001--2014,
  April 2017.

\end{thebibliography}


\appendices

\section{Necessary sub-packetization level for MDS codes}
\label{appen:appen_sub}

Let $\Cc \subseteq \F^{n}$ be an MDS code with the sub-packetization level $\ell$ where all contacted nodes contribute to the repair process. For a constant $b \geq 1$, let the repair bandwidth of $\Cc$ for exact repair of a single code block is less than $b$ times the cut-set bound, i.e., number of symbols (over $\B$) downloaded from the contacted $t = n-1$ nodes is at most
\begin{align}
 b\left(\frac{n - 1}{n - k}\right)\cdot \ell. \nonumber
\end{align}
This implies that there exists at least one contacted node which contributes at most $\floorbb{\frac{b\ell}{n - k} }$ symbols (over $\B$) during the repair process. Moreover, each of the contacted $t = n - 1$ nodes sends at least $1$ symbol of $\B$ during the repair process. Hence, we have that
\begin{align}
\floorb{\frac{b\ell}{n - k}}\geq 1. \nonumber
\end{align}
This gives us that 
\begin{align}
\ell \geq \frac{n - k}{b}~~\text{or}~~\ell = \Omega(n-k). \nonumber
\end{align}

\section{Proof of Proposition~\ref{prop:perturbedMDS}.}
\label{appen:perturbedMDS}

\begin{proof}
As illustrated in Example~\ref{ex:ex1}, Construction~\ref{subsec:construction2} obtains the parity-check matrix $\Pm$ by perturbing a parity-check matrix of an MDS code. In particular, we have 
\begin{align}
\label{eq:Pm_mds}
\Pm = \Hm + \Em^{\psi},
\end{align}
where $\Hm$ is a parity-check matrix of an $[n, k\ell = k(n-k), d_{\min} = n - k + 1, \ell = n-k]_{\B}$ MDS code. Furthermore, we select the element $\psi$ such that it generates the field $\B = \Lf(\psi)$ with the extension degree $[\B:\Lf] \geq (r-1)\ell + 1$. Let $m_{\psi}(X) \in \Lf[X]$ be the minimal polynomial $\psi$. It follows from the particular choice of $\psi$ that the degree of $m_{\psi}(X)$ is at least $(r-1)\ell + 1$.

Now, we argue that for such a choice of $\psi$ and the associated field $\B$, the parity-check matrix $\Pm$ (cf.~\eqref{eq:Pm_mds}) defines an $[n, k\ell, d_{\min} = n - k + 1, \ell]_{\B}$ MDS code, i.e., for each $\Sc \subseteq [n]~\text{such that}~|\Sc| = r = n-k$ we have 
\begin{align}
\label{eq:Pm_Sn}
\det(\Pm_{\Sc}) = \det(\Hm_{\Sc} + \Em^{\psi}_{\Sc}) \neq 0.
\end{align}

Recall that $\Hm$ is an $r\ell \times n\ell$ parity-check matrix of an $[n, k\ell, d_{\min} = n - k + 1, \ell]_{\B}$ MDS code. Therefore, we have 
\begin{align}
\label{eq:HmSn}
\det\big(\Hm_{\Sc}\big) \neq 0~~~\forall~\Sc \subseteq [n]~\text{such that}~|\Sc| = r.
\end{align}

Consider a set $\Sc \subseteq [n]$ such that $|\Sc| = r = n - k$ and the associated sub-matrix $\Hm_{\Sc} + \Em^{X}_{\Sc}$ (cf.~\eqref{eq:Pm_mds}), where $X$ is an indeterminate. Note that the determinant of this $r\ell \times r\ell$ matrix can be expressed as
\begin{align}
\label{eq:f_defn}
f_{\Sc}(X) = \det\big(\Hm_{\Sc} + \Em^{X}_{\Sc} \big),
\end{align}
where $f_{\Sc}(X) \in \Lf[X]$ is a polynomial of degree at most $(r-1)\ell$ and its coefficients are defined by the elements $\{\lambda_{i}\}_{i \in \Sc} \subseteq \Lf$. We now argue that the polynomial $f_{\Sc}(X)$ is a non-trivial (not an identically zero) polynomial. Towards this, we consider the value of the polynomial $f_{\Sc}(X)$ at $X = 0$.
\begin{align}
f_{\Sc}(0) &= \det\big(\Hm_{\Sc} + \Em^{0}_{\Sc}\big) \overset{(i)}{=} \det\big(\Hm_{\Sc}\big) \overset{(ii)}{\neq} 0. \nonumber
\end{align}
Here, $(i)$ holds as $\Em^{0}$ reduces to a zero matrix, and $(ii)$ follows from \eqref{eq:HmSn}. Since $f_{\Sc}(X)$ evaluates to a non-zero value at $X = 0$, it's a non-trivial polynomial. We now substitute $X = \psi$, which gives us the following (cf.~\ref{eq:f_defn}).
\begin{align}
f_{\Sc}(\psi) &= \det\big(\Hm_{\Sc} + \Em^{\psi}_{\Sc} \big) \overset{(i)}{=} \det\big(\Pm_{\Sc}\big) \overset{(ii)}{\neq} 0. \nonumber
\end{align}
Here, $(i)$ follows from the definition of $\Pm$ (cf.~\eqref{eq:Pm_mds} and \eqref{eq:Pm_Sn}). The step $(ii)$ follows as we have that the degree of $f_{\Sc}(X) \in \Lf[X]$ is strictly less than the degree of $m_{\psi}(X) \in \Lf[X]$, the minimal polynomial of $\psi$. Since the choice of $\Sc$ is arbitrary over all the subsets of $[n]$ of size $r = n-k$. We have that 
\begin{align}
\nonumber
f_{\Sc}(\psi) = \det\big(\Pm_{\Sc}\big) \neq 0~\forall\Sc \subseteq [n]~\text{such that}~|\Sc| = r.
\end{align}
This completes the proof. 
\end{proof}

\section{Ye and Barg Construction~\cite{YeB16a}}
\label{appen:YeBarg}

Let $\B$ be a field with $|\B| \geq rn$ and $\E = \big\{\lambda_{i, j}\big\}_{i \in [n], j \in [0:r - 1]}$ be a set of $rn$ distinct elements of the field $\B$. For $i \in [n]$, consider a matrix $A_i$ which is defined as follows. 
\begin{align}
\label{eq:Am}
A_{i}  = \sum_{a = 0}^{\ell - 1}\lambda_{i, a_i}\ev_{a}\ev^T_{a} =  \sum_{a = 0}^{r^{n} - 1}\lambda_{i, a_i}\ev_{a}\ev^T_{a} \in \B^{\ell \times \ell},
\end{align}
where $\{\ev_{a}\}_{a \in [0:r^{n}-1]}$ denotes the collection of $\ell = (n-k)^n =  r^{n}$ standard basis vectors of $\B^{\ell}$, i.e., all but the $a$-th coordinate of the vector $\ev_{a}$ are equal to $0$ and the $a$-th coordinate has its entry equal to $1$.  Given the $n$ matrices $\{A_1, A_2,\ldots, A_n\}$, let $\Cc(\E) \subseteq \B^{n\ell}$ denote the MSR code defined by the following $r\ell \times n\ell$ parity-check matrix with a Vandermonde type structure.
\begin{align}
\label{eq:Hm}
\Hm = \left( \begin{array}{cccc}
\Id & \Id & \cdots & \Id\\
A_{1} & A_{2} & \cdots & A_{n} \\
A^2_{2} & A^2_{2} & \cdots & A^2_{n} \\
\vdots & \vdots & \ddots & \vdots \\
A^{r-1}_{1} & A^{r-1}_{2} & \cdots & A^{r-1}_{n} \\
\end{array} \right) \in \B^{r\ell \times n\ell},
\end{align}
where $\Id$ denotes the $\ell \times \ell$ identity matrix. 

\subsection{Single node repair in Ye and Barg construction}
\label{appen:repairYeB}

Let the $i$-th code block $\cv_i = \big(c_{i, 0}, c_{i, 1},\ldots, c_{i, \ell-1}\big)$ be the code block being repaired. Note that all the block in the parity-check matrix $\Hm$ (cf.~\eqref{eq:Hm}) are diagonal matrices (cf.~\eqref{eq:Am}). Therefore, the parity-check constraints defining the code $\Cc(\E)$ can be rewritten as follows. 
\begin{align}
\label{eq:code_def1}
\sum_{i = 1}^n\lambda^w_{i, a_i}c_{i, a} = 0~~~~~\forall~~w = [0 : n -k - 1]~\text{and}~a = [0: \ell - 1].
\end{align} 
The repair mechanism recovers the $r = n-k$ symbols $$\big\{c_{i, a(i, 0)}, c_{i, a(i, 1)},\ldots, c_{i, a(i, r -1)}\big\}$$ with the help of the following set of symbols downloaded from the remaining $t = n - 1$ code blocks. 
\begin{align}
\mu^{(a)}_{j, i} := \sum_{u = 0}^{r-1}c_{j, a(i, u)},~~~~j \in [n]\backslash \{i\}.
\end{align}
In particular, for $a \in \{0, 1,\ldots, \ell - 1\}$ and $u \in \{0, 1,\ldots, r - 1\}$, it follows from \eqref{eq:code_def1} that
\begin{align}
\label{eq:code_def2}
\lambda^w_{i, u}c_{i, a(i, u)} + \sum_{j \neq i}\lambda^w_{j, a_j}c_{j, a(i, u)} = 0.
\end{align}
Summing \eqref{eq:code_def2} over $u = 0, 1,\ldots, r-1$, we obtain the following system of equations.
\begin{align}
\left( \begin{array}{cccc}
1 & 1 & \cdots & 1 \\
\lambda_{i, 0} & \lambda_{i,1} & \cdots & \lambda_{i, r-1}\\
\lambda^2_{i, 0} & \lambda^2_{i, 1} & \cdots & \lambda^2_{i, r - 1} \\
\vdots & \vdots & \ddots & \vdots \\
\lambda^{r-1}_{i, 0} & \lambda^{r-1}_{i, 1} & \cdots & \lambda^{r-1}_{i, r - 1} \\
\end{array} \right) 
\left( \begin{array}{c} 
c_{i, a(i, 0)} \\
c_{i, a(i, 1)} \\
c_{i, a(i, 2)} \\
\vdots\\
c_{i, a(i, r-1)}\\
\end{array} \right) = - \left( \begin{array}{c}
\sum_{u = 0}^{r-1}\sum_{j \neq i} c_{j, a(i,u)} \\
\sum_{u = 0}^{r-1}\sum_{j \neq i} \lambda_{j, a_j} c_{j, a(i,u)}\\
\sum_{u = 0}^{r-1}\sum_{j \neq i} \lambda^2_{j, a_j} c_{j, a(i,u)} \\
\vdots \\
\sum_{u = 0}^{r-1}\sum_{j \neq i} \lambda^{r-1}_{j, a_j} c_{j, a(i,u)} \\
\end{array} \right), \nonumber 
\end{align}
or
\begin{align}
\left( \begin{array}{cccc}
1 & 1 & \cdots & 1 \\
\lambda_{i, 0} & \lambda_{i,1} & \cdots & \lambda_{i, r-1}\\
\lambda^2_{i, 0} & \lambda^2_{i, 1} & \cdots & \lambda^2_{i, r - 1} \\
\vdots & \vdots & \ddots & \vdots \\
\lambda^{r-1}_{i, 0} & \lambda^{r-1}_{i, 1} & \cdots & \lambda^{r-1}_{i, r - 1} \\
\end{array} \right) 
\left( \begin{array}{c} 
c_{i, a(i, 0)} \\
c_{i, a(i, 1)} \\
c_{i, a(i, 2)} \\
\vdots\\
c_{i, a(i, r-1)}\\
\end{array} \right) = - \left( \begin{array}{c}
\sum_{j \neq i} \mu^{(a)}_{j,i} \\
\sum_{j \neq i} \lambda_{j, a_j} \mu^{(a)}_{j,i}\\
\sum_{j \neq i} \lambda^2_{j, a_j}\mu^{(a)}_{j,i} \\
\vdots \\
\sum_{j \neq i} \lambda^{r-1}_{j, a_j}\mu^{(a)}_{j,i} \\
\end{array} \right).
\end{align}
Since $\{\lambda_{i, 0}, \lambda_{i,1},\ldots, \lambda_{i, r-1} \}$ are all distinct elements, this system of equations can be solved for the desired code symbols
$
\big\{c_{i, a(i, 0)}, c_{i, a(i, 1)},\ldots, c_{i, a(i, r -1)}\big\}.
$

\subsection{Repair matrices corresponding to the repair process}
\label{sec:repMat_YeB}
This linear repair scheme for $\Cc(\E)$ as described in Appendix~\ref{appen:YeBarg} can be expressed in the form repair matrices (cf.~Section~\ref{sec:linearRepair}). For $i \in [n]$, the $\ell \times r\ell$ repair matrix enabling repair of the $i$-th code block takes the following special block diagonal form with identical $\frac{\ell}{r}\times \ell$ sized diagonal blocks.
\begin{align}
\label{eq:SpecialS}
S_i = \Id \otimes D_i, 
\end{align}
where $\Id$ denotes the $r \times r$ identity matrix. The rows and columns of the $\frac{\ell}{r} \times \ell$ matrix $D_i$ are indexed by the sets $[0:r^{n - 1} - 1]$ and $[0:r^n - 1]$, respectively. 
For $b \in [0:r^{n - 1} - 1]$ and $a \in [0:r^n - 1]$, let $(b_{n-1}, b_{n-2},\ldots, b_1)\in [0:r-1]^{n-1}$ and $(a_{n}, a_{n - 1},\ldots,a_1) \in [0:r - 1]^{n}$ denote their $r$-ary vector representations, respectively. With this notation in place, the $(b,a)$-th entry of $D_i$ is
\begin{align}
D_i(b, a) = \begin{cases} 1 & \text{if}~~~~a_{\backslash\{i\}} = b, \\
0 & \text{otherwise}.
\end{cases}
\end{align}
Here, $a_{\backslash\{i\}} = b$, if we have
$$
(b_{n-1}, b_{n-2},\ldots, b_1) = (a_{n}, a_{n-1},\ldots, a_{i+1}, a_{i-1},\ldots, a_1).
$$
Note that each of the $\frac{\ell}{r} = r^{n-1}$ rows of the matrix $D_i$ has exactly $r  = n-k$ non-zero entries. For $b \in [0:r^{n - 1} - 1]$, $a \in [0:r^n - 1]$ and $w \in [r-1]$, we have that
\begin{align}
D_iA^w_{i}(b, a) = \begin{cases}
\lambda^w_{i, a_i} & \text{if}~~~~a_{\backslash\{i\}} = b, \\
0 & \text{otherwise}.
\end{cases}
\end{align}
and
\begin{align}
D_iA^w_{j}(b, a) = \begin{cases}
\lambda^w_{i, a_j} & \text{if}~~~~a_{\backslash\{i\}} = b, \\
0 & \text{otherwise}.
\end{cases}
\end{align}
Now, for any distinct $w_1, w_2 \in [0:r-1]$, it is straightforward to verify the following.
\begin{align}
\label{eq:IA_r}
D_iA^{w_1}_{j} \cap D_iA^{w_2}_{j} = \begin{cases}
\{0\} & \text{if}~~i = j,\\
D_i & \text{otherwise},
\end{cases}
\end{align} 
where by abuse of notation we use the matrices to denote the subspaces spanned by their rows. For the underlying parity-check matrix $\Hm$ (cf.~\eqref{eq:Hm}) and repair matrix $S_i$ (cf.~\eqref{eq:SpecialS}), two kind of matrices involved in the linear repair scheme takes the following form (cf.~\eqref{eq:Repair_cond1} \& \eqref{eq:Repair_cond2}).
\begin{align}
\label{eq:Repair_cond1S}
\rank\left(S_i \left[\begin{array}{c}
H_{1, i} \\
\vdots \\
H_{r, i}
\end{array}
\right] \right) =  \rank\left(\left[\begin{array}{c}
D_i \\
D_iA_{i} \\
\vdots \\
D_iA^{r-1}_{i}
\end{array}
\right] \right) \overset{(i)}{=} \ell,
\end{align}
and
\begin{align}
\label{eq:Repair_cond2S}
\sum_{j \in [n]\backslash\{i\}}\rank\left(S_i\left[\begin{array}{c}
H_{1, j} \\
\vdots \\
H_{r, j}
\end{array}
\right] \right)  = \sum_{j \in [n]\backslash\{i\}}\rank\left(\left[\begin{array}{c}
D_i \\
D_iA_{j} \\
\vdots \\
D_iA^{r-1}_{j}
\end{array}
\right] \right) \overset{(ii)}{=} (n-1)\rank(D_i) \leq (n - 1)\frac{\ell}{r},
\end{align}
where $(i)$ and $(ii)$ follow from \eqref{eq:IA_r}.


\end{document}